\title[Signatures Meet Dynamic Programming]{Signatures Meet Dynamic Programming: \\Generalizing Bellman Equations for Trajectory Following}
\DeclareMathOperator*{\argmin}{arg\,min}
\newcommand{\optname}{Chen equation}
\newcommand{\ctrlname}{signature control}
\author{
	\Name{Motoya Ohnishi} \thanks{The work of MO was partially conducted during his internship at NVIDIA.  \textcolor{red}{Refer to project page: https://sites.google.com/view/sigctrl/ for supplementary information.}}
	\addr University of Washington
	\Email{mohnishi@cs.washington.edu}
	\AND
	\Name{Iretiayo Akinola}
	\addr NVIDIA
	\AND
	\Name{Jie Xu}
	\addr NVIDIA
	\AND
	\Name{Ajay Mandlekar}
	\addr NVIDIA
	\AND
	\Name{Fabio Ramos}
	\addr University of Sydney, NVIDIA
}
\begin{document}
\graphicspath{{figures/}}
\newcommand{\Mtwo}{V_{\max}}
\newcommand{\LR}{\mathrm{LR}}
\newcommand{\calA}{\mathcal{A}}
\newcommand{\calB}{\mathcal{B}}
\newcommand{\calZ}{\mathcal{Z}}
\newcommand{\pl}{(1)}
\newcommand{\kl}{_{k,\ell}}
\newcommand{\klh}{_{k,\ell;h}}
\newcommand{\klpr}{_{k,\ell'}}
\newcommand{\klprh}{_{k,\ell';h}}
\numberwithin{equation}{section}

\newcommand{\eps}{\epsilon}

\newcommand{\wkl}{\weight\kl}
\newcommand{\wklh}{\weight\klh}
\newcommand{\nbar}{\overline{n}}
\newcommand{\nbark}{\nbar_k}
\newcommand{\eventproblelst}{\mathcal{E}^{\mathrm{lead},\le\lst}}
\newcommand{\eventprobl}{\mathcal{E}^{\mathrm{lead},\ell}}

\newcommand{\eventvalconc}{\mathcal{E}^{\mathrm{val,conc}}}
\newcommand{\eventprobconc}{\mathcal{E}^{\mathrm{prb,conc}}}
\newcommand{\xaxpr}{(x,a,x')}
\newcommand{\goodeps}{\mathcal{K}_{\mathrm{good}}}
\newcommand{\advrange}{\adv^{\mathrm{range}}}
\newcommand{\corrupt}{\mathbf{C}}
\newcommand{\Err}{\mathrm{Err}}

\newcommand{\Klst}{\lst}
\newcommand{\jh}{_{j;h}}
\newcommand{\jhpl}{_{j;h+1}}
\newcommand{\eventsub}{\calE^{\mathrm{subsmp}}}
\newcommand{\klsthpl}{_{k,\lst;h+1}}
\newcommand{\hpl}{_{h+1}}
\newcommand{\klhpltwo}{_{k,l;h+2}}
\newcommand{\xpr}{x'}
\newcommand{\klihpl}{_{k,\ell,i;h+1}}
\newcommand{\lmaxj}{\lmax_j}
\newcommand{\alphali}{\alpha_{\ell,i}}
\newcommand{\iset}{\left\{\mathrm{sb},\mathrm{gl}\right\}}
\newcommand{\corruptk}{\corrupt_k}
\newcommand{\didcorruptk}{\didcorrupt_k}
\newcommand{\eventconc}{\mathcal{E}^{\mathrm{conc}}}
 \newcommand{\klihpr}{_{k,\ell,i;h'}}

\newcommand{\boundlead}{\bound^{\mathrm{lead}}}
\newcommand{\boundcross}{\bound^{\mathrm{cross}}}
\newcommand{\boundtotal}{\bound^{\mathrm{total}}}
\newcommand{\xau}{(x,a;u)}
\newcommand{\bound}{\mathbf{B}}
\newcommand{\delv}{\Delta \vup}

\newcommand{\Errlead}{\Err^{\mathrm{val}}}
\newcommand{\klglb}{_{k,\ell,\mathrm{gl}}}
\newcommand{\klsub}{_{k,\ell,\mathrm{sb}}}
\newcommand{\klelst}{_{k,\le \lst}}
		\newcommand{\klst}{_{k,\lst}}

	\newcommand{\Pikl}{\Pi_{k}^{\ell}}
	\newcommand{\Piklelst}{\Pi_{k}^{\le\lst}}
\newcommand{\Pik}{\Pi_k}
\newcommand{\kli}{_{k,l,i}}
\newcommand{\klpl}{_{k,\ell+1}}
\newcommand{\klminh}{_{k,\ell-1;h}}
\newcommand{\piucb}[1]{\boldpi^{\mathrm{ucb},#1}}
\newcommand{\boldlk}{\boldl_k}
\newcommand{\Finite}[1]{\left[#1\right]_{<\infty}}
\newcommand{\nk}{n_k}
\newcommand{\nkl}{n_{k,\ell}}
\newcommand{\robustLfactor}{\Lfactor^{\mathrm{rbst}}}
\newcommand{\klih}{_{k,\ell,i;h}}
\newcommand{\Index}{\mathcal{I}}
		\newcommand{\klhpl}{_{k,\ell,h+1}}
\newcommand{\lfin}{\boldl^{\mathrm{fin}}}
\newcommand{\lmax}{\boldl^{\mathrm{max}}}
\newcommand{\klsth}{_{k,\lst;h}}
	\newcommand{\lst}{\ell_{\star}}
\newcommand{\Regret}{\mathrm{Regret}}

\newcommand{\pigreedboldl}{\pi^{\mathrm{ucb},\,(\boldl)}}
	\newcommand{\pigreedboldlk}{\pi^{\mathrm{ucb},\,(\boldl)}_k}
	\newcommand{\pigreedboldlkk}{\pi^{\mathrm{ucb},\,(\boldl_k)}_k}
\newcommand{\distlayer}{\mu_{\mathrm{layer}}}
 
\newcommand{\boldpik}{\boldsymbol{\pi}_k}
\newcommand{\boldpil}{\boldsymbol{\pi}^{\ell}}
\newcommand{\boldpilk}{\boldsymbol{\pi}^{\ell}_k}

\newcommand{\boldpilel}{\boldsymbol{\pi}^{\le\ell}}
\newcommand{\boldpilelst}{\boldsymbol{\pi}^{\le\lst}}

\newcommand{\boldpilelk}{\boldsymbol{\pi}^{\le\ell}_k}
\newcommand{\boldpilelstk}{\boldsymbol{\pi}^{\le\lst}_k}
\newcommand{\ellset}{\mathscr{L}}
\newcommand{\ellsetl}{\ellset_{\ell}}
\newcommand{\xkhakh}{(\matx\kh,\mata\kh)}
\newcommand{\xjhajh}{(\matx\jh,\mata\jh)}

\newcommand{\klplh}{_{k,\ell+1;h}}
\newcommand{\oplustau}{\oplus_{\tau}}
\newcommand{\oplush}{\oplus_{h}}
\newcommand{\oplushbar}{\oplus_{\hbar}}
\newcommand{\vsthpl}{\vst_{h+1}}
\newcommand{\xah}{(x,a,h)}
\newcommand{\gap}{\mathrm{gap}}
\newcommand{\gapmin}{\gap_{\min}}
\newcommand{\clip}[2]{\operatorname{clip}\left[#2|#1\right]}
\newcommand{\clipmid}[2]{\operatorname{clip}\left[#2\mid\,#1\right]}

\newcommand{\cliptil}[2]{\widetilde{\operatorname{clip}}\left[#2|#1\right]}

\newcommand{\advlucbhalfclip}{\dot{\adv}^{\mathrm{lucb}}}
 \newcommand{\advhalfclip}{\dot{\adv}}
 \newcommand{\phatk}{\phat_k}
\newcommand{\rhatk}{\rhat_k}
 \newcommand{\advlucbfullclip}{\ddot{\adv}^{\mathrm{lucb}}}
 \newcommand{\advfullclip}{\ddot{\adv}}
\newcommand{\gaph}{\mathrm{gap}_h}
\newcommand{\epsclip}{\epsilon_\mathrm{clip}}
\newcommand{\vsth}{\vst_h}
\newcommand{\mup}{\overline{\calM}}
\newcommand{\mlow}{\underline{\calM}}
\newcommand{\mlowk}{\mlow_k}
\newcommand{\mupk}{\mup_k}
\newcommand{\vupk}{\vup_k}
\newcommand{\vlowk}{\vlow_k}
\newcommand{\qlowk}{\qlow_k}
\newcommand{\qupk}{\qup_k}
\newcommand{\pigreedk}{\pigreed_k}
\newcommand{\ofx}{(x)}
\newcommand{\ofxh}{(x_h)}
\newcommand{\pisth}{\pist_h}
\newcommand{\xhah}{(x_h,a_h)}

\renewcommand{\ast}{a^\star}
\newcommand{\xast}{(x,\ast)}
\newcommand{\qsth}{\qst_h}
\newcommand{\bolda}{\mathbf{a}}
\newcommand{\poly}{\mathrm{poly}}
\newcommand{\activeset}{\mathscr{A}}
\newcommand{\activesetk}{\activeset_k}
\newcommand{\tuple}{\mathscr{T}}
\newcommand{\tuplek}{\tuple_k}
\newcommand{\Kpl}{K^{\pl}}
\newcommand{\matz}{\mathbf{z}}
\newcommand{\matxpl}{\matx^{\pl}}
\newcommand{\matypl}{\maty^{\pl}}
\newcommand{\calEtil}{\widetilde{\mathcal{E}}}
\newcommand{\kclass}{\mathscr{K}}
\newcommand{\lossid}[1]{\mathcal{L}_{\mathrm{id},#1}}

\newcommand{\advlucb}{\adv^{\mathrm{lucb}}}
\newcommand{\advlowgreed}{\advlow^{\mathrm{lcb}}}

\newcommand{\boldl}{\boldsymbol{\ell}}

\newcommand{\valf}{\boldsymbol{V}}
\newcommand{\vst}{\valf^{\star}}
\newcommand{\qst}{\qf^{\star}}
\newcommand{\pist}{\pi^{\star}}

\newcommand{\qf}{\boldsymbol{Q}}
\newcommand{\vup}{\bar{\valf}}
\newcommand{\qup}{\bar{\qf}}
\newcommand{\qlow}{\makelow{\qf}}
\newcommand{\vlow}{\makelow{\valf}}

\newcommand{\qcirc}{\mathring{\qf}}
\newcommand{\vcirc}{\mathring{\valf}}
\newcommand{\pibar}{\overline{\pi}}
\newcommand{\adv}{\mathbf{A}}

\newcommand{\advcirc}{\mathring{\adv}}

\newcommand{\occ}{\boldsymbol{\omega}}

\newcommand{\phat}{\widehat{p}}
\newcommand{\rhat}{\widehat{r}}

\newcommand{\rhatkhl}{\rhat_{k,h,\ell}}
\newcommand{\khl}{_{k;h;l}}
\newcommand{\khbar}{_{k;\hbar}}

\newcommand{\khboldl}{_{k;h,\boldl}}

\newcommand{\khlpr}{_{k,h,l'}}
\newcommand{\khpl}{_{k;h+1}}
\newcommand{\khpr}{_{k;h'}}

\newcommand{\khlh}{_{k,h,\boldl_h}}
\newcommand{\xa}{(x,a)}

\newcommand{\makelow}[1]{\mkern2mu\underline{\mkern-2mu#1\mkern-4mu}\mkern4mu }

\newcommand{\bonus}{\boldsymbol{b}}
\newcommand{\bonuskhl}{\bonus_{k,h,\ell}}
\newcommand{\bonusbar}{\overline{\bonus}}
\newcommand{\bonuscirc}{\mathring{\bonus}}

\newcommand{\nkhl}{n_{k,h,\ell}}
\newcommand{\Lfactor}{\mathbf{L}}
\newcommand{\Lcorrupt}{\mathbf{L}_{\mathrm{crpt}}}

\newcommand{\cbon}{c_{\bonus}}
\renewcommand{\hbar}{\tau}
\newcommand{\hboldbar}{\overline{\boldsymbol{\tau}}}

\renewcommand{\ast}{a^\star}

\newcommand{\kh}{_{k;h}}
\newcommand{\boldpi}{\boldsymbol{\pi}}
\newcommand{\pigreed}{\pi^{\mathrm{ucb}}}
\newcommand{\piexp}{\pi^{\mathrm{exp}}}
\newcommand{\pimix}{\pi}
\newcommand{\pilcb}{\pi^{\mathrm{lcb}}}
\newcommand{\calMlow}{\makelow{\calM}}
\newcommand{\advlow}{\makelow{\adv}}
\newcommand{\bonuslow}{\makelow{\bonus}\,}

\newcommand{\weight}{\boldsymbol{\omega}}
\newcommand{\weighthbar}{\weight^{\hbar}}

\newcommand{\khhbar}{_{k,h;\hbar}}
\newcommand{\khhboldbar}{_{k,h;\hboldbar}}
\newcommand{\dens}{\rho}
\renewcommand{\colon}{{\,:\,}}
\newcommand{\calM}{\mathcal{M}}

\newcommand{\calMbar}{\overline{\calM}}
\newcommand{\pbar}{\overline{p}}
\newcommand{\rbar}{\overline{r}}

\newcommand{\states}{\mathcal{X}}
\newcommand{\actions}{\mathcal{A}}
\newcommand{\outputs}{\mathcal{Y}}
\newcommand{\matx}{\mathbf{x}}
\newcommand{\mata}{\mathbf{a}}

\newcommand{\Gclass}{\mathscr{G}}
\newcommand{\Ust}{U_{\star}}
\newcommand{\sfP}{\mathsf{P}}

\newcommand{\Ast}{A_{\star}}
\newcommand{\Bst}{B_{\star}}
\newcommand{\Kst}{K_{\star}}
\newcommand{\calN}{\mathcal{N}}
\newcommand{\Khat}{\widehat{K}}
\newcommand{\matU}{\mathbf{U}}
\newcommand{\matDel}{\boldsymbol{\Delta}}
\newcommand{\Rhat}{\widehat{R}}

\newcommand{\Proj}{\mathrm{Proj}}
\newcommand{\ProjK}{\mathrm{Proj}_{\Kst}}
\newcommand{\ProjKzero}{\mathrm{Proj}_{\mathbf{0} \oplus \Kst}}
\newcommand{\veczero}{\mathbf{0}}

\newcommand{\Stief}{\mathrm{Stief}}
\newcommand{\nablabar}{\overline{\nabla}}
\newcommand{\nablacheck}{\check{\nabla}}
\newcommand{\fcheck}{\check{f}}
\newcommand{\Gcheck}{\check{G}}

\newcommand{\calK}{\mathcal{K}}
\newcommand{\calE}{\mathcal{E}}

\newcommand{\Rcheck}{\check{R}}

\newcommand{\xbar}{\overline{x}}
\newcommand{\xcheck}{\check{x}}


\DeclarePairedDelimiter\ceil{\lceil}{\rceil}
\DeclarePairedDelimiter\floor{\lfloor}{\rfloor}

\newtheorem{asm}{Assumption}
\newtheorem{claim}[theorem]{Claim}
\newtheorem{fact}[theorem]{Fact}
\newtheorem{prob}[theorem]{Problem}

\renewcommand{\Pr}{\mathrm{Pr}}
\newcommand{\Exp}{\mathbb{E}}
\newcommand{\Var}{\mathrm{Var}}
\newcommand{\Cov}{\mathrm{Cov}}

\newcommand{\Law}{\mathrm{Law}}
\newcommand{\Leb}{\mathrm{Lebesgue}}

\newcommand{\unifsim}{\overset{\mathrm{unif}}{\sim}}
\newcommand{\iidsim}{\overset{\mathrm{i.i.d.}}{\sim}}
\newcommand{\probto}{\overset{\mathrm{prob}}{\to}}
\newcommand{\ltwoto}{\overset{L_2}{\to}}

\newcommand{\info}{\mathbf{i}}
\newcommand{\KL}{\mathrm{KL}}
\newcommand{\Ent}{\mathrm{Ent}}
\newcommand{\TV}{\mathrm{TV}}
\newcommand{\rmd}{\mathrm{d}}

\newcommand{\median}{\mathrm{Median}}
\newcommand{\range}{\mathrm{range}}
\newcommand{\im}{\mathrm{im }}
\newcommand{\tr}{\mathrm{tr}}
\newcommand{\rank}{\mathrm{rank}}
\newcommand{\spec}{\mathrm{spec}}
\newcommand{\diag}{\mathrm{diag}}
\newcommand{\Diag}{\mathrm{Diag}}
\newcommand{\sign}{\mathrm{sign\ }}
\newcommand{\abs}{\mathrm{abs\ }}

\newcommand{\inpro}[1]{\left<#1\right>}
\newcommand{\op}{\mathrm{op}}
\newcommand{\F}{\mathrm{F}}
\newcommand{\dist}{\mathrm{dist}}

\renewcommand{\implies}{\text{ implies }}
\renewcommand{\iff}{\text{ iff }}
\newcommand{\matX}{\mathbf{X}}
\newcommand{\matbeta}{\boldsymbol{\beta}}
\newcommand{\mattheta}{\boldsymbol{\theta}}

\newcommand{\betast}{\matbeta^*}
\newcommand{\betahat}{\widehat{\matbeta}}

\newcommand{\thetast}{\mattheta^*}
\newcommand{\thetahat}{\widehat{\mattheta}}
\newcommand{\supp}{\mathrm{supp}}

\newcommand{\matw}{\mathbf{w}}
\newcommand{\maty}{\mathbf{y}}

\renewcommand{\Im}{\mathfrak{Im }}
\renewcommand{\Re}{\mathfrak{Re }}
\newcommand{\N}{\mathbb{N}}
\newcommand{\Z}{\mathbb{Z}}
\newcommand{\C}{\mathbb{C}}
\newcommand{\R}{\mathbb{R}}
\newcommand{\I}{\mathbb{I}}
\newcommand{\Q}{\mathbb{Q}}

\newcommand{\PD}{\mathbb{S}_{++}^d}
\newcommand{\sphered}{\mathcal{S}^{d-1}}

\newcommand{\Rsimple}{\mathcal{R}_{\mathrm{simp}}}
\newcommand{\calF}{\mathcal{F}}
\newcommand{\actst}{a_*}
\newcommand{\acthat}{\widehat{a}}

\newcommand{\lasso}{\mathsf{lasso}}
\newcommand{\four}{\mathscr{F}}
\newcommand{\LS}{\mathsf{LS}}
\newcommand{\Alg}{\mathsf{Alg}}

\newcommand{\vi}{\matv^{(i)}}

\newcommand{\SAT}{\mathit{SAT}}
\renewcommand{\P}{\mathbf{P}}
\newcommand{\NP}{\mathbf{NP}}
\newcommand{\coNP}{\co{NP}}
\newcommand{\co}[1]{\mathbf{co#1}}

\newcommand{\dboldpi}{d_{\boldpi}}
\newcommand{\dpi}{d_{\pi}}
\newcommand{\sink}{s^{\dagger}}
\newcommand{\abmdp}{\calM^{\dagger}}

\maketitle

\begin{abstract}
Path signatures have been proposed as a powerful representation of paths that efficiently captures the path's analytic and geometric characteristics, having useful algebraic properties including fast concatenation of paths through tensor products. 
Signatures have recently been widely adopted in machine learning problems for time series analysis.
In this work we establish connections between value functions typically used in optimal control and intriguing properties of path signatures. These connections motivate our novel control framework with signature transforms that efficiently generalizes the Bellman equation to the space of trajectories. 
We analyze the properties and advantages of the framework, termed {\ctrlname}. In particular, we demonstrate that (i) it can naturally deal with varying/adaptive time steps; (ii) it propagates higher-level information more efficiently than value function updates; (iii) it is robust to dynamical system misspecification over long rollouts.
As a specific case of our framework, we devise a model predictive control method for path tracking. This method generalizes integral control, being suitable for problems with unknown disturbances.
The proposed algorithms are tested in simulation, with differentiable physics models including typical control and robotics tasks such as point-mass, curve following for an ant model, and a robotic manipulator.    
\end{abstract}

\begin{keywords}%
	Decision making, Path signature, Bellman equation, Integral control, Model predictive control, Robotics
\end{keywords}

\section{Introduction}
Path tracking has been a central problem for autonomous vehicles (e.g., \citet{schwarting2018planning, aguiar2007trajectory}), imitation learning, learning from demonstrations (cf. \citet{hussein2017imitation, argall2009survey}), character animations (with mocap systems; e.g., \citet{peng2018deepmimic}), robot manipulation for executing plans returned by a solver (cf. \citet{kaelbling2011hierarchical,garrett2021integrated}), and for flying drones (e.g., \citet{zhou2014vector}), just to name a few.

Typically, path tracking is dealt with by using reference dynamics if available or are formulated as a control problem with a sequence of goals to follow using optimal controls based on dynamic programming (DP) \citep{bellman1953introduction}.  In particular, DP over the scalar or {\em value} of a respective policy is often studied and analyzed through the lenses of the Bellman expectation (or optimality) which describes the evolution of values or rewards over time (cf. \citet{kakade2003sample, sutton2018reinforcement, kaelbling1996reinforcement}). This is done by computing and updating a {\em value function} which maps states to values. 

However, value (cumulative reward) based trajectory (or policy) optimization is suboptimal for path tracking where appropriate waypoints are unavailable.
Further, value functions capture state information exclusively through its scalar value, which makes it unsuitable for the problems that require the entire trajectory information to obtain a desirable control strategy.

In this work, we adopt a rough-path theoretical approach in DP; specifically, we exploit path signatures (cf. \citet{chevyrev2016primer,lyons1998differential}), which have been widely studied as a useful geometrical feature representation of path, and have recently attracted the attention of the machine learning community (e.g., \citet{chevyrev2016primer, kidger2019deep, salvi2021signature, morrill2021neural, levin2013learning, fermanian2021learning}).  Our decision making framework predicated on signatures, named {\ctrlname}, describes an evolution of signatures over an agent's trajectory through DP.  By demonstrating how it reduces to the Bellman equation as a special case, we show that the {\em $S$-function} representing the signatures of future path (we call it {\em path-to-go} in this paper) is cast as an effective generalization of value function.  In addition, since an $S$-function naturally encodes information of a long trajectory, it is robust against misspecification of dynamics.  Our {\ctrlname} inherits some of the properties of signatures, namely, time-parameterization invariance, shift invariance, and tree-like equivalence (cf. \citet{lyons2007differential, boedihardjo2016signature}); as such, when applied to tracking problems, there is no need to specify waypoints.

In order to devise new algorithms from this framework, including model predictive controls (MPCs) \citep{camacho2013model}, we present path cost designs and their properties.  In fact, our {\ctrlname} generalizes the classical integral control (see \citet{Khalil:1173048}); it hence shows robustness against unknown disturbances, which is demonstrated in robotic manipulator simulation.

\paragraph{Notation:}
Throughout this paper, we let $\R$, $\N$, $\R_{\geq0}$, and $\Z_{>0}$
be the set of the real numbers, the natural numbers ($\{0,1,2,\ldots\}$), the nonnegative real numbers,
and the positive integers, respectively.
Also, let $[T]:=\{0,1,\ldots,T-1\}$ for $T\in\Z_{>0}$.
The floor and the ceiling of a real number $a$ is denoted by $\floor{a}$ and $\ceil{a}$, respectively.
Let $\mathbb{T}$ denote {\rm time} for random dynamical systems, which is defined to be either $\N$ (discrete-time) or $\R_{\geq 0}$ (continuous-time). 

\begin{figure}[h]
	\begin{center}
		\includegraphics[clip,width=0.85\textwidth]{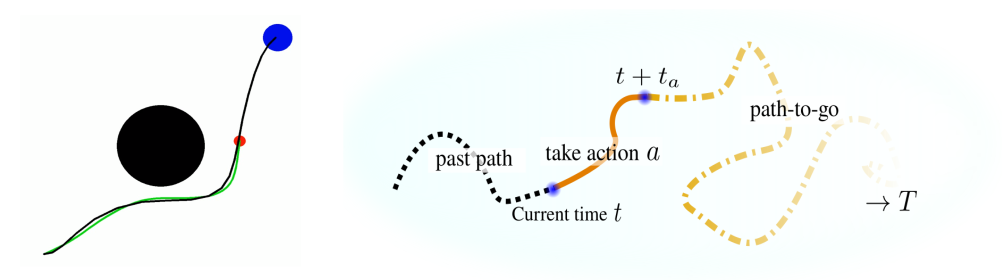}
	\end{center}
	\caption{Left: simple tracking example.  The black and blue circles represent an obstacle and the goal.  Given a path (black line), a point-mass (red) follows this reference via minimization of deviation of signatures in an online fashion with optimized action repetitions.  Right: illustration of path-to-go formulation as an analogy to value-to-go in the classical settings.}
	\label{fig:topfigs}
\end{figure}

\section{Related work}
\paragraph{Path signature:}
Path signatures are mathematical tools developed in rough path research \citep{lyons1998differential, chen1954iterated, boedihardjo2016signature}.
For efficient computations of metrics over signatures, kernel methods \citep{hofmann2008kernel, aronszajn1950theory} are employed \citep{kiraly2019kernels,salvi2021signature,cass2021general, salvi2021signature}.
Signatures have been applied to various applications, such as sound compression \citep{lyons2005sound}, time series data analysis and regression \citep{gyurko2013extracting, lyons2014rough, levin2013learning}, action and text recognitions \citep{yang2022developing, xie2017learning}, and neural rough differential equations \citep{morrill2021neural}.  Also, deep signature transform is proposed in \citep{kidger2019deep} with applications to reinforcement learning which is still within Bellman equation based paradigm. In contrast, our framework is solely replacing this Bellman backup with signature based DP.  Theory and practice of path signatures in machine learning are summarized in \citep{chevyrev2016primer, fermanian2021learning}. 

\paragraph{Value-based control and RL:}
Value function based methods are widely adopted in RL and optimal control.  However, value functions capture state information exclusively through its scalar value, which lowers sample efficiency. To alleviate this problem, model-based RL methods learn one-step dynamics across states (cf. \citep{sun2019model, du2021bilinear,wang2019benchmarking,chua2018deep}). In practical algorithms, however, even small errors on one-step dynamics could diverge along multiple time steps, which hinders the performance (cf. \citet{moerland2023model}). To improve sample complexity and generalizability of value function based methods, on the other hand, successor features (e.g., \citet{barreto2017successor}) have been developed.  Costs over the spectrum of the Koopman operator were proposed by \citet{ohnishi2021koopman}, but without DP.  We tackle the issue from another angle by capturing sufficiently rich information in a form of {\em path-to-go}, which is robust against long horizon problems; at the same time, it subsumes value function (and successor feature) updates.

\paragraph{Path tracking:}
Traditionally, the optimal path tracking control is achieved by using reference dynamics or by assigning time-varying waypoints \citep{paden2016survey,schwarting2018planning,aguiar2007trajectory,zhou2014vector,patle2019review}.
In practice, MPC is usually applied for tracking time-varying waypoints and PID controls are often employed when optimal control dynamics can be computed offline (cf. \citet{Khalil:1173048}).
Some of the important path tracking methodologies were summarized in \citep{rokonuzzaman2021review}; namely, pure pursuit \citep{scharf1969comparison}, Stanley controller \citep{thrun2007stanley}, linear control such as the linear quadratic regulator after feedback linearisation \citep{Khalil:1173048}, Lyapunov’s direct method, robust or adaptive control using simplified or partially known system models, and MPC.
Those methodologies are highly sensitive to time step sizes and requires analytical (and simplified) system models, and misspecifications of dynamics may also cause significant errors in tracking accuracy even when using MPC.  Due to those limitations, many ad-hoc heuristics are often required when applying them in practice.
Our signature based method can systematically remedy some of those drawbacks in its vanilla form.
\section{Preliminaries}
\subsection{Path signature}
\label{sec:pathsig}
Let $\mathcal{X}\subset\R^d$ be a state space and suppose a path (a continuous stream of states) is defined over a compact time interval $[s,t]$ for $0\leq s<t$ as an element of $\mathcal{X}^{[s,t]}$.
The path signature is a collection of infinitely many features (scalar coefficients) of a path with {\em depth} one to infinite.
Coefficients of each depth roughly correspond to the geometric characteristics of paths, e.g., displacement and area surrounded by the path can be expressed by coefficients of depth one and two.

The formal definition of path signatures is given below. We use $T((\mathcal{X}))$ to denote the space of formal power series and $\otimes$ to denote the tensor product operation (see Appendix \ref{app:tensoralgebra}).
\begin{definition}[Path signatures \citep{lyons2007differential}]
	Let $\Sigma\subset \mathcal{X}^{[0,T]}$ be a certain space of paths.
	Define a map on $\Sigma$ over $T((\mathcal{X}))$ by
	$S(\sigma)\in T((\mathcal{X}))$ for a path $\sigma\in\Sigma$ where its coefficient corresponding to the basis $e_i\otimes e_j\otimes \ldots\otimes e_k$ is given by
	\begin{align}
	S(\sigma)_{i,j,\ldots,k}:=\int_{0<\tau_k<T}\ldots\int_{0<\tau_j<\tau_k}\int_{0<\tau_i<\tau_j} dx_{\tau_i,i}dx_{\tau_j,j}\ldots dx_{\tau_k,k}.\nonumber
	\end{align}
	Here, $x_{t,i}$ is the $i$th coordinate of $\sigma$ at time $t$.
\end{definition}
The space $\Sigma$ is chosen so that the path signature $S(\sigma)$ of $\sigma\in\Sigma$ is well-defined.  Given a positive integer $m$, the truncated signature $S_m(\sigma)$ is defined accordingly by a truncation of $S(\sigma)$ (as an element of the quotient denoted by $T^m(\mathcal{X})$; see Appendix \ref{app:tensoralgebra} for the definition of tensor algebra).
\paragraph{Properties of the path signatures:}
The basic properties of path signature allow us to develop our decision making framework and its extension to control algorithms.
Such properties are also inherited by the algorithms we devise, providing several advantages over classical methods for tasks such as path tracking (further details are in Section \ref{sec:algorithms} and \ref{sec:experiments}).
We summarize these properties below:
\begin{itemize}
	\vspace{-0.5em}
    \item The signature of a path is invariant under a constant shift and time reparameterizations. Straightforward applications of signatures thus represent {\em shape} information of a path irrespective of waypoints and/or absolute initial positions.
    \vspace{-0.5em}
    \item A path is uniquely recovered from its signature up to tree-like equivalence (e.g., path with {\em detours}) and the magnitudes of coefficients decay as depth increases.  As such, (truncated) path signatures contain sufficiently rich information about the state trajectory, providing a valuable and compact representation of a path in several control problems.
    \vspace{-0.5em}
    \item Any real-valued continuous map on the certain space of paths can be approximated to arbitrary accuracy by a linear map on the space of signatures.  This universality property enables us to construct a kernel operating over the space of trajectories, which will be critical to derive our control framework in section \ref{sec: sig_control}.
    \vspace{-0.5em}
    \item The path signature has a useful algebraic property known as Chen's identity \citep{chen1954iterated}, stating that the signature of the concatenation of paths can be computed by the tensor product of the signatures of the paths.
    	Let $X:[a,b]\rightarrow \R^d$ and $Y:[b,c]\rightarrow \R^d$ be two paths.  Then,
    	\begin{align}
    	S(X*Y)_{a,c} = S(X)_{a,b}\otimes S(Y)_{b,c},\nonumber
    	\end{align}
    	where $*$ denotes the concatenation operation (we shift the starting time of path when necessary).
\end{itemize}
\subsection{Dynamical systems and path tracking}
Since we are interested in cost definitions over the entire path and not limited to the form of the cumulative cost over a trajectory with fixed time interval (or integral along continuous time), Markov Decision Processes (MDPs; \citet{bellman1957markovian}) are no longer the most suitable representation.  Instead, we assume that the system dynamics of an agent is described by a stochastic dynamical system $\Phi$ (SDS; in other fields the term is also known as Random Dynamical System (RDS) \citep{arnold1995random}).  We defer the mathematical definition to Appendix \ref{app:settings}. 
In particular, let $\pi$ be a policy in a space $\Pi$ which defines the SDS $\Phi_\pi$ (it does not have to be a map from a state to an action).
Examples of stochastic dynamical systems include Markov chains, stochastic differential equations, and additive-noise (zero-mean i.i.d.) systems.  

Our main problem of interest is path tracking which we formally define below: 
\begin{definition}[Path tracking]
	\label{prob:pathtracking}
	Let $\Gamma:\Sigma\times\Sigma\to\R_{\geq 0}$ be a cost function on the product of the spaces of paths over the nonnegative real number, satisfying:
	\begin{align}
	\forall \sigma\in\Sigma:~\Gamma(\sigma,\sigma) = 0;~~~~~~\forall \sigma^*\in\Sigma,~\forall \sigma\in\Sigma~{\rm s.t.}~\sigma\not\equiv_{\sigma}\sigma^*:~\Gamma(\sigma,\sigma^*)>0\nonumber,
	\end{align}
where $\equiv_{\sigma}$ is any equivalence relation of path in $\Sigma$. Given a reference path $\sigma^*\in\Sigma$ and a cost, the goal of path tracking is to find a policy $\pi^*\in\Pi$ such that a path $\sigma_\pi$ generated by the SDS $\Phi_\pi$ satisfies
	\begin{align}
	\pi^*\in\argmin_{\pi\in\Pi}\Gamma(\sigma_\pi,\sigma^*).\nonumber
	\end{align}
\end{definition}
With these definitions we can now present a novel control framework, namely, {\ctrlname}.
\section{Signature control}
\label{sec: sig_control}
An SDS creates a discrete or continuous path $\mathbb{T}\cap[0,T]\to \mathcal{X}$.
One may transform the obtained path onto $\Sigma$ as appropriate (see Appendix \ref{app:settings} for detailed description and procedures).  Our {\ctrlname} problem is described as follows:
\subsection{Problem formulation}
\begin{prob}[\ctrlname]
	\label{sigoptctrl}
	Let $T\in[0, \infty)$ be a horizon.
	The {\ctrlname} problem reads
	\begin{align}
		{\rm Find}~\pi^*~{\rm s.t.}~\pi^*\in\underset{\pi\in\Pi}{\argmin}~ c\left(\Exp_\Omega\left[ S\left(\sigma_{\pi}\left(x_0, T, \omega\right)\right)\right]\right), \label{eq:opt}
	\end{align}
	where $c:T((\mathcal{X}))\to \R_{\geq 0}$ is a cost function over the space of signatures. $\sigma_\pi(x_0,T,\omega)$ is the (transformed) path for the SDS $\Phi_\pi$ associated with a policy $\pi$, $x_0$ is the initial state and $\omega$ is the realization for the {\em noise} model.
\end{prob}
This problem subsumes the MDP as we shall see in Section \ref{sec:reductdion_to_bellman}.
The given formulation covers both discrete-time (i.e., $\mathbb{T}$ is $\N$) and continuous-time cases through interpolation over time.
To simplify notation, we omit the details of probability space (e.g., realization $\omega$ and sample space $\Omega$) in the rest of the paper with a slight sacrifice of rigor (see Appendix \ref{app:settings} for detailed descriptions).
Given a reference $\sigma^*$, when $c(S(\sigma))=\Gamma(\sigma,\sigma^*)$ and $\equiv_{\sigma}$ denotes tree-like equivalence, {\ctrlname} becomes the path tracking Problem \ref{prob:pathtracking}.
To effectively solve it we exploit DP in the next section.
\subsection{Dynamic programming over signatures}
\label{sec:main}
\paragraph{Path-to-go:}
Let $a\in\mathcal{A}$ be an {\em action} which basically constrains the realizations of path up to time $t_a$ (actions studied in MDPs constrain the one-step dynamics from a given state).
Given $T\in\mathbb{T}$, {\em path-to-go}, or the future path generated by $\pi$, refers to $\mathcal{P}^\pi$ defined by
\begin{align}
	\mathcal{P}^\pi(x,t) = \sigma_{\pi}(x, T-t),~\forall t\in[0,T].\nonumber
\end{align}

It follows that each realization of the path assumed to be constrained by an action $a$ can be written by
\begin{align}
\mathcal{P}^\pi(x,t) = \mathcal{P}_a^\pi(x,t)*\mathcal{P}^\pi(x^{+},t_a+t),\;\;\;
\mathcal{P}_a^\pi(x,t):= \sigma_{\pi}(x, \min\{T-t, t_a\}),\nonumber
\end{align}
where $x^+$ is the state reached after $t_a$ from $x$ (see Figure \ref{fig:topfigs} Right).
Under Markov assumption (see Appendix \ref{app:pathtogo_S}), the path generation after $t_a$ does not depend on action $a$. 
To express this in the signature form,
we exploit the Chen's identity, and define the {\em signature-to-go} function (or in short $S$-function):
\begin{align}
	\mathcal{S}^{\pi}(a,x,t):=\Exp\left[S(\mathcal{P}^\pi(x,t))\vert a\right].\label{eq:sfunc}
\end{align}
Using the Chen's identity, the law of total expectation, the Markov assumption, the properties of tensor product and the path transformation, we obtain the update rule:
\begin{theorem}[Signature Dynamic Programming for Decision Making]
	\label{thm:main}
Let the function $\mathcal{S}$ be defined by \eqref{eq:sfunc}.  Under the Markov assumption, it follows that
\begin{align}
	&\mathcal{S}^{\pi}(a,x,t) 
	=\Exp\left[  S(\mathcal{P}^\pi_a(x,t))\otimes \mathcal{ES}^{\pi}(x^+,t+t_a)\vert a\right]\nonumber
\end{align}
where the {\em expected $S$-function} $\mathcal{ES}^{\pi}$ is defined by taking expectation over actions as below:
\begin{align}
\mathcal{ES}^{\pi}(x,t):=\Exp\left[\mathcal{S}^{\pi}(a,x,t)\right].\nonumber
\end{align}
\end{theorem}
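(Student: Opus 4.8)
The plan is to unwind the definition of the $S$-function, split the path-to-go at the horizon $t_a$ of the first action using the concatenation identity already recorded for $\mathcal{P}^{\pi}$, and then commute the expectation with Chen's identity via a single application of the tower rule. First I would start from $\mathcal{S}^{\pi}(a,x,t)=\Exp[S(\mathcal{P}^{\pi}(x,t))\mid a]$ and substitute the decomposition $\mathcal{P}^{\pi}(x,t)=\mathcal{P}_a^{\pi}(x,t)*\mathcal{P}^{\pi}(x^{+},t+t_a)$. Applying Chen's identity to the (transformed) path then gives $S(\mathcal{P}^{\pi}(x,t))=S(\mathcal{P}_a^{\pi}(x,t))\otimes S(\mathcal{P}^{\pi}(x^{+},t+t_a))$; I would note here that this step requires the path transformation of Appendix~\ref{app:settings} to commute with concatenation, so that Chen's identity applies after transforming SDS trajectories into $\Sigma$.

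Next I would introduce the $\sigma$-algebra $\mathcal{F}_{t_a}$ generated by the realization of the path on $[0,t_a]$, observe that $S(\mathcal{P}_a^{\pi}(x,t))$ and $x^{+}$ are $\mathcal{F}_{t_a}$-measurable, and apply the law of total expectation to condition on $\mathcal{F}_{t_a}$ (together with $a$) inside $\Exp[\,\cdot\mid a]$. Since $\otimes$ is bilinear and the first factor is $\mathcal{F}_{t_a}$-measurable, the pull-out property of conditional expectation extracts it, leaving $S(\mathcal{P}_a^{\pi}(x,t))\otimes\Exp[S(\mathcal{P}^{\pi}(x^{+},t+t_a))\mid\mathcal{F}_{t_a},a]$ under the outer expectation over $a$. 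The Markov assumption (Appendix~\ref{app:pathtogo_S}) makes the law of the post-$t_a$ path depend on $(\mathcal{F}_{t_a},a)$ only through $x^{+}$, so the inner conditional expectation equals $\Exp[S(\mathcal{P}^{\pi}(x^{+},t+t_a))\mid x^{+}]$, and this coincides with $\mathcal{ES}^{\pi}(x^{+},t+t_a)$ because the expectation defining $\mathcal{ES}^{\pi}$ also averages over the action taken from $x^{+}$. Substituting back and keeping the conditioning on $a$ yields $\mathcal{S}^{\pi}(a,x,t)=\Exp[S(\mathcal{P}_a^{\pi}(x,t))\otimes\mathcal{ES}^{\pi}(x^{+},t+t_a)\mid a]$, as claimed.

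The hard part will be the bookkeeping around the infinite-dimensional codomain $T((\mathcal{X}))$ and making the Markov hypothesis precise enough for the measurability and conditional-independence facts used above. I would handle the first point coefficient-by-coefficient: each coordinate $S(\cdot)_{i,\dots,k}$ is a real-valued integrable random variable under the standing assumptions on $\Sigma$, bilinearity of $\otimes$ becomes a finite sum over the grading at each truncation level, and then the pull-out and tower properties are the classical scalar ones, applied levelwise. The second point — that $S(\mathcal{P}_a^{\pi}(x,t))$ is $\mathcal{F}_{t_a}$-measurable and that the continuation path is conditionally independent of $a$ given $x^{+}$ — is exactly the content deferred to Appendix~\ref{app:pathtogo_S}, so in the proof I would invoke it rather than reprove it; the only genuine subtlety is checking that the transformation taking SDS trajectories into $\Sigma$ is causal/local, which is precisely what makes the concatenation step in Chen's identity and this measurability statement hold simultaneously.
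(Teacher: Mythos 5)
Your proposal is correct and follows essentially the same route as the paper's proof: decompose the path-to-go at $t_a$, apply Chen's identity (valid because the transformation respects concatenation), condition on the initial path segment via the tower rule, pull out the measurable first factor using bilinearity of $\otimes$, and invoke the Markov assumption to identify the inner conditional expectation with $\mathcal{ES}^{\pi}(x^{+},t+t_a)$, which averages over the next action exactly as you note. The only cosmetic difference is that the paper conditions on the realized segment $\mathcal{P}^{\pi}_a(y,t)(\omega)$ within the event $\omega\in\Omega_a$ rather than on a $\sigma$-algebra $\mathcal{F}_{t_a}$, which amounts to the same conditioning.
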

\paragraph{Truncated signature formulation:}
For the $m$th-depth truncated signature (note that $m=\infty$ for signature with no truncation), we obtain,
\begin{align}
	(S(X)\otimes S(Y))_m = (S(X)_m \otimes S(Y)_m)_m=:S(X)_m \otimes_m S(Y)_m. \label{eq:truncation}
\end{align}
Therefore, when the cost only depends on the first $m$th-depth signatures, keeping track of the first $m$th-depth $S$-function $\mathcal{S}_m^{\pi}(a,x,t)$
suffices, and the cost function $c$ can be efficiently computed as
\begin{align}
c(\mathcal{S}_m^{\pi}(a,x,t)) 
=c\left(\Exp\left[  S_m(\mathcal{P}^\pi_a(x,t))\otimes_m \mathcal{ES}_m^{\pi}(x^+,t+t_a)\vert a\right]\right).\nonumber
\end{align}
\paragraph{Reduction to the Bellman equation:}
\label{sec:reductdion_to_bellman}
Recall that the Bellman expectation equation w.r.t. action-value function or $Q$-function is given by
\begin{align}
	Q^{\pi}(a,x,t)= \Exp\left[r(a,x)+\gamma V^{\pi}(x^+,t+1)\big\vert a\right], \label{eq:bellmanopt}
\end{align}
where $V^{\pi}(x,t)=\Exp[Q^{\pi}(a,x,t)]$ for all $t\in\N$, where $\gamma\in(0,1]$ is a discount factor. 
We briefly show how this equation can be described by a $S$-function formulation.
Here, the action $a$ is the typical action input considered in MDPs.
We suppose discrete-time system ($\mathbb{T}=\N$), and that the state is augmented by reward and time, and suppose $t_a=1$ for all $a\in\mathcal{A}$.
Let the depth of signatures to keep be $m=2$.  Then, by properly defining (see Appendix \ref{app:bellmanreduction} for details) the interpolation and transformation, we obtain the path illustrated in Figure \ref{fig:pathreward} Left over the time index and immediate reward.
For this two dimensional path, note a signature element of depth two represents the surface surrounded by the path (colored by yellow in the figure), which is equivalent to the value-to-go.  As such, define the cost $c:T^2(\mathcal{X})\to \R_{\geq 0}$ by $c(s)=-s_{1,2}$, and {\optname} becomes
\begin{align}
c(\mathcal{S}_2^{\pi}(a,x,t)) 
&=
c\left(\Exp\left[S_2(\mathcal{P}^{\pi}_a(x,t))\otimes_2 \mathcal{ES}^{\pi}_2(x^+,t+1)\vert a\right]\right)\nonumber\\
&=\Exp\left[-S_{1,2}(\mathcal{P}^{\pi}_a(x,t))+ c\left(\mathcal{ES}^{\pi}_2(x^+,t+1)\right)\vert a\right].\nonumber
\end{align}
Now, it reduces to the Bellman expectation equation \eqref{eq:bellmanopt} because
\begin{align}
&c\left(\mathcal{S}_2^{\pi}(a,x,t)\right)= - \gamma^tQ^{\pi}(a,x,t),~c\left(\mathcal{ES}^{\pi}_2(x,t)\right)=-\gamma^tV^\pi(x,t),~S_{1,2}(\mathcal{P}^{\pi}_a(x,t))=\gamma^tr(a,x).\nonumber
\end{align}
\begin{figure}[t]
	\begin{center}
		\includegraphics[clip,width=\textwidth]{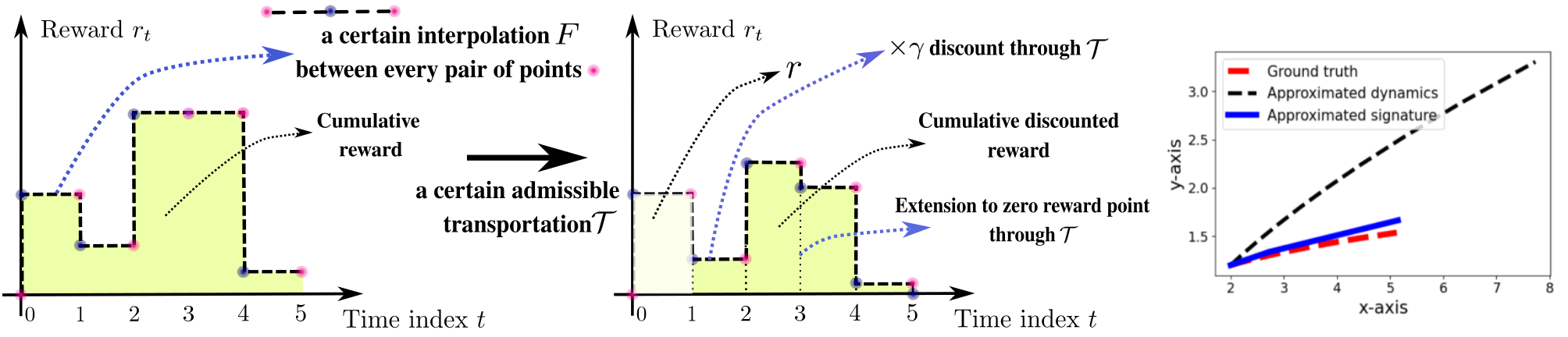}
	\end{center}
\vspace{-2em}
	\caption{Left: how a cumulative (discounted) reward is represented by our path formulation by an interpolation for representing the value as surface and by a transportation for discounting and concatenations of paths.  Right: an error of approximated one-step dynamics propagates through time steps; while an error on signature has less effect over horizon.  } 
	\label{fig:pathreward}
	\vspace{-1em}
\end{figure}
\section{Signature MPC}
\label{sec:algorithms}
We discuss an effective cost formulation over signatures for flexible and robust MPC, followed by additional numerical properties of signatures that benefit {\ctrlname}.
\paragraph{Signature model predictive control:}
We present an application of {\optname} to MPC control-- an iterative, finite-horizon optimization framework for control. In our signature MPC formulation, the optimization cost is defined for the signature of the full path being tracked, i.e., the previous path seen so far and the future path generated by the optimized control inputs (e.g., distance from the reference path signature for path tracking problem).
Our algorithm, given in Algorithm \ref{alg:signatureMPC}, works in the receding-horizon manner and computes a fixed number of actions (the execution time for the full path can vary as each action may have a different time scale; i.e., each action is taken effect up to optimized (or fixed) time $t_a\in\mathbb{T}$).

\begin{algorithm}[!t]
	\vspace{0.5em}
	\textbf{Input}: initial state $x_0$; signature depth $m$; initial signature of past path $s_0=\mathbf{1}$; \# actions for rollout $N$; surrogate cost $\ell$ and regularizer $\ell_{\rm reg}$; terminal $S$-function $\mathcal{TS}_m$; simulation model $\hat{\Phi}$ \newline
        \While {not task finished}
        {
           Observe the current state $x_t$. \newline
           Update the signature of past path: $s_t = s_{t-1} \otimes_m S_m(\sigma (x_{t-1},x_{t}))$,  where $S_m(\sigma (x_{t-1},x_{t}))$ is the signature transform of the subpath traversed since the last update from $t-1$ to $t$. \newline
            Compute the $N$ optimal future actions $\mathbf{a}^*:=(a_0,a_1,\ldots, a_{N-1})$ using a simulation model $\hat{\Phi}$ that minimize the cost of the signature of the entire path (See Equation \eqref{eq:surrogate}). \newline
		Run the first action $a_0$ for the associated duration $t_{a_0}$.
		}
	\caption{Signature MPC}
	\label{alg:signatureMPC}
\end{algorithm}
Given the signature $s_t$ of transformed past path (depth $m$) and the current state $x_t$ at time $t$, the actions are selected by minimizing a two-part objective which is the sum of the surrogate cost $\ell$ and some regularizer $\ell_{\rm reg}$:
\begin{align}
J = 
\begin{cases}
\ell\biggl(s_t \quad \otimes_m \quad \Exp\left[S_m(\sigma_{\mathbf{a}}(x_t))\otimes_m \mathcal{TS}_m(x_0,s_t,\sigma_{\mathbf{a}}(x_t))\right]\biggr) \qquad &\text{surrogate cost}\\
+ \quad \ell_{\rm reg}\biggl(\Exp\left[\mathcal{TS}_m(x_0,s_t,\sigma_{\mathbf{a}}(x_t))\right]\biggl)\qquad &\text{regularizer}
\end{cases}
\label{eq:surrogate}
\end{align}
where the path $\sigma_{\mathbf{a}}(x_t)$ is traced by the optimization variable $\mathbf{a}:=(a_0, a_1,\ldots a_{N-1})$, and
$\mathcal{TS}_m:\mathcal{X}\times T^m(\mathcal{X})\times \Sigma\to T^m(\mathcal{X})$ is the {\em terminal $S$-function} that may be defined arbitrarily (as an analogy to terminal value in Bellman equation based MPC; see Appendix \ref{app:terminalS}).

Terminal $S$-function returns the signature of the terminal path-to-go.
For the tracking problems studied in this work, we define the terminal subpath (path-to-go) as the final portion of the reference path starting from the closest point to the end-point of roll-out. This choice optimizes for actions up until the horizon anticipating that the reference path can be tracked afterward. We observed that this choice worked the best for simple classical examples analyzed in this work.

\paragraph{Error explosion along time steps:}
We consider robustness against misspecification of dynamics.
Figure \ref{fig:pathreward} Right shows an example where the dashed red line is the ground truth trajectory with the true dynamics.  When there is an approximation error on the one-step dynamics being modeled, the trajectory deviates significantly (black dashed line).  On the other hand, when the same amount of error is added to each term of signature, the recovered path (blue solid line) is less erroneous.  This is because signatures capture the entire (future) trajectory globally (see Appendix \ref{app:errorexplosion} for details).

\paragraph{Computations of signatures:}
We compute the signatures through the kernel computations using an approach in \citep{salvi2021signature}.
We emphasize that the discrete points we use for computing the signatures of (past/future) trajectories are not regarded as waypoints, and their placement has negligible effects on the signatures as long as they sufficiently maintain the ``shape'' of the trajectories.

\section{Experimental results}
\label{sec:experiments}
We conduct experiments on both simple classical examples and simulated robotic tasks.  We also present the relation of a specific instance of {\ctrlname} to the classical integral control to show its robustness against disturbance.  For more experiment details, see Appendix \ref{app:exp}.
\paragraph{Simple point-mass:}
We use double-integrator point-mass as a simple example to demonstrate our approach (as shown in Figure \ref{fig:topfigs} Left). In this task, a point-mass is controlled to reach a goal position while avoiding the obstacle in the scene. We first generate a collision-free path via RRT\textsuperscript{*} \citep{karaman2011sampling} which is suboptimal in terms of tracking speed (taking $72$ seconds).  We then employ our signature MPC to follow this reference by producing the actions (\textit{i.e.} accelerations), resulting in a better tracking speed (taking around $30$ seconds) while matching the trajectory shape.
\paragraph{Two-mass, spring, damper system:}
To view the integral control \citep{Khalil:1173048} within the scope of our proposed {\ctrlname} formulation, recall a second depth signature term corresponding to the surface surrounded by the time axis, each of the state dimension, and the path, represents each dimension of the integrated error.
In addition, a first depth signature term with the initial state $x_0$ represents the immediate error, and the cost $c$ may be a weighted sum of these two.
To test this, we consider two-mass, spring, damper system; the disturbance is assumed zero for planning, but is $0.03$ for executions.
We compare signature MPC where the cost is the squared Euclidean distance between the signatures of the reference and the generated paths with truncation upto the first and the second depth.  The results of position evolutions of the two masses are plotted in Figure \ref{fig:robotics} Top.  As expected, the black line (first depth) does not converge to zero error state while the blue line (second depth) does (see Appendix \ref{app:exp}).
If we further include other signature terms, {\ctrlname} effectively becomes a generalization of integral controls, which we will see for robotic arm experiments later.
\paragraph{Ant path tracking:}
In this task, an Ant robot is controlled to follow a ``$2$''-shape reference path. The action being optimized is the torque applied to each joint actuator. We test the tracking performances of {\ctrlname} and baseline standard MPC on this problem.  Also, we run soft actor-critic (SAC) \citep{haarnoja2018soft} RL algorithm where the state is augmented with time index to manage waypoints and the reward (negative cost) is the same as that of the baseline MPC.  For the baseline MPC and SAC, we equally distribute $880$ waypoints to be tracked along the path and time stamp is determined by equally dividing the total tracking time achieved by {\ctrlname}. 
Table \ref{tab:path_ant} compares the mean/variance of deviation (measured by distance in meter) from the closest of $2000$ points over the reference, and
Figure \ref{fig:robotics} (Bottom Left) shows the resulting behaviors of MPCs, showing significant advantages of our method in terms of tracking accuracy.  The performance of SAC RL is insufficient as we have no access to sophisticated waypoints over joints (see \citet{peng2018deepmimic} for the discussion). When more time steps are used, baseline MPC becomes a bit better.  Note our method can tune the trade-off between accuracy and progress through regularizer.
\paragraph{Robotic manipulator path tracking:}
In this task, a robotic manipulator is controlled to track an end-effector reference path. Similar to the Ant task, $270$ waypoints are equally sampled along the reference path for the baseline MPC method and SAC RL to track. To show robustness of {\ctrlname} against unknown disturbance (torque: ${\rm N\cdot m}$), we test different scales of disturbance force applied to each joint of the arm. The means/variances of the tracking deviation of the three approaches for selected cases are reported in Table \ref{tab:path_franka} and the tracking paths are visualized in Figure \ref{fig:robotics} (Bottom Right).  For all cases, our {\ctrlname} outperforms the baseline MPC and SAC RL, especially the difference becomes much clearer when the disturbance becomes larger.  This is because the signature MPC is insensitive to waypoint designs but rather depends on the ``distance'' between the target and the rollout paths in the signature space, making the tracking speed adaptive.

\begin{table}[t]
	\begin{center}
		\caption{Selected results on path following with an ant robot model.  Comparing {\ctrlname}, and baseline MPC and SAC RL with equally assigned waypoints.  ``Slow'' means it uses more time steps than our {\ctrlname} for reaching the goal.}
		\begin{tabular}{ccccc} 
			\toprule
			& \multicolumn{2}{c}{Deviation (distance) from reference} &   & \\
			\cline{2-3}\vspace{-0.7em}\\
			 & Mean ($10^{-2}{\rm m}$) & Variance ($10^{-2}{\rm m}$) & \# waypoints & reaching goal\\
			 \hline\vspace{-0.7em}\\
			{\ctrlname} & $\mathbf{21.266}$ & $\mathbf{6.568}$& N/A &  success \\
			baseline MPC  &  $102.877$ & $182.988$&  $880$& fail \\
			SAC RL  &  $446.242$ & $281.412$&  $880$& fail \\
			\hline\vspace{-0.7em}\\
			baseline MPC (slow) &  $10.718$ & $5.616$&  $1500$& success \\
			baseline MPC (slower) &  $1.866$ & $0.026$& $2500$& success \\
			\bottomrule
		\end{tabular}
	\vspace{-1.5em}
		\label{tab:path_ant}
	\end{center}
\end{table}
\begin{table}[t]
	\begin{center}
		\caption{Results on path tracking with a robotic manipulator end-effector.  Comparing {\ctrlname}, and baseline MPC and SAC RL with equally assigned waypoints under unknown fixed disturbance.}
		\begin{tabular}{cccc} 
			\toprule
			& &  \multicolumn{2}{c}{Deviation (distance) from reference}   \\
			\cline{3-4}\vspace{-0.5em}\\
			& Disturbance (${\rm N\cdot m}$) & Mean ($10^{-2}{\rm m}$) & Variance ($10^{-2}{\rm m}$) \\
			\hline\vspace{-0.7em}\\
			& $+30$ & $\mathbf{1.674}$ & $\mathbf{0.002}$  \\
			{\ctrlname} & $\pm0$ & $\mathbf{0.458}$ & $\mathbf{0.001}$  \\
			& $-30$ &  $\mathbf{1.255}$ & $\mathbf{0.002}$   \\
			\hline\vspace{-0.7em}\\
			& $+30$ &  $2.648$ & $0.015$   \\
			baseline MPC & $\pm0$ & $0.612$ & $0.007$   \\
			& $-30$ &  $5.803$ & $0.209$    \\
			\hline\vspace{-0.7em}\\
			& $+30$ &  $15.669$ & $0.405$   \\
			SAC RL & $\pm0$ & $3.853$ & $0.052$   \\
			& $-30$ &  $16.019$ & $0.743$    \\
			\bottomrule
		\end{tabular}
		\vspace{-1.5em}
		\label{tab:path_franka}
	\end{center}
\end{table}
\begin{figure}[t]
	\begin{center}
		\hspace{-0.5em}
		\includegraphics[clip,width=0.93\textwidth]{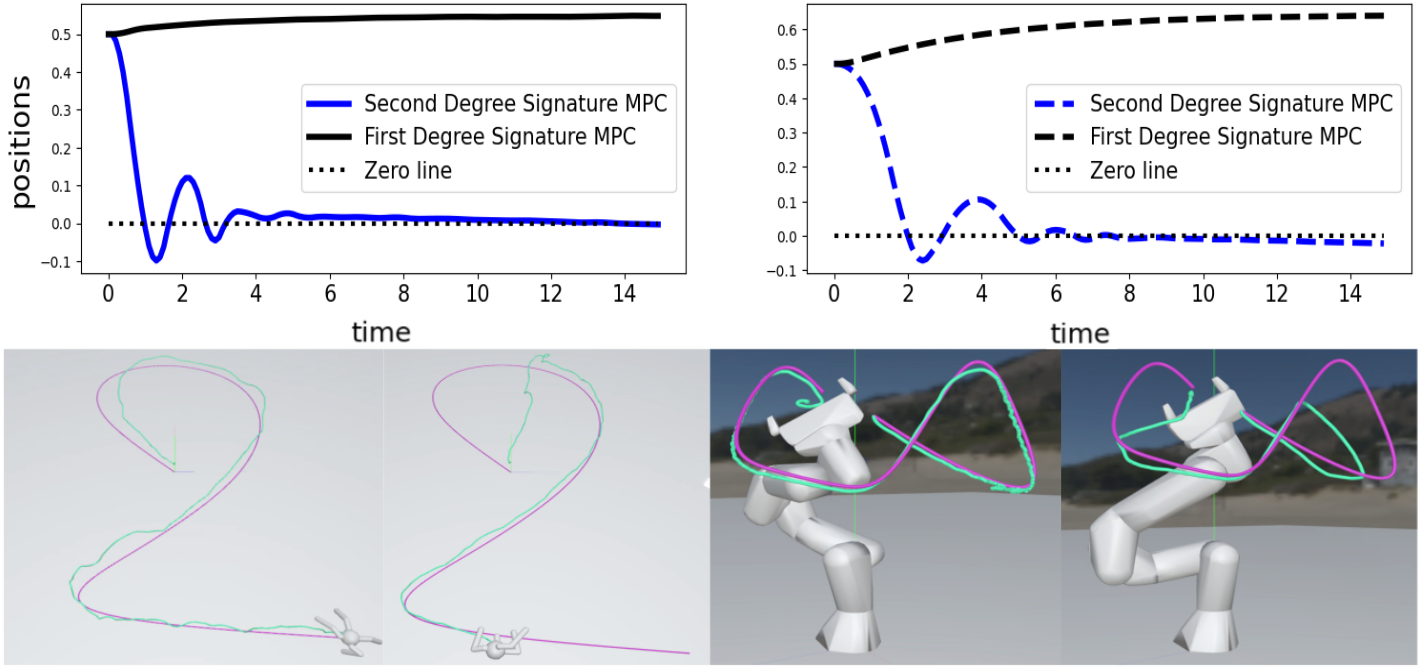}
	\end{center}
\vspace{-1em}
	\caption{Top (two-mass spring, damper system): the plots show the evolutions of positions of two masses for signature MPC with/without second depth signature terms, showing how signature MPC reduces to integral control.  Down:  (Left two; Ant) tracking behaviors of {\ctrlname} (left) and baseline MPC (right) for the same reaching time, where green lines are the executed trajectories. (Right two; Robotic arm): tracking behaviors of {\ctrlname} (left) and baseline MPC (right) under disturbance $-30$.} 
	\label{fig:robotics}
\end{figure}

\section{Discussions}
This work presented signature control, a novel framework that generalizes value-based dynamic programming to reason over entire paths through {\optname}.
There are many promising avenues for future work (which is relevant to the current limitations of our work), such as developing a more complete theoretical understanding of guarantees provided by the signature control framework, and developing additional RL algorithms that inherit the benefits of our signature control framework.
While we emphasize that the run times of MPC algorithms used in this work for {\ctrlname} and baseline are almost the same, adopting some of the state-of-the-art MPC algorithm running in real-time to our signature MPC is an important future work.

\clearpage
\acks{We thank the anonymous reviewers for improving this work.  Motoya Ohnishi thanks Dieter Fox, Rowland O'Flaherty and Yashraj Narang for helpful discussions.}

\clearpage

\appendix
\section{Tensor algebra}
\label{app:tensoralgebra}
We present the definition of tensor algebra here.  In the main text, we used some of the notations, including $T((\mathcal{X}))$, defined below.
\begin{definition}[Tensor algebra]
	Let $\mathcal{X}$ be a Banach space.  The space of formal
	power series over $\mathcal{X}$ is defined by
	\begin{align}
	T((\mathcal{X})):=\prod_{k=0}^\infty \mathcal{X}^{\otimes k}, \nonumber
	\end{align}
	where $\mathcal{X}^{\otimes k}$ is the tensor product of $k$ vector spaces ($\mathcal{X}$s).
	For $A=(a_0,a_1,\ldots)$, $B=(b_0,b_1,\ldots)\in T((\mathcal{X}))$, the addition $+$ and multiplication $\otimes$ are defined by
	\begin{align}
	A+B=(a_0+b_0,a_1+b_1,\ldots),~A\otimes B =(c_0,c_1,\ldots)~,~c_k = \sum_{\ell=0}^k a_{\ell}\otimes b_{k-\ell}.\nonumber
	\end{align}
	Also, $\lambda A=(\lambda a_0, \lambda a_1,\ldots)$ for any $\lambda\in\R$.
	The truncated tensor algebra for a positive integer $m$ is defined by the quotient $T^m(\mathcal{X})$
	\begin{align}
	T^m(\mathcal{X}):=T((\mathcal{X}))/T_m,\nonumber
	\end{align}
	where
	\begin{align}
	T_m=\left\{A=(a_0,a_1,\ldots)\in T((\mathcal{X})): a_0=a_1=\ldots =a_m=0\right\}.\nonumber
	\end{align}
\end{definition}
The equation \eqref{eq:truncation} is immediate from the definition of the multiplication $\otimes$ of the formal power series, implying that $S(X)\otimes S(Y)$ upto depth $m$ only depends on the signatures of $X$ and $Y$ upto depth $m$.

\section{Signature kernel}
\label{app:sigkernel}
We used signature kernels for computing the metric (or cost) for MPC problems.
A path signature is a collection of infinitely many real values, and in general, the computations of inner product of a pair of signatures in the space of formal polynomials are intractable.  Although it is still not the exact computation in general, the work \citep{salvi2021signature} utilized a Goursat PDE to efficiently and approximately compute the inner product.
The signature kernel is given below:
\begin{definition}[Signature kernel \citep{salvi2021signature}]
	Let $\mathcal{X}$ be a $d$-dimensional space with canonical basis $\{e_1,\ldots,e_d\}$ equipped with an inner product $\inpro{\cdot,\cdot}_{\mathcal{X}}$.
	Let $T(\mathcal{X}):=\bigoplus_{k=0}^\infty \mathcal{X}^{\otimes k}$ be the space of formal polynomials endowed with the same operators $+$ and $\otimes$ as $T((\mathcal{X}))$, and with the inner product
	\begin{align}
	\inpro{A,B}:=\sum_{k=0}^\infty\inpro{a_k,b_k}_{\mathcal{X}^{\otimes k}},\nonumber
	\end{align}
	where $\inpro{\cdot,\cdot}_{\mathcal{X}^{\otimes k}}$ is defined on basis elements $\{e_{i_1}\otimes\ldots\otimes e_{i_k}:(i_1,\ldots,i_k)\in\{1,\ldots,d\}^k\}$ as
	\begin{align*}
	\inpro{e_{i_1}\otimes\ldots\otimes e_{i_k},e_{j_1}\otimes\ldots\otimes e_{j_k}}_{\mathcal{X}^{\otimes k}}=\inpro{e_{i_1},e_{j_1}}_{\mathcal{X}}\ldots\inpro{e_{i_k},e_{j_k}}_{\mathcal{X}}.
	\end{align*}
	Let $\overline{T(\mathcal{X})}$ be the completion of $T(\mathcal{X})$, and $\mathcal{H}:=(\overline{T(\mathcal{X})},\inpro{\cdot,\cdot})$ is a Hilbert space.
	The signature kernel $K:\Sigma\times\Sigma\to \R$ is defined by
	\begin{align}
	K(X,Y):=\inpro{S(X), S(Y)},\nonumber
	\end{align}
	for $X$ and $Y$ such that $S(X),S(Y)\in\overline{T(\mathcal{X})}$.
\end{definition}

The overall properties of path signatures mentioned in Section \ref{sec:pathsig} are illustrated in Figure \ref{fig:signature_illust}.
\begin{figure}[t]
	\begin{center}
		\includegraphics[clip,width=\textwidth]{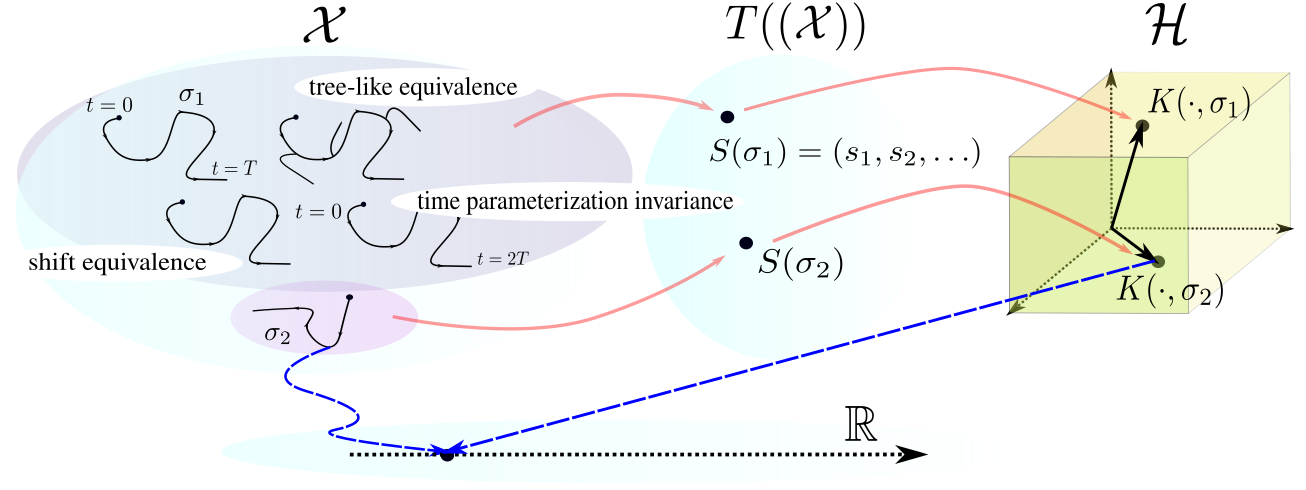}
	\end{center}
	\caption{Illustrations of properties of path signatures.  Paths in the space $\mathcal{X}\subset\R^d$ are uniquely transformed into signatures upto tree-like equivalence.  One can construct an RKHS where the kernel represents the inner product between two signatures.  This kernel is a universal kernel.} 
	\label{fig:signature_illust}
\end{figure}

\section{Detailed problem settings}
\label{app:settings}
Here, we present the problem settings based on RDSs more carefully.
First, we define the RDSs mathematically.
\begin{definition}[Random dynamical systems \citep{arnold1995random}]
	Let $(\Omega,P)$ be a probability space and $\{\theta_t\}_{t\in\mathbb{T}}$, where $\theta_t:\Omega\to\Omega$, $(\theta_t)_* P=P$, $\theta_0={\rm id}_\Omega$ and $\theta_s\circ\theta_t=\theta_{s+t}$ for all $s,t\in\mathbb{T}$, is a semi-group of measure preserving maps.
	Define a random dynamical system (RDS) by
	\begin{align*}
	\Phi:\mathbb{T}\times\Omega\times\R^{d}\to\R^{d},\nonumber
	\end{align*}
	where
	\begin{align*}
	\Phi(0,\omega,x)=x,~~\Phi(t+s,\omega,x)=\Phi(t,\theta_s(\omega),\Phi(s,\omega,x)),~~\forall x\in\R^{d}.\nonumber
	\end{align*}
\end{definition}
An RDS is illustrated in Figure \ref{fig:rds} as portrayed in~\citep{arnold1995random, ghil2008climate}.

In this work, in order to fully appreciate the generality of our framework, we view the policy $\pi\in\Pi$ as some {\em parameter} that defines an RDS.
In particular, for simplicity, we assume that the RDS generated by a policy $\pi\in\Pi$ shares the same sample space $\Omega$, and is denoted by $\Phi_\pi$.
Roughly speaking, this means that the {\em noise mechanism} of RDSs is the same for all policies (not necessarily the same probability distribution).
Also, the action $a\in\mathcal{A}$ is for constraining the event of downstream trajectories of RDS to be of some subset of $\Omega$, which we define $\Omega_a\subset\Omega$.  Further, we suppose that $\bigsqcup_{a\in\mathcal{A}}\Omega_a=\Omega$ (see Appendix \ref{app:pathtogo_S} for details).

\begin{figure}[t]
	\begin{center}
		\includegraphics[clip,width=0.8\textwidth]{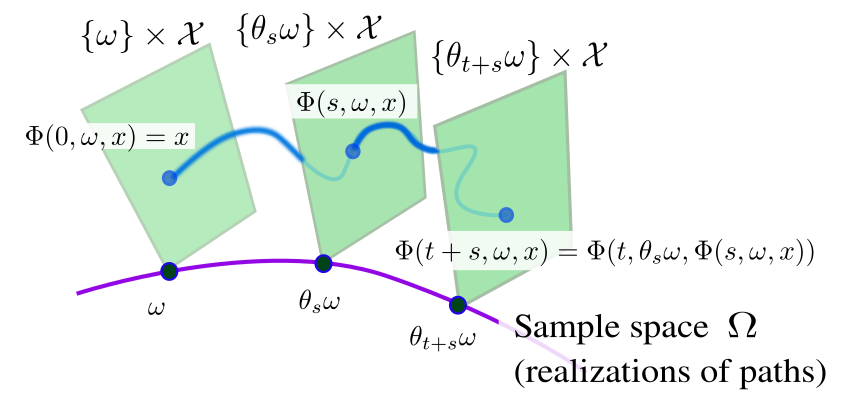}
	\end{center}
	\caption{Random dynamical system consists of a model of the {\em noise} and the {\em physical} phase space.  For each realization $\omega$, and initial state $x$, the RDS is the flow over sample space and phase space.  The illustration is inspired by \citet{arnold1995random, ghil2008climate}.} 
	\label{fig:rds}
\end{figure}

\paragraph{{\ctrlname}:}
We (re)define {\ctrlname} carefully.
	Let $T\in\mathbb{T}$ be a time horizon, and define the map $\sigma_{\pi,F}:\mathcal{X} \times [0,T]\times\Omega\to \mathcal{X}^{[0,T]}$ by
	\begin{align}
	&\left[\sigma_{\pi,F}\left(x_0, T, \omega\right)\right](t)\nonumber\\
	&=\begin{cases}
	\Phi_{\pi}(t, \omega, x_0) \hfill (\forall t\in\mathbb{T}\cap[0,T])\\
	\left[F\left(\Phi_{\pi}(\floor{t}, \omega, x_0), \Phi_{\pi}(\floor{t}+1, \omega, x_0)\right)\right](t-\floor{t})~ \hfill (\forall t\in[0,T]\setminus\mathbb{T})
	\end{cases},\nonumber
	\end{align}
	where $F:\mathcal{X}\times\mathcal{X}\to \mathcal{X}^{[0,1]}$ is an interpolation between two given points.
	
	Also, let {\em practical} partition $\mathcal{D}=\{0=t_0<t_1<\ldots<t_{k-1}=T\}$ of the time interval $[0,T]$ be such that there exists a sequence of actions $\{a_1,a_2,\ldots\}$ over that partition, i.e., $t_0$, $t_1$, $t_2$... represent $0$, $t_{a_1}$, $t_{a_1}+t_{a_2}$,....
	
	Let $\mathcal{T}:\mathcal{X}^{[0,T]}\to \Sigma$ be some (possibly nonlinear) transformation such that, for any practical partition $\mathcal{D}$ of the time interval $[0,T]$, for any $j\in[k-2]$, a pair of feasible paths $\sigma_1\in\mathcal{X}^{[t_j,t_{j+1}]},\sigma_2\in\mathcal{X}^{[t_{j+1},t_{j+2}]}$ satisfies
	\begin{align}
	\sigma\equiv_\sigma\sigma_1*\sigma_2\Longrightarrow\mathcal{T}(\sigma) \equiv_\sigma\mathcal{T}(\sigma_1)*\mathcal{T}(\sigma_2),\nonumber
	\end{align}
	where $*$ denotes the concatenation of paths and $\equiv_\sigma$ is the tree-like equivalence relation.
	The interpolation $F$ and the transformation $\mathcal{T}$ are illustrated in Figure \ref{fig:transformation}.

	Then, the signature-based optimal control problem reads
	\begin{align}
	{\rm Find}~\pi^*~{\rm s.t.}~\pi^*\in\underset{\pi\in\Pi}{\argmin}~ c\left(\Exp_\Omega\left[ S\left(\sigma^\mathcal{T}_{\pi,F}\left(x_0, T, \omega\right)\right)\right]\right), \nonumber
	\end{align}
	where $c:T((\mathcal{X}))\to \R_{\geq 0}$ is a cost function on the space of formal power series, and 
	$\sigma^\mathcal{T}_{\pi,F}:=\mathcal{T}\circ\sigma_{\pi,F}$; we use, for simplicity, $\sigma_{\pi}$ instead of $\sigma^{\mathcal{T}}_{\pi,F}$ when $F$ and $\mathcal{T}$ are clear in the contexts.
\begin{figure}[t]
	\begin{center}
		\hspace{-0.5em}
		\includegraphics[clip,width=\textwidth]{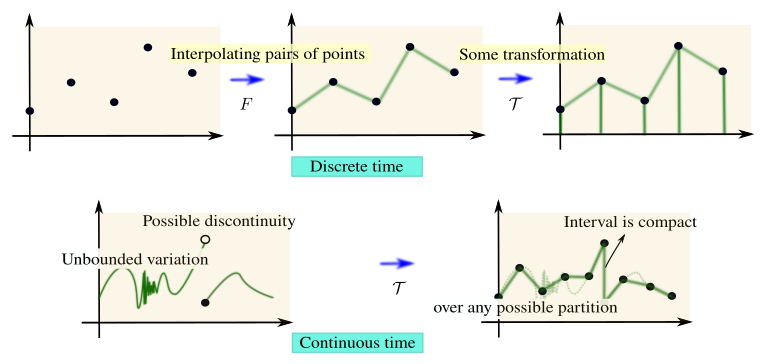}
	\end{center}
	\caption{Top (discrete-time): interpolation for pairs of points will produce a path and one may possibly transform them.  Down (continuous-time): a path generated by an RDS may include discontinuity or unbounded variation.  Transformation makes it to be a path for which the signatures are defined.  Discontinuous points may be interpolated (a path is defined over a compact interval), and unbounded variations may be overcome by down-sampling over points of any practical partition.} 
	\label{fig:transformation}
\end{figure}

\section{Details of path-to-go and $S$-function formulations}
\label{app:pathtogo_S}
Let the projection on $\mathcal{A}\times\mathcal{B}$ over $\mathcal{A}$ is denoted by $P_{\mathcal{A}}:(a,b)\mapsto a$.
Without loss of generality, we assume that $\mathcal{X}=\mathcal{Y}\times\mathcal{O}$, and that
$P_{\mathcal{O}}[\Phi_\pi(t,\omega,x)]$ is known when $P_{\mathcal{Y}}[x]$, $t$, and $\omega$ are given (i.e., $\mathcal{O}$ is the space of observations).
Given $T\in\mathbb{T}$, define the {\em path-to-go} function $\mathcal{P}^\pi$ on $\mathcal{Y}\times \mathbb{T}$ over $\Sigma^\Omega$ by
\begin{align}
\mathcal{P}^\pi(y,t)(\omega) = \sigma_{\pi}(x, T-t, \omega),~\forall t\in[0,T].\nonumber
\end{align}

Formal definition of Markov property used in this work is given below.
\begin{asm}[Markov property \citep{arnold1995random}]
	\label{asm:markov}
	For each action $a\in\mathcal{A}$, there exists $t_a\geq0$ such that
	the RDS $\Phi_\pi$ satisfies the Markov property, i.e., for each $B\in2^\mathcal{X}$, $a\in\mathcal{A}$, and $s\geq 0$,
	\begin{align}
	\Pr\left[\Phi(t_a+s, \omega, z)\in B\vert \Phi(t_a, \omega, z) = x, \omega\in\Omega_a\right]=\Pr\left[\omega\vert \Phi(s, \omega, x) \in B\right].  \label{markov1}
	\end{align}
\end{asm}
\begin{remark}
	When $\mu(\Omega_a)=0$, we can still assign the probability of the right hand side of \eqref{markov1} to its left hand side; however, one can define arbitrarily the probability of a future path conditioned on $\omega\in\Omega_a$ and it does not harm the current arguments for now. 
\end{remark}
Now, the path-to-go formulation is reexpressed by
\begin{align}
\mathcal{P}^\pi(y,t)(\omega) = \mathcal{P}_a^\pi(y,t)(\omega)*\mathcal{P}^\pi(y^+,t+t_a)(\theta_{t_a}\omega),\nonumber
\end{align}
where
\begin{align}
&\mathcal{P}_a^\pi(y,t)(\omega) := \sigma_{\pi}(x, \min\{T-t, t_a\}, \omega),\nonumber\\
&y^+=P_{\mathcal{Y}}\Phi(t_a,\omega,x).\nonumber
\end{align}
Then, Theorem \ref{thm:main} is proved as follows:
\begin{proof}[Proof of Theorem \ref{thm:main}]
Using the Chen's identity (first equality), tower rule (second equality), Assumption \ref{asm:markov} (third and forth equalities), and the properties of tensor product and the transformation (first and third equalities) we obtain
\begin{align}
&\mathcal{S}^{\pi}(a,y,t) \nonumber\\
&= \Exp\left[ S(\mathcal{P}^\pi_a(y,t)(\omega))\otimes A\vert \omega\in\Omega_a\right]
= \Exp\left[\Exp\left[ S(\mathcal{P}^\pi_a(y,t)(\omega))\otimes A\vert \mathcal{P}^\pi_a(y,t)(\omega),\omega\in\Omega_a\right]\vert\omega\in\Omega_a\right]\nonumber\\
&= \Exp\left[ S(\mathcal{P}^\pi_a(y,t)(\omega))\otimes \Exp\left[A\vert \mathcal{P}^\pi_a(y,t)(\omega)\right]\vert\omega\in\Omega_a\right]\nonumber\\
&=\Exp\left[  S(\mathcal{P}^\pi_a(y,t)(\omega))\otimes \mathcal{ES}^{\pi}(y^+,t+t_a)\vert \omega\in\Omega_a\right]\nonumber
\end{align}
where
\begin{align}
A:=S(\mathcal{P}^\pi(y^+,t+t_a)(\theta_{t_a}\omega)),\nonumber
\end{align}
and the {\em expected $S$-function} $\mathcal{ES}^{\pi}:\mathcal{Y}\times \mathbb{T}\to T((\mathcal{X}))$ is defined by
\begin{align}
\mathcal{ES}^{\pi}(y,t):=\Exp_\Omega\left[\mathcal{S}^{\pi}(b(\omega),y,t)\right],\nonumber
\end{align}
and $b:\Omega\to\mathcal{A}$ is defined by $b(\omega)=a$ for $\omega\in\Omega_a$.
\end{proof}

\section{Details on reduction to Bellman equations}
\label{app:bellmanreduction}
Here, we carefully show how {\optname} reduces to the Bellman expectation equation:
\begin{align}
Q^{\pi}(a,z,t)= \Exp_\Omega\left[r(a,z,\omega)+\gamma V^{\pi}(z^+,t+1)\big\vert \omega\in\Omega_a\right]. \label{eq:bellmanopt_app}
\end{align}
We suppose $\mathcal{X}:=\mathcal{Z}\times\R_{\geq 0}\times \R_{\geq 0}\subset\R^d$, for $d>2$, is the state space augmented by the instantaneous reward and time, and suppose $t_a=1$ for all $a\in\mathcal{A}$.
And note $\mathcal{Z}\times\R_{\geq 0}$ which is the original state and time is now the space $\mathcal{Y}$.
Let $m=2$.  
We define the interpolation $F$, the transformation $\mathcal{T}$, and the cost function $c$ so that
\begin{align}
	\forall &x,w\in\mathcal{X}~{\rm s.t.}~x_{d-1:d}=[r_x,t_x],w_{d-1:d}=[r_w,t_x+1]:\nonumber\\
	&F(x,w)(\tau)_{d-1:d}=\begin{cases}
		[r_x+2\tau\cdot(r_w-r_x),t_x]~\hfill(\tau\in[0,0.5])\\
		[r_w,t_x+2(\tau-0.5)]~\hfill(\tau\in(0.5,1])
	\end{cases},\nonumber\\
	&
	\forall \sigma:[0,t]\to\mathcal{X}~{\rm s.t.}~t\in\Z_{>0}:\nonumber\\
	&~\mathcal{T}(\sigma)(\tau+s)=\begin{cases}
		&\left[2\tau\gamma^{[\sigma(\tau)]_{d}}[\sigma(\tau)]_{d-1}, [\sigma(\tau)]_{d}\right],~\hfill(\tau\in[0,0.5])\\
		&\left[\gamma^{\xi_1(\tau)}[\sigma(\tau)]_{d-1},[\sigma(\tau)]_{d}\right],~\hfill(\tau\in[0.5,t-0.5))\\
		&\left[\gamma^{\floor{[\sigma(\tau)]_{d}}}[\sigma(\tau)]_{d-1},\xi_2(\tau)\right],~\hfill(\tau\in[t-0.5,t-0.25))\\
		&\left[4(t-\tau)\gamma^{[\sigma(t)]_{d}-1}[\sigma(t)]_{d-1},[\sigma(t)]_{d}\right],~\hfill(\tau\in[t-0.25,t])
	\end{cases}
	\nonumber\\
	&c(s)=-s_{1,2},\nonumber
\end{align}
where 
\begin{align}
	\xi_1(\tau)=\begin{cases}
		&2(\tau-\floor{\tau})+\floor{[\sigma(\tau)]_{d}}-1,\hfill(\tau-\floor{\tau}\leq 0.5)\\
		&\floor{[\sigma(\tau)]_{d}},\hfill(\tau-\floor{\tau}> 0.5), \nonumber
	\end{cases}
\end{align}
and $\xi_2(\tau)=2[\sigma(\tau)]_{d}-\floor{[\sigma(\tau)]_{d}}$.

Then {\optname} reduces to the Bellman equation \eqref{eq:bellmanopt_app} by
\begin{align}
	&c(\mathcal{S}_2^{\pi}(a,y,t)) 
	\nonumber\\
	&=
	c\left(\Exp\left[S_2(\mathcal{P}^{\pi}_a(y,t)(\omega))\otimes_2 \mathcal{ES}^{\pi}_2(y^+,t+1)\vert\omega\in\Omega_a\right]\right)\nonumber\\
	&=\Exp\left[-S_{1,2}(\mathcal{P}^{\pi}_a(y,t)(\omega))+ c\left(\mathcal{ES}^{\pi}_2(y^+,t+1)\right)\vert\omega\in\Omega_a\right]\nonumber
\end{align}
and
\begin{align}
	&c\left(\mathcal{S}_2^{\pi}(a,y,t)\right)= - \gamma^tQ^{\pi}(a,z,t),~c\left(\mathcal{ES}^{\pi}_2(y,t)\right)=-\gamma^tV^\pi(z,t),~S_{1,2}(\mathcal{P}^{\pi}_a(y,t)(\omega))=\gamma^tr(a,z,\omega),\nonumber
\end{align}
where $y=[z;t]$, and it reduces to the Bellman expectation equation.

\paragraph{Optimality:}
Next, we briefly cover optimality; i.e., we present Chen optimality equation.
Optimality is tricky for Chen formulation because some relation between policy and action is required in addition to the Markov assumption.
To obtain our Chen optimality, we make the following assumption.
\begin{asm}[Relations between policy and action]
	\label{asm:optimality}
	For any policy $\pi\in\Pi$, state $x\in\mathcal{X}$, time $t\in\mathbb{T}\cap(0,T]$, and an action $a\in\mathcal{A}$, there exists a policy $\pi'\in\Pi$ such that
	\begin{align*}
	&\Exp_\Omega\left[\mathcal{S}^{\pi'}(b(\omega),P_\mathcal{Y}(x_0),0)\right]
	\\
	&=\Exp_\Omega\left[\mathcal{S}^{\pi}(b(\omega),P_\mathcal{Y}(x_0),0)\big\vert \left(\Phi_{\pi}(t,\omega,x_0)=x\right)\Longrightarrow \left(\theta_{t}\omega\in\Omega_a\right)\right].
	\end{align*}
	Also, there exists $a\in\mathcal{A}$ such that $\pi'=\pi$. 
\end{asm}
Given a cost function $c:T((\mathcal{X}))\to\R_{\geq0}$, suppose $\pi^*$ satisfies
\begin{align}
c\left(\mathcal{ES}^{\pi^*}(P_\mathcal{Y}(x_0),0)\right)=\inf_{\pi\in\Pi} c\left(\mathcal{ES}^{\pi}(P_\mathcal{Y}(x_0),0)\right).\nonumber
\end{align}
Then, under Assumption \ref{asm:optimality}, Chen optimality reads
\begin{align}
&c\left(\Exp_\Omega\left[\mathcal{S}^{\pi^*}(b(\omega),P_\mathcal{Y}(x_0),0)\right]\right)\nonumber \\
&=\min_{a\in\mathcal{A}}c\left(\Exp_\Omega\left[\mathcal{S}^{\pi^*}(b(\omega),P_\mathcal{Y}(x_0),0)\big\vert \left(\Phi_{\pi^*}(t,\omega,x_0)=x\right)\Longrightarrow \left(\theta_t\omega\in\Omega_a\right)\right]\right), \label{eq:chenopt}
\end{align}
when the right hand side is defined.

Therefore, with the same settings as the case of reduction to the Bellman expectation equation (i.e., MDP, the form of the cost $c$ etc.), we have
\begin{align*}
&c\left(\Exp_\Omega\left[\mathcal{S}^{\pi^*}(b(\omega),P_\mathcal{Y}(x_0),0)\big\vert \sigma_{\pi^*}(x_0,t,\omega)(t)=x \right]\right)\\
&=\min_{a\in\mathcal{A}}c\left(\Exp_\Omega\left[\mathcal{S}^{\pi^*}(b(\omega),P_\mathcal{Y}(x_0),0)\big\vert \left(\sigma_{\pi^*}(x_0,t,\omega)(t)=x\right)\wedge \left(\theta_t\omega\in\Omega_a\right)\right]\right),
\end{align*}
and we obtain
\begin{align*}
&c\left(\Exp_\Omega\left[\mathcal{S}^{\pi^*}(b(\omega),P_\mathcal{Y}(x_0),0)\big\vert \sigma_{\pi^*}(x_0,t,\omega)(t)=x\right]\right)
\\
&=c\left(\Exp_\Omega\left[S(\sigma_{\pi^*}(x_0,t,\omega))\otimes\mathcal{S}^{\pi^*}(b(\theta_t\omega),P_\mathcal{Y}(x),t)\big\vert \sigma_{\pi^*}(x_0,t,\omega)(t)=x\right]\right)
\\
&=c\left(\Exp_\Omega\left[S(\sigma_{\pi^*}(x_0,t,\omega))\big\vert \sigma_{\pi^*}(x_0,t,\omega)(t)=x\right]\right) + c\left(\mathcal{ES}_2^{\pi^*}(P_\mathcal{Y}(x),t)\right)\\
&=
c\left(\Exp_\Omega\left[S(\sigma_{\pi^*}(x_0,t,\omega))\big\vert \sigma_{\pi^*}(x_0,t,\omega)(t)=x\right]\right)+\\
&
\min_{a\in\mathcal{A}}\Exp\left[-S_{1,2}(\mathcal{P}^{\pi^*}_{a}(P_\mathcal{Y}(x),t)(\omega))+ c\left(\mathcal{ES}^{\pi^*}_2(P_\mathcal{Y}(x^+),t+1)\right)\vert\omega\in\Omega_{a}\right],
\end{align*}
from which it follows that
\begin{align*}
V^{\pi^*}(z,t)=\max_{a\in\mathcal{A}}\Exp\left[r(a,z,\omega)+V^{\pi^*}(z^+,t+1)\right].
\end{align*}

\section{Infinite time interval extension of Chen formulation}
\label{app:infinite_extension}
Extending {\optname} to infinite time interval requires an argument of the extended real line.
Let $[0,\infty]\subset\overline{\R}$ be the subset of the extended real line $\overline{\R}$.
Now, we make the following assumption:
\begin{asm}
	For any policy $\pi\in\Pi$, initial state $x_0\in\mathcal{X}$, and realization $\omega\in\Omega$, the limit
	\begin{align}
	\lim_{\tau\to\infty}\sigma^\mathcal{T}_{\pi,F}(x_0,\tau,\omega)(\tau)\nonumber
	\end{align}
	exists and the path $\sigma^\mathcal{T}_{\pi,F}$ can be continuously extended to $[0,\infty]$.
	In addition, there exists a homeomorphism $\psi:[0,\infty]\to[0,T]$ for some $T>0$ such that the path $\sigma_\pi:\mathcal{X}\times\Omega\to\Sigma$ is properly defined by
	\begin{align}
	\sigma_\pi(x_0,\omega)(t)=\left[\lim_{\tau\to\infty}\sigma^\mathcal{T}_{\pi,F}(x_0,\tau,\omega)\right](\psi^{-1}(t)),~~\forall x_0\in\mathcal{X},~\omega\in\Omega,~t\in[0,T],\nonumber
	\end{align}
	to be an element of $\Sigma$.
\end{asm}
Now, we redefine (for avoiding introducing more notations)
\begin{align}
	\mathcal{P}^\pi(y)(\omega)=\sigma_\pi(x,\omega),\nonumber
\end{align}
and $\mathcal{P}^\pi_a(y,0)$ by taking $T\rightarrow \infty$; and we effectively drop off the dependence on $t$ for $\mathcal{S}^{\pi}$ and $\mathcal{ES}^{\pi}$.
{\optname} becomes
\begin{align}
	&\mathcal{S}^{\pi}(a,y) \nonumber\\
	&= \Exp\left[ S(\mathcal{P}^\pi_a(y,0)(\omega))\otimes A\vert \omega\in\Omega_a\right]
	= \Exp\left[\Exp\left[ S(\mathcal{P}^\pi_a(y,0)(\omega))\otimes A\vert \mathcal{P}^\pi_a(y,0)(\omega),\omega\in\Omega_a\right]\vert\omega\in\Omega_a\right]\nonumber\\
	&= \Exp\left[ S(\mathcal{P}^\pi_a(y,0)(\omega))\otimes \Exp\left[A\vert \mathcal{P}^\pi_a(y,0)(\omega)\right]\vert\omega\in\Omega_a\right]\nonumber\\
	&=\Exp\left[  S(\mathcal{P}^\pi_a(y,0)(\omega))\otimes \mathcal{ES}^{\pi}(y^+)\vert \omega\in\Omega_a\right]\nonumber
\end{align}
where $A$ is redefined by
\begin{align}
	A:=S(\mathcal{P}^\pi(y^+)(\theta_{t_a}\omega)).\nonumber
\end{align}

To see how infinite horizon {\optname} reduces to infintie horizon Bellman expectation equation, note that one cannot consider time axis now because it diverges and signatures are no longer defined.
Therefore, instead we consider $\mathcal{Y}$ involving the space of discount factor and $\mathcal{O}$ to be the space of discounted cumulative reward, i.e., $\mathcal{Y}$ involves $1,\gamma,\gamma^2,\ldots$ and $\mathcal{O}$ is given by $r_0,r_0+\gamma r_1,r_0+\gamma r_1+\gamma^2 r_2,\ldots$.
Extracting the path over discounted cumulative reward, and transforming it so that it starts from $0$, a first depth signature (displacement) corresponds to the value-to-go.
Here, because the extracted signature term only depends on $1,\gamma,\gamma^2,\ldots$ in a simple way, we could safely store the $\mathcal{S}$ function that has no dependence on this discount factor variable.
We omit the details but we mention that the value function is again captured by signatures.

\section{Separations from classical approach}
\label{app:separation}
One may think that one can augment the state with signatures and give reward at the very end of the episode to encode the value over the entire trajectory within the classical Bellman based framework.
There are obvious drawbacks for this approach; (1) for the infinite horizon case where the terminal state or time is unavailable, one cannot give any reward, and (2) input dimension for the value function becomes very large with signature augmentation.
Here, in addition to the above, we show separations from the classical Bellman based approach from several point of views.
Let $\mathcal{S}^\pi(a,y,t)$ ($\mathcal{ES}(y,t)$) be the $S$-function (expected $S$-function) and $Q^\pi(a,y,s,t)$ ($V^\pi(y,s,t)$) be the $Q$-function (value function) where $s$ represents the signature of the past path.
\subsection{Cost and expectation order}
If the cost $c$ is linear (e.g., the case of reduction to the Bellman equation), then the cost to be mimimized can be reformulated as
\begin{align}
 c\left(\Exp_\Omega\left[ S_m\left(\sigma^\mathcal{T}_{\pi,F}\left(y_0, T, \omega\right)\right)\right]\right)=\Exp_\Omega\left[c\left( S_m\left(\sigma^\mathcal{T}_{\pi,F}\left(y_0, T, \omega\right)\right)\right)\right]. \nonumber
\end{align}
However, in general, the order is not exchangable.
We saw that {\optname} reduces to the Bellman equation and therefore for any MDP over the state augmented by signatures (and horizon $T$), it is easy to see that there exists an interpolation, a transpotation, and a cost $c$ such that
\begin{align}
c\left(\mathcal{ES}^\pi_m\left(y_0,0\right)\right)=
V^\pi(y_0,\mathbf{1},0).\nonumber
\end{align}
On the other hand, the opposite does not hold in general.
\begin{claim}
	There exist a 3-tuple $(\mathcal{X},\mathcal{A},P_a)$ where $P_a$ is the transition kernel for action $a\in\mathcal{A}$, a set of randomized policies ($\pi(a|x)$ is the probability of taking action $a\in\mathcal{A}$ at $x\in\mathcal{X}$ under the policy $\pi$) $\Pi$, an initial state $y_0$, and the cost function $c$ of {\ctrlname}, such that there is no immediate reward function $r$ that satisfies
	\begin{align*}
	\argmin_{\pi\in\Pi}c\left( \mathcal{ES}^\pi_m\left(y_0,0\right)\right)=\argmin_{\pi\in\Pi}V^\pi(y_0,\mathbf{1},0).
	\end{align*}
\end{claim}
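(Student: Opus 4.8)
The plan is to exhibit an explicit small example in which the signature-based cost $c$ depends nonlinearly on the expected signature (e.g.\ a quadratic in the area term), so that the minimizer over a family of randomized policies is a genuinely stochastic policy, whereas any $V^\pi(y_0,\mathbf{1},0)$ arising from a fixed reward $r$ is linear (affine) in the policy on each layer and hence is always minimized at a deterministic policy. Concretely, I would take a one-step (or two-step) MDP with two actions $a_1,a_2$ at the initial state $y_0$, let the ``signature'' path be the two-dimensional curve over $(\text{time},\text{state})$ as in the reduction of Section~\ref{sec:reductdion_to_bellman}, and choose $c$ to be something like $c(s) = (s_{1,2} - \beta)^2$ for a target area $\beta$ that is strictly between the areas $A(a_1)$ and $A(a_2)$ achievable by the two pure actions. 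Because $\Exp_\Omega[S_m(\sigma)]$ is linear in the mixing weight $p=\pi(a_1|y_0)$, the map $p\mapsto \mathcal{ES}^\pi_m(y_0,0)$ traces a line segment, and $c\circ\mathcal{ES}$ is a strictly convex quadratic in $p$ with interior minimizer $p^\star\in(0,1)$ chosen so that $p^\star A(a_1) + (1-p^\star)A(a_2) = \beta$.

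**Next I would** show the contrast on the value-function side. For any immediate reward $r$, the value $V^\pi(y_0,\mathbf{1},0) = \sum_a \pi(a|y_0)\, Q^\pi(a,\ldots)$, and in this horizon-$T$ (here $T=1$ or $2$) setting $Q^\pi(a,\ldots)$ does not depend on $\pi(\cdot|y_0)$ — it is determined by $r$ and the downstream dynamics/policy, which I can fix (or there is no downstream). Hence $p\mapsto V^\pi(y_0,\mathbf 1,0)$ is affine, so its set of minimizers over $p\in[0,1]$ is either all of $[0,1]$, or $\{0\}$, or $\{1\}$ — never the single interior point $\{p^\star\}$. Since $\argmin_{\pi} c(\mathcal{ES}^\pi_m(y_0,0)) = \{p^\star\}$ with $p^\star\in(0,1)$, no choice of $r$ can make the two argmin sets coincide, which is exactly the claim. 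To be careful I would make $\Pi$ the full set of randomized policies (so the argmin is over a genuine interval of mixing weights) and, if the example has a second step, fix the downstream part of every policy in $\Pi$ to a single common choice so that it contributes a constant to both objectives and the only free parameter really is $p$.

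**The main obstacle** I anticipate is bookkeeping rather than conceptual: making sure the ``signature of the past path'' coordinate $s$ and the horizon are set up so that $V^\pi(y_0,\mathbf 1,0)$ is provably affine in $p$ for \emph{every} reward $r$ simultaneously — i.e.\ ruling out the possibility that some cleverly chosen $r$ depending on the augmented state (which now includes signatures) reintroduces a nonlinear dependence. The key point to nail down is that $r$ is a function of the current augmented state and action only, so $Q^\pi(a_i,y_0,\mathbf 1,0)$ is evaluated along trajectories whose law, given $a_i$, is independent of $p$; thus the $p$-dependence of $V^\pi$ enters \emph{only} through the outer convex combination $\sum_i \pi(a_i|y_0)Q^\pi(a_i,\cdots)$, which is affine. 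Once that is pinned down, the separation is immediate from convex-vs-affine minimization. I would also double-check the edge cases $A(a_1)=A(a_2)$ (excluded by construction, since we pick $\beta$ strictly between distinct values) and the degenerate reward that makes $V^\pi$ constant in $p$ (still doesn't give the singleton $\{p^\star\}$, so it's fine).
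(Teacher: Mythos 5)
Your proposal is correct and follows essentially the same route as the paper: both exploit the fact that the cost $c$ is applied \emph{outside} the expectation, so a nonlinear $c$ makes a strictly mixed policy the unique minimizer of $c(\mathcal{ES}^\pi)$, while $V^\pi(y_0,\mathbf{1},0)$ is affine in the mixing probability for every reward $r$ and therefore can never have that strictly mixed policy as its unique minimizer. The only differences are cosmetic: the paper uses a one-step example with $c(s)=|s_1|$ (displacement) and a three-element policy set $\{\pi_1,\pi_2,\pi_3\}$, whereas you use a quadratic cost on the area term over the full randomized-policy simplex — both yield the same convex-versus-affine separation.
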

\begin{proof}
	Let $\mathcal{X}=\mathcal{Y}=\R$, $\mathcal{A}=\{a_{-1},a_{1}\}$, $\mathbb{T}=\N$, $y_0=0$, $T=1$ and
	\begin{align}
	P_{a_{-1}}(0,-1)=1,~P_{a_{1}}(0,1)=1.\nonumber
	\end{align}
	Also, let $\Pi=\{\pi_1,\pi_2,\pi_3\}$ where
	\begin{align}
	\pi_1(a_{1}|0)=1,~\pi_2(a_{-1}|0)=1,~\pi_3(a_{1}|0)=0.5,~\pi_3(a_{-1}|0)=0.5,\nonumber
	\end{align}
	and let $c:T^1(\mathcal{X})\to\R_{\geq 0}$ be
	\begin{align}
	c(s)=|s_1|.\nonumber
	\end{align}
	The optimal policy for {\ctrlname} is then $\pi_3$, i.e.,
	\begin{align*}
	\{\pi_3\}=\argmin_{\pi\in\Pi}c\left( \mathcal{ES}^\pi_m\left(y_0,0\right)\right).
	\end{align*}
	Now, because we have
	\begin{align*}
	&V^{\pi_1}(y_0,\mathbf{1},0)=\Exp_{\Omega_{a_{-1}}}\left[r(a_{-1},y_0,\mathbf{1},\omega)\right],\\
	&V^{\pi_2}(y_0,\mathbf{1},0)=\Exp_{\Omega_{a_{1}}}\left[r(a_{1},y_0,\mathbf{1},\omega)\right],\\
	&V^{\pi_3}(y_0,\mathbf{1},0)=\frac{\Exp_{\Omega_{a_{-1}}}\left[r(a_{-1},y_0,\mathbf{1},\omega)\right]+\Exp_{\Omega_{a_{1}}}\left[r(a_{1},y_0,\mathbf{1},\omega)\right]}{2},
	\end{align*}
	possible immediate rewards to care about are only $\Exp_{\Omega_{a_{-1}}}\left[r(a_{-1},y_0,\mathbf{1},\omega)\right]$ and $\Exp_{\Omega_{a_{1}}}\left[r(a_{1},y_0,\mathbf{1},\omega)\right]$.
	It is straightforward to see that
	\begin{align*}
	\pi_3\in\argmin_{\pi\in\Pi}V^\pi(y_0,\mathbf{1},0)
	\end{align*}
	only if
	\begin{align*}
	\Exp_{\Omega_{a_{-1}}}\left[r(a_{-1},y_0,\mathbf{1},\omega)\right]=\Exp_{\Omega_{a_{1}}}\left[r(a_{1},y_0,\mathbf{1},\omega)\right].
	\end{align*}
	However, for any reward function satisfying this equation we obtain
	\begin{align*}
	\{\pi_3\}\neq \argmin_{\pi\in\Pi}V^\pi(y_0,\mathbf{1},0).
	\end{align*}
\end{proof}

\subsection{Sample complexity}
We considered stochastic policy above.
What if the dynamics is deterministic ($\Omega$ is a singleton)?
For a deterministic finite horizon case, technically, the cost over path can be represented by both the cost function with $S$-function and $Q$-function.
The difference is the {\em steps} or sample complexity required to find an optimal path.
Because $S$-function captures strictly more information than $Q$-function, it should show sample efficiency in certain problems even for deterministic cases.
Here, in particular, we show that there exists a {\ctrlname} problem which is more efficiently solved by the use of $S$-function than by $Q$-function.
(We do not discuss typical lower bound arguments of RL sample complexity; giving certain convergence guarantees with lower bound arguments is an important future work.)

To this end, we define Signature MDP:
\begin{definition}[Finite horizon, time-dependent signature MDP]
	Finite horizon, time-dependent signature MDP is the 8-tuple $(\mathcal{X},\mathcal{A},m,\{P\}_t,F,\{r\}_t,T,\mu)$ which consists of
	\begin{itemize}
		\item finite or infinite state space $\mathcal{X}$
		\item discrete or infinite action space $\mathcal{A}$
		\item signature depth $m\in\Z_{>0}$
		\item transition kernel $P_{a,t}$ on $\mathcal{X}\times\mathcal{X}$ for action $a\in\mathcal{A}$ and time $t\in[T]$
		\item signature is updated through concatenation of past path and the immediate path which is the interpolation of the current state and the next state by $F$
		\item reward $r_t$ which is a time-dependent mapping from $\mathcal{X}\times T^m(\mathcal{X})\times\mathcal{A}$ to $\R$ for time $t\in[T]$
		\item positive integer $T\in\Z_{>0}$ defining time horizon
		\item initial state distribution $\mu$
	\end{itemize}
\end{definition}

Further, we call an algorithm $Q$-table ($\mathcal{S}$-table) based if it accesses state $x$ exclusively through $Q$-table ($\mathcal{S}$-table) for all $x\in\mathcal{X}$.
Now, we obtain the following claim.
\begin{claim}
	There exists a finite horizon, time-dependent signature MDP with a set of deterministic policies $\Pi$ and with a {\em known} reward $\{r\}_t$ such that the number of samples (trajectories) required in the worst case to determine an optimal policy is strictly larger for any $Q$-table based algorithm than a $\mathcal{S}$-table based algorithm.
\end{claim}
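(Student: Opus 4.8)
The witness I would build is an explicit finite‑horizon signature MDP with a ``branch--then--merge'' shape, so that the path‑to‑go (hence the $\mathcal S$-table) is genuinely shared across the branches while the $Q$-table is not, and then I would pair an explicit $\mathcal S$-table learner with a matching lower bound against any $Q$-table learner. Concretely: fix $B,L\ge 2$, horizon $T=3$, depth $m=2$, and $\mathcal X=\mathcal Y\subset\R^{d}$ with distinguished points $x_{0}$, $x_{1}^{(1)},\dots,x_{1}^{(B)}$, $x_{2}$, $\ell_{1},\dots,\ell_{L}$ placed generically; at $x_{0}$ there are $B$ actions, the $i$-th moving deterministically to $x_{1}^{(i)}$; from each $x_{1}^{(i)}$ the unique action moves to the shared state $x_{2}$ (the merge); at $x_{2}$ there are $L$ actions, the $p$-th moving to the terminal $\ell_{p}$; $F$ joins consecutive states by segments and $\mu=\delta_{x_{0}}$. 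The reward is $0$ except at $t=T$, where $r_{T}$ is the \emph{known} map $s\mapsto-\lVert s-s^{*}\rVert^{2}$ of the accumulated depth‑$2$ signature $s$ (admissible, since $r_{t}$ may depend on $T^{m}(\mathcal X)$). A deterministic policy is a pair (branch $i$, leaf action $p$); writing $\mathrm{pre}_{i}$ for the interpolated path $x_{0}\!\to\!x_{1}^{(i)}\!\to\!x_{2}$ and $\mathrm{ed}_{p}$ for the segment $x_{2}\!\to\!\ell_{p}$, Chen's identity gives its return as $-\lVert S(\mathrm{pre}_{i})\otimes_{m}S(\mathrm{ed}_{p})-s^{*}\rVert^{2}$, which for generic positions is \emph{not} additively separable in $i,p$; in particular $Q^{\pi}(a,y,s,t)$ at $x_{2}$ genuinely varies with its past‑signature argument across the $B$ branches, so the $B$ ``copies'' of $x_{2}$ are distinct in the $Q$-lifted state space $\{(a,y,s,t)\}$ but coincide in the $\mathcal S$-lifted space $\{(a,y,t)\}$.

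\emph{Upper bound for an $\mathcal S$-table learner.} Because $\mathcal S^{\pi}(a,y,t)$ is past‑independent, the $\mathcal S$-lifted MDP keeps the merge and has only $L$ terminals. Consider the learner performing $\max(B,L)$ rollouts, the $j$-th taking branch $\min(j,B)$ and leaf action $\min(j,L)$; along these it records, in a $T^{2}(\mathcal X)$-valued $\mathcal S$-table updated by Chen backups, the signatures $S(\mathrm{pre}_{i})$ for every $i$ and $S(\mathrm{ed}_{p})$ for every $p$, and therefore $\mathcal S$ at $x_{0}$ for every one of the $BL$ policies. Applying $c(\cdot)=-\lVert\cdot-s^{*}\rVert^{2}$ and maximising identifies an optimal policy; so at most $\max(B,L)$ rollouts suffice, on every MDP of this shape.

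\emph{Lower bound for a $Q$-table learner.} I would model a $Q$-table based algorithm as interacting with the MDP through the $Q$-lifted state space, in which the construction is a \emph{tree} with $BL$ terminals whose effective rewards $-\lVert S(\mathrm{pre}_{i})\otimes_{m}S(\mathrm{ed}_{p})-s^{*}\rVert^{2}$ depend on the unknown point positions and hence are revealed only by a rollout that actually traverses that terminal (a $Q$-backup along such a rollout touches only the thread of that branch, so a return learned in one branch cannot be transferred to another). Varying the positions of $x_{1}^{(1)},\dots,x_{1}^{(B)},\ell_{1},\dots,\ell_{L}$ over a neighbourhood yields a family of such MDPs sharing the fixed, known $r_{T}$; for any pattern of fewer than $BL$ traversed terminals there is an untraversed cell $(i^{*},p^{*})$ and two family members agreeing on every observed trajectory but with $(i^{*},p^{*})$ strictly optimal in one and suboptimal in the other, so the algorithm errs on one of them. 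Hence any $Q$-table based algorithm needs at least $BL$ rollouts in the worst case, whereas the $\mathcal S$-table learner above needs at most $\max(B,L)$; for $B,L\ge 2$ these are distinct, and $B=L=2^{D}$ makes the gap $2^{2D}$ versus $2^{D}$.

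\emph{Main obstacle.} The delicate part is the lower bound: pinning down a definition of ``$Q$-table based'' faithful to the informal statement (the learner accesses states only through the $Q$-table) yet strong enough to preclude the cross‑branch signature composition the $\mathcal S$-table learner uses, and then verifying the indistinguishability step --- that the family really can keep every observed trajectory fixed while relocating the untraversed cell's terminal so as to flip the optimum. This is precisely why the point positions, not the reward, are the free parameters, and why $r_{T}$ is taken nonlinear, so that the $Q$-function does not collapse onto the $\mathcal S$-function (as it does in the linear‑cost Bellman reduction) in the first place; the rest is a routine needle‑in‑a‑haystack argument plus the Chen‑backup bookkeeping for the $\mathcal S$-table learner.
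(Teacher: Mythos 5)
Your construction is the same in spirit as the paper's: a deterministic branch--then--merge MDP with zero intermediate reward and a \emph{known} terminal reward that is a function of the whole-path signature, so that an $\mathcal S$-table (being $T^m(\mathcal X)$-valued and past-independent at the merge) lets the learner recombine branch and leaf segments via Chen's identity and price unexecuted policies, while scalar $Q$-entries do not. The paper realizes this with the minimal instance $B=L=2$ (seven explicit states, $r_{T-1}(x,s,a)=|s^+_{1,2}|$, uncertainty in one transition at the merge state) and is content with a gap of a single extra trajectory, arguing that two MDPs $\mathcal M_1,\mathcal M_2$ are indistinguishable \emph{from the $Q$-table contents} after six rollouts, whereas the depth-$\ge 1$ $\mathcal S$-table already reveals the merge-state transitions. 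Your positive half (the $\mathcal S$-table learner needing at most $\max(B,L)$ rollouts) is fine and matches the paper's reasoning.

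The genuine gap is in your lower bound. You argue indistinguishability at the level of \emph{observed trajectories}: ``two family members agreeing on every observed trajectory but with $(i^{*},p^{*})$ strictly optimal in one and suboptimal in the other.'' This step fails in exactly the regime you need it. After, say, the $B+L-1$ rollouts $(1,1),(2,1),\dots,(B,1),(1,2),\dots,(1,L)$ --- far fewer than $BL$ --- every branch and every leaf has been traversed at least once, so the positions of all $x_1^{(i)}$ and $\ell_p$ (hence $S(\mathrm{pre}_i)$, $S(\mathrm{ed}_p)$, hence the return of \emph{every} cell, traversed or not, since $r_T$ is known) are already determined by the raw observed data; no two members of your family can agree on all observed trajectories yet disagree about which cell is optimal. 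In other words, against an information-theoretic adversary there is no separation at all in this construction: the entire separation comes from the access restriction that a $Q$-table based algorithm sees only scalars, and the indistinguishability must therefore be argued at the level of the $Q$-table contents (as the paper does with $\mathcal M_1$ versus $\mathcal M_2$: the transition is visible in the raw trajectory but not recoverable from the recorded returns). You flag formalizing ``$Q$-table based'' as the main obstacle, but the concrete argument you give does not respect that formalization and would collapse; consequently the claimed $BL$ versus $\max(B,L)$ quantitative gap is unsupported (and false as stated against the trajectory-level adversary you invoke). To repair it, replace the trajectory-level family argument with a table-level one: exhibit, for any proper subset of queried cells, two position assignments producing identical scalar $Q$-tables on the queried cells but different optimal cells --- which is a nontrivial algebraic condition on the map $(i,p)\mapsto -\lVert S(\mathrm{pre}_i)\otimes_m S(\mathrm{ed}_p)-s^{*}\rVert^{2}$, and is much easier to verify for the paper's one-extra-trajectory gap than for your full $BL$ bound.
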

\begin{proof}
	Let the first MDP $\mathcal{M}_1$ be given by $T=3$,\\ $\mathcal{X}=\mathcal{Y}:=\{[0,0],[1,1],[2,2],[2,3],[-1,1],[0,1],[4,0]\}\subset\R^2$, $\mathcal{A}:=\{a_1,a_2\}$, $m=2$, 
	$F$ is linear interpolation of any pair of points, $\mu([0,0])=\Pr[y_0=[0,0]]=1$, and
	\begin{align*}
	&P_{a_1,0}([0,0],[1,1])=P_{a_1,1}([1,1],[2,2])=P_{a_2,1}([1,1],[2,2])=P_{a_1,2}([2,2],[2,3])=
	\\
	&=P_{a_2,0}([0,0],[-1,1])=P_{a_1,1}([-1,1],[2,2])=P_{a_2,1}([-1,1],[2,2])=P_{a_2,2}([2,2],[4,0])=1
	\end{align*}
	Also, let $\{r\}_t$ satisfy that
	\begin{align*}
	\forall t\in[T-1]:~r_t=0,~~~~~r_{T-1}(x,s,a)=|s^+_{1,2}|,
	\end{align*}
	where $s^+$ is the signature of entire path that is deterministically obtained from state $x$ at time $T-1$, past path signature $s$, and action $a$ (note we do not know the transition but only the output $|s^+_{1,2}|$).
	The possible deterministic trajectories (or policies) of state-action pairs are the followings:
	\begin{align*}
	&\left(([0,0],a_1),([1,1],a_1),([2,2],a_1),([2,3])\right)\\
	&\left(([0,0],a_1),([1,1],a_2),([2,2],a_1),([2,3])\right)\\
	&\left(([0,0],a_1),([1,1],a_1),([2,2],a_2),([4,0])\right)\\
	&\left(([0,0],a_1),([1,1],a_2),([2,2],a_2),([4,0])\right)\\
	&\left(([0,0],a_2),([-1,1],a_1),([2,2],a_1),([2,3])\right)\\
	&\left(([0,0],a_2),([-1,1],a_2),([2,2],a_1),([2,3])\right)\\
	&\left(([0,0],a_2),([-1,1],a_1),([2,2],a_2),([4,0])\right)\\
	&\left(([0,0],a_2),([-1,1],a_2),([2,2],a_2),([4,0])\right).
	\end{align*}
	The optimal trajectories are the last two.
	Let the second MDP $\mathcal{M}_2$ be the same as $\mathcal{M}_1$ except that
	\begin{align*}
P_{a_2,2}([2,2],[2,3])=1.
	\end{align*}
	Suppose we obtain $Q$-table for the first six trajectories of $\mathcal{M}_1$.  At this point, we cannot distinguish $\mathcal{M}_1$ and $\mathcal{M}_2$ exclusively from the $Q$-table; hence at least one more trajectory sample is required to determine the optimal policy for any $Q$-table based algorithm.
	On the other hand, suppose we obtain $\mathcal{S}$-table for the first six trajectories of $\mathcal{M}_1$.
	Then, the $S$-table at state $x=[2,2]$ with depth $m\geq1$ determines the transition at $x=[2,2]$ for both actions; hence, we know the cost of the the last two trajectories without executing it.
\end{proof}

\section{Other numerical examples: sanity check}
\label{app:other}
In this section, we present several numerical examples backing the basic properties of {\optname}.
\subsection{$S$-tables: dynamic programming}
\label{app:stable}
Consider an MDP where the number of states $|\mathcal{Y}|=10$.
Suppose each state $y$ is associated with a fixed vector $o:=[o_1,o_2]\in\R^2$
sampled from a uniform distribution over $[0,10]^2$.
The observations are listed in Table \ref{tab:mdpobs}.
Given a deterministic policy $\pi$ that maps the current state to a next state,
and the fixed initial state $y_0$,
consider the value
\begin{align}
\int_{0<\tau_2<T}\int_{0<\tau_1<\tau_2}do_{\tau_1,1}do_{\tau_2,2}\label{eq:dpsig}
\end{align}
along the path made by linearly interpolating the sequence of $o$.
Through dynamic programming of $S$-function, we obtain a {\em $S$-table} of the signature element corresponding to the value \eqref{eq:dpsig}.
The table is shown in Table \ref{tab:stables}.
The $S$-value of each state for time $0$ computed by dynamic programming is indeed the same as what is computed
directly by rollout.

Next, using the same setup as above, we suppose that the reward at a state is the sum of the two observations $o_1$ and $o_2$ (see Table \ref{tab:mdpobs}); for the same deterministic policy, we construct the value table and compare it to the $S$-table created by the interpolation and the transformation of paths presented in Section \ref{app:bellmanreduction}.
Both are indeed the same, and are shown in Table \ref{tab:stables_bellman}.

As an example, the transformed path from an initial state ``$2$'' is given in the left side of Figure \ref{fig:numerical_bellman_reduction}.
\begin{figure}[t]
	\begin{center}
		\includegraphics[clip,width=\textwidth]{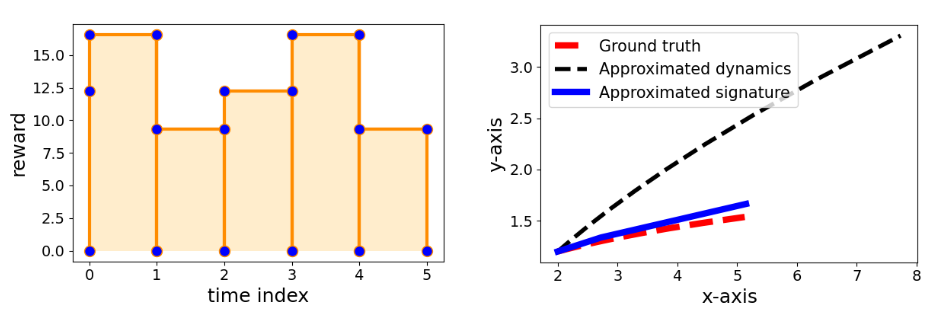}
	\end{center}
	\caption{Left: a concatenated transformed path; under tree-like equivalence, the signature corresponding to this path is the same as that of the entire transformed path from time $0$ to $5$.  As such, the area surrounded by such path represents the cumulative reward.  Right: comparisons of (1) ground-truth path, (2) the path following approximated one-step dynamics, and (3) the reconstructed path from erroneous signatures.} 
	\label{fig:numerical_bellman_reduction}
\end{figure}

\begin{table}[t]
	\begin{center}
		\caption{Observation vector for each state; rounded off to two decimal places.}
		\begin{tabular}{ccccccccccc} 
			\toprule
			State & 1 & 2 & 3 & 4 & 5 & 6 & 7 & 8 & 9 & 10\\ \midrule
			$o_1$  & 1.92 & 4.38 & 7.80 & 2.76 & 9.58 & 3.58 & 6.83 & 3.70 & 5.03 & 7.73\\
			$o_2$  & 6.22  & 7.85 & 2.73  & 8.02 & 8.76 & 5.01  & 7.13 &  5.61 & 0.14 & 8.83\\
			$o_1+o_2$  & 8.14  & 12.23 & 10.53  & 10.78 & 18.34 & 8.59  & 13.96 & 9.31 & 5.17& 16.55\\ \bottomrule
		\end{tabular}
	\label{tab:mdpobs}
	\end{center}
\end{table}

\begin{table}[t]
	\begin{center}
		\caption{$S$-table of the value \eqref{eq:dpsig} rounded off to two decimal places for the example MDP path from each state and over time horizon $5$.}
		\begin{tabular}{c|cccccccccc} 
			\toprule
			& 1 & 2 & 3 & 4 & 5 & 6 & 7 & 8 & 9 & 10\\ \midrule
			$0$  & -7.17 & -6.10 & -20.00 & -2.27 & 37.55 & -5.14 & -4.34 & -0.38 & -14.38 & -5.23\\
			$1$  & -15.79  & -1.80 & -19.32  & -9.84 & 11.84 & -6.23  & 16.11 &  -2.67 & -13.05 & 3.04\\
			$2$  & 2.03  & -3.43 & -14.11  & -14.15 & 20.73 & 1.33  & 3.79 &  -3.43 & -12.84 & -3.43\\
			$3$  & -0.54  & -2.67 & 12.21  & -6.65 & -2.02 & -4.18  & 6.40 &  3.04 & -7.66 & -1.80\\
			$4$  & 9.73  & 1.63 & -4.82  & -13.32 & 2.24 & -3.54  & -1.81 &  -0.76 & -9.48 & 6.47\\
			$5$  & 0.00  & 0.00 & 0.00  & 0.00 & 0.00 & 0.00  & 0.00 &  0.00 & 0.00 & 0.00\\ \bottomrule
		\end{tabular}
		\label{tab:stables}
	\end{center}
\end{table}

\begin{table}[t]
	\begin{center}
		\caption{$S$-table for the transformed path of rewards from each state and over time horizon $5$.  This is the same as the value table computed using Bellman update.}
		\begin{tabular}{c|cccccccccc} 
			\toprule
			& 1 & 2 & 3 & 4 & 5 & 6 & 7 & 8 & 9 & 10\\ \midrule
			$0$  & 62.20 & 63.97 & 54.20 & 50.76 & 49.03 & 56.39 & 43.20 & 66.89 & 61.75 & 59.65\\
			$1$  & 53.61  & 54.65 & 40.23  & 32.42 & 43.86 & 45.61  & 35.07 &  50.33 & 51.22 & 47.41\\
			$2$  & 43.09  & 38.10 & 21.89  & 24.28 & 35.27 & 31.65  & 29.90 &  38.10 & 40.44 & 38.10\\
			$3$  & 32.30  & 25.87 & 13.76  & 19.11 & 24.75 & 13.30  & 21.31 &  28.79 & 26.50 & 21.55\\
			$4$  & 18.34  & 16.55 & 8.60  & 10.53 & 13.96 & 5.17  & 10.78 &  12.23 & 8.14 & 9.31\\
			$5$  & 0.00  & 0.00 & 0.00  & 0.00 & 0.00 & 0.00  & 0.00 &  0.00 & 0.00 & 0.00\\ \bottomrule
		\end{tabular}
		\label{tab:stables_bellman}
	\end{center}
\end{table}

\paragraph{Chen optimality:}
To see Chen optimality, suppose there are only three policies (i.e., $|\Pi|=3$) for simplicity.
All of the three policies are the same except for the transition at state ``$8$''; which are to ``$1$'', ``$2$'' and ``$3$'', respectively.
Starting from the initial state ``$2$'', the three paths up to time step $5$ are given by
$2\to 10\to 8\to 1 \to 5\to 7$, $2\to 10\to 8\to 2 \to 10\to 8$, and $2\to 10\to 8\to 3 \to 6\to 9$.
Then, we see that Chen optimality equation \eqref{eq:chenopt} holds for $t=2$ and state ``$8$'', and the optimal cost is equal to $0.85$.

\subsection{Error explosions}
\label{app:errorexplosion}
We also elaborate on error explosion issues.
To see an approximation error on one-step dynamics could lead to error explosions along time steps in terms of signature values, suppose that the ground truth dynamics is given by $x_{t+1}=f(x_t):=[x_{t,1}^{1.1}, x_{t,2}^{1.1}]$ within the state space $\R^2$.
Suppose also that the learned dynamics is $\hat{f}(x)=f(x)+[\epsilon,\epsilon]$ where $\epsilon=0.1$.
Let $S_{10}(\sigma)$ be the signatures (up to depth $10$) of the path from the initial state $[2.0, 1.2]\in\R^2$ up to (discrete) time steps $10$; and $S_{10}(\hat{\sigma})$ be the signatures of the path generated by $\hat{f}$.
On the other hand, suppose the approximated signatures are given by $\hat{S}=S_{10}(\sigma)+(\epsilon,\ldots,\epsilon)$.
Then, treating $T^{10}(\R^2)$ as a vector in $\R^{2+2^2+\ldots+2^{10}}$, we compare the Euclidean norm errors of $S_{10}(\hat{\sigma})$ and $\hat{S}$ against $S_{10}(\sigma)$.
The results are $4.52$ and $147.96$; which imply that the one-step dynamics based approximation could lead to much larger errors on signatures of the expected future path.
Note, we assumed that each scalar output suffers from an error $\epsilon$, which may not be the best comparison.

Although it is not required in Chen formulation, we reconstruct the path from the erroneous signatures and compare it against the path following $\hat{f}$.
We use three nodes (including the fixed initial node) to reconstruct the path by minimizing the Euclidean norm of the difference between the erroneous signatures and the signatures of the path generated by linearly interpolating the candidate three nodes (with signature depth $10$).
We use Adam optimizer \citep{kingma2014adam} with step size $0.3$ and execute $100$ iterations. 
The paths are plotted again in this appendix for reference in the right side of Figure \ref{fig:numerical_bellman_reduction}; the reconstructed path is still close to the ground-truth one.

\subsection{Generating similar paths}
\label{app:similar_path}
Here, we present an application of signature cost to similar path generations (which we did not mention in the main text).
To this end, we define the operator $\diamond$ by
\begin{align}
\alpha\diamond A:=(a_0, \alpha a_1, \alpha^2 a_2,\ldots)\nonumber
\end{align}
for $A\in T((\mathcal{X}))$.
Now, given a reference path $\sigma_0$ in the space $\R^4$ which represents a path over $x,y$ positions and {\em difference} to the next positions, mimicking velocity, we consider generating a path $\sigma^*$ with $\alpha>0$ which minimizes the cost
\begin{align}
\|S_4(\sigma^*)-\alpha\diamond S_4(\sigma_0)\|^2_{\R^{d+d^2+\ldots+d^4}}.\nonumber
\end{align}
We use Adam optimizer with step size $0.1$ and update iterations $500$, and the generated path is an interpolation of $50$ nodes.
Figure \ref{fig:similar_app} Top shows the optimized paths for different scales of $\alpha$.
Using the difference (or velocity) term is essential to recover an accurate path with only the depth $3$ or $4$.

Besides, when the cost is a weighted sum of deviation from the scaled signature and the scale factor itself, we see that as optimization progresses the generated path scales up while being similar to the original reference path.
In particular, we use the cost
\begin{align}
\|S_m(\sigma^*)-\alpha\diamond S_m(\sigma_0)\|^2_{\R^{d+d^2+\ldots+d^m}}-\alpha,\nonumber
\end{align}
by treating $\alpha$ as a decision variable as well.
It uses the same parameters as above except that we use depth $3$ here.
The result is plotted in Figure \ref{fig:similar_app} Down, where the generated paths after gradient steps $100,300,500,1000$ are shown.

\begin{figure}[t]
	\begin{center}
		\includegraphics[clip,width=0.9\textwidth]{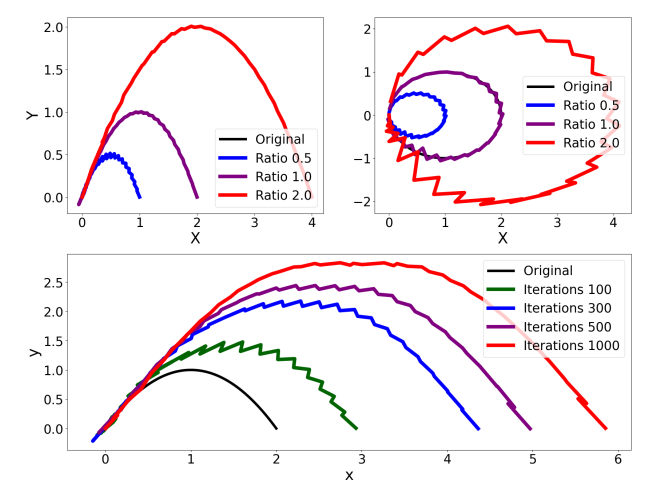}
	\end{center}
	\caption{Top: setting different scale $\alpha$ and generating respective optimal path which is expected to be similar to the reference path.  Left is for a parabola curve and Right is for a circle.  Down: as gradient-based optimization iteration number increases, the generated path scales up while being similar to the orignal path.} 
	\label{fig:similar_app}
\end{figure}

\subsection{Additional simple analysis of signatures}
Given two different paths $\sigma_1:[0,T]\to\mathcal{X}$ and $\sigma_2:[0,T]\to\mathcal{X}$, we plot the squared Euclidean distance between the signatures of those two paths up to each time step, i.e., 
$\|S(\sigma_1|_{[0,t]})-S(\sigma_2|_{[0,t]})\|^2$ for $t\in[0,T]$.
We test linear paths ($T=2.0$ and two paths $x=0.5t$, $x=-0.3t$) and sinusoid paths ($T=2.0$ and two paths $x=\sin(t\pi)$, $x=\sin(2t\pi)$), and we consider two different base kernels (linear and RBF with bandwidth $0.5$), and two cases, namely, the 1D case where $\mathcal{X}=\R$ and the 2D case where $\mathcal{X}=\R^2$ which is augmented with time.
We use dyadic order $3$ for computing the PDE kernel.
The plots are given in Figure \ref{fig:cumulative_norm}.
\begin{figure}[t]
	\begin{center}
		\includegraphics[clip,width=\textwidth]{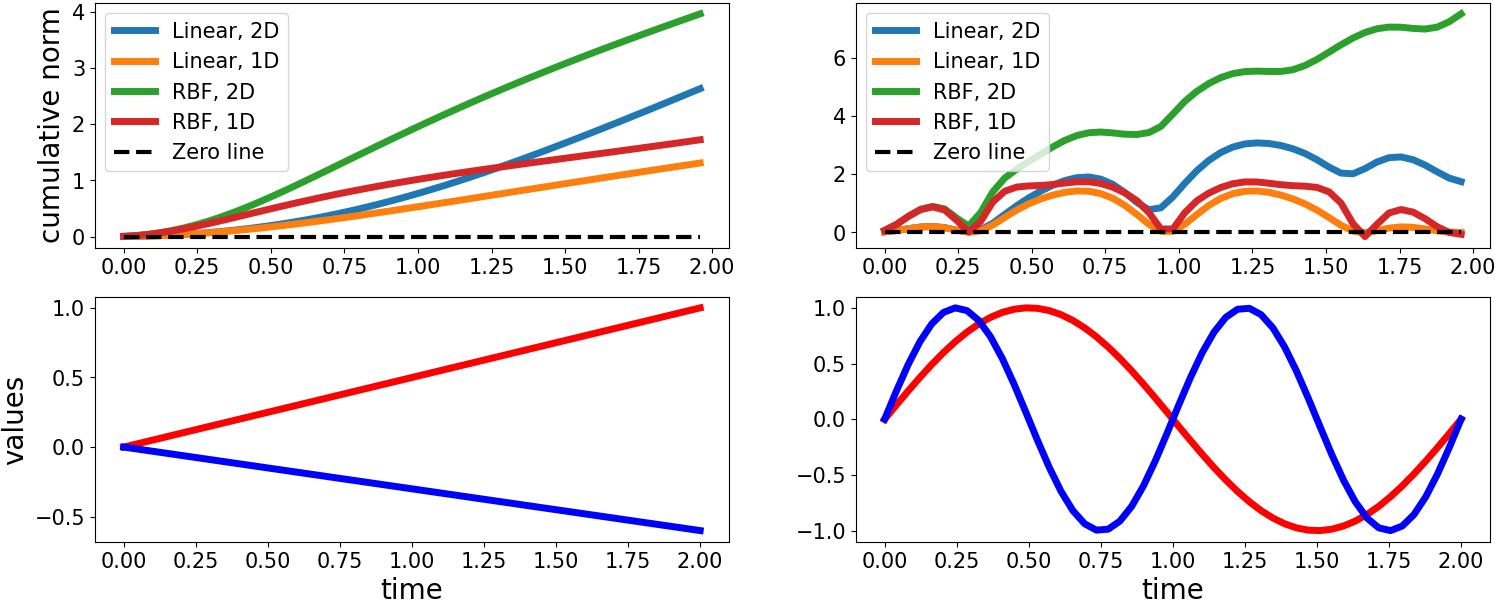}
	\end{center}
	\caption{Left: linear paths comparison.  Right: sinusoid paths comparison.  Top: squared norm of the cumulative difference; for linear base kernel, RBF kernel, and for the 1D case and 2D case.  Down: illustrations of two different paths.} 
	\label{fig:cumulative_norm}
\end{figure}
From the figure, we see that 1D cases only depend on the start and end points, which confirms the theory; and for 2D cases, the first depth signature terms are still dominant.

Also, using RBF kernels (with narrow bandwidths), the deviation of a pair of paths becomes clarified even if they are close in the original Euclidean space.

\section{Details on the choice of terminal $S$-functions}
\label{app:terminalS}
In this section, we present the details of the choice of terminal $S$-functions.
Other than the one used in the main text (illustrated in Figure \ref{fig:maints}), 
another example of $\mathcal{TS}_m$ is given by
	\begin{align}
	\mathcal{TS}_m(x, s,\sigma)\in\argmin_{u\in T^m(\mathcal{X})} \ell\left(s\otimes_m S_m(\sigma)\otimes_m u\right)+\ell_{\rm reg}(u). \label{eq:ts1}
	\end{align}
	If this computation is hard, one may choose $\mathcal{TS}_m(x,s,\sigma)=\mathbf{1}$; or for path tracking problem, one may choose the signature of a straghtline between the endpoint of $\sigma^*$ and the endpoint of $\sigma$.
These three examples are shown in Figure \ref{fig:mpcts}.

\begin{figure}[t]
	\begin{center}
		\includegraphics[clip,width=0.7\textwidth]{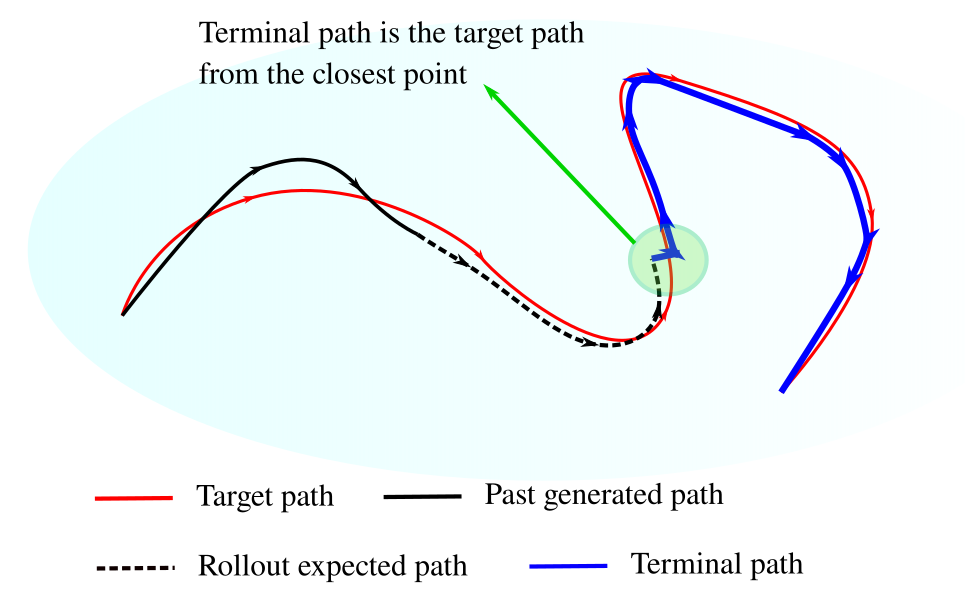}
	\end{center}
	\caption{Illustration of the terminal $S$-function used in this work.} 
	\label{fig:maints}
\end{figure}
\begin{figure}[t]
	\begin{center}
		\includegraphics[clip,width=\textwidth]{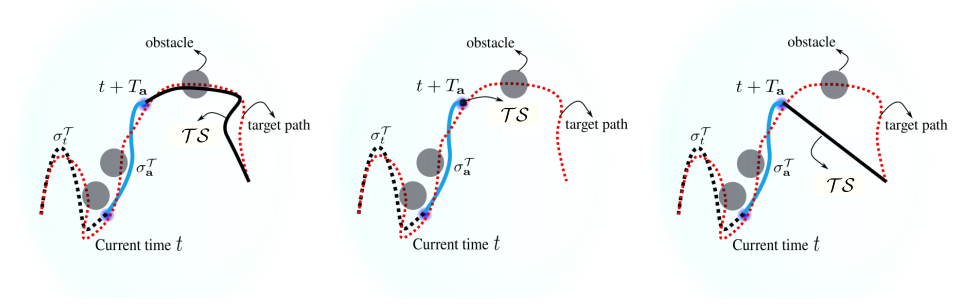}
	\end{center}
	\caption{Illustrations of example terminal $S$-functions when given target path to track.  $\sigma_t^\mathcal{T}$ is the transformed past path whose signature is $s_t$.  The left shows \eqref{eq:ts1}; the future path ignores any dynamic constraints and the optimal virtual path is computed.  The middle is for the case that the signature is $\mathbf{1}$.  The right is the case where straight line between the endpoint of the target path and the state at time $t+T_{\mathbf{a}}$ is the terminal path.} 
	\label{fig:mpcts}
\end{figure}

\paragraph{Terminal $S$-function and surrogate costs:}
For an application to MPC problems, we analyze the surrogate cost $\ell$, regularizer $\ell_{\rm reg}$, and the terminal $S$-function $\mathcal{TS}$ in Algorithm \ref{alg:signatureMPC}.
Suppose the problem is to track a given path with signature $s^*$ ($m=\infty$).
Fix the cost $\ell$ to $\ell(s)=\|s-s^*\|^2-w_1\|s\|^2$ and $\ell_{\rm reg}$ to $\ell_{\rm reg}(s)=w_2\|s\|^2$, where $w_1,w_2\in\R_{\geq0}$ are weights.

Here, $\ell_{\rm reg}$ regularizes so that the terminal path becomes shorter, i.e., the agent prefers progressing more with accuracy sacrifice.
The term $w_1\|s\|^2$ for $\ell$ is used to allow some deviations from the reference path.

\begin{fact}[cf. \citep{hambly2010uniqueness, boedihardjo2019non}]
	For the signature $S(\sigma)=(1,s_1,s_2,\ldots)$ of a path $\sigma$ of finite variation on $\mathcal{X}$ with the length $|\sigma|<\infty$, it follows that
	\begin{align*}
	\|s_k\|_{\mathcal{X}^{\otimes k}}\leq\frac{|\sigma|^k}{k!}.
	\end{align*}
	For sufficiently {\em well-behaved} path (see \citep{hambly2010uniqueness, boedihardjo2019non} for example), the limit exists:
	\begin{align*}
	\lim_{k\to \infty}\||\sigma|^{-k}k!s_k\|^2_{\mathcal{X}^{\otimes k}} \leq 1.
	\end{align*}
	If the norm is the projective norm, the limit is $1$.
\end{fact}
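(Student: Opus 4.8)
The plan is to handle the two assertions separately: the factorial bound $\|s_k\| \le |\sigma|^k/k!$ is elementary and is the workhorse, while the existence of the limit and its exact value under the projective norm are where the deep results of Hambly--Lyons and Boedihardjo--Geng must be invoked.

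First I would prove the factorial decay. Start from the iterated-integral representation $s_k = \int_{0 < t_1 < \cdots < t_k < T} d\sigma_{t_1} \otimes \cdots \otimes d\sigma_{t_k}$ and reparametrize by accumulated variation: let $v(t)$ be the total variation of $\sigma$ on $[0,t]$, so $v$ is nondecreasing, $v(0) = 0$, $v(T) = |\sigma|$, and the variation measure of $d\sigma$ equals $dv$. Since the tensor norm in play is a cross-norm (the projective norm is sub-multiplicative, $\|a \otimes b\| \le \|a\|\,\|b\|$), the triangle inequality for the Bochner integral gives
\[
\|s_k\|_{\mathcal{X}^{\otimes k}} \le \int_{0 < t_1 < \cdots < t_k < T} dv(t_1)\cdots dv(t_k) = \int_{0 < u_1 < \cdots < u_k < |\sigma|} du_1 \cdots du_k = \frac{|\sigma|^k}{k!},
\]
the last step being just the volume of the $k$-simplex of side $|\sigma|$. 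The only technical care needed is to first establish this for piecewise-linear approximations, where the integral is a finite sum, and then pass to the limit under refinement using continuity of the signature in the $1$-variation topology. Dividing through immediately yields $\||\sigma|^{-k} k!\, s_k\|^2 \le 1$ for every $k$, so $\limsup_{k \to \infty}$ of that quantity is at most $1$ with no extra hypotheses.

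The genuinely hard part --- and the reason the second display is stated with the ``well-behaved'' caveat --- is upgrading the $\limsup$ to an honest limit, and pinning the constant to exactly $1$ for the projective norm. Here I would not attempt a self-contained argument: after replacing $\sigma$ by its tree-reduced representative $\tilde\sigma$ (which leaves the signature, hence every $s_k$, unchanged), one invokes the asymptotic results that $k!\|s_k\|$ is comparable to $|\tilde\sigma|^k$ --- Hambly--Lyons in the $C^1$ case and Boedihardjo--Geng under bounded variation with the relevant non-degeneracy --- and, for the projective norm, the non-vanishing of the leading behavior together with a Fekete/subadditivity argument on $\log\|s_k\|$ to identify the limit as $1$. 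The obstacle is precisely that without non-degeneracy the normalized coefficients can oscillate and the limit may fail to exist, so convergence is not formal: it rests on the uniqueness and non-vanishing machinery of rough path theory rather than on any elementary estimate.
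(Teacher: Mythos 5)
The paper does not actually prove this Fact --- it is stated as a quoted result, with the citations to Hambly--Lyons and Boedihardjo--Geng carrying the entire content --- so there is no internal proof to compare against; the question is only whether your sketch is sound, and it is. Your argument for $\|s_k\|_{\mathcal{X}^{\otimes k}}\leq |\sigma|^k/k!$ is the standard and correct one: bound the iterated integral by the product variation measure using a cross-norm and the triangle inequality, reparametrize by accumulated variation, and identify the simplex volume $|\sigma|^k/k!$; and you correctly recognize that the existence of the limit and its value $1$ for the projective norm are not elementary, deferring precisely to the references the paper itself cites (Hambly--Lyons for the $C^1$ case, Boedihardjo--Geng for the non-vanishing machinery). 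One point in your write-up is worth emphasizing as an improvement over the paper's loose phrasing: passing to the tree-reduced representative $\tilde\sigma$ is not optional, since if $|\sigma|$ is read as the length of the original path and that path has tree-like pieces, then $|\sigma|^{-k}k!\|s_k\|\sim(|\tilde\sigma|/|\sigma|)^k\to 0$, so the ``sufficiently well-behaved'' hypothesis must be understood to include tree-reducedness (or the statement read with the tree-reduced length); your proposal makes this explicit where the paper leaves it implicit.
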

From this, we obtain
\begin{align}
\|S(\sigma)\|^2 = \sum_{k=0}^\infty \|s_k\|^2_{\mathcal{X}^{\otimes k}} \leq \sum_{k=0}^\infty \left(\frac{|\sigma|^k}{k!}\right)^2\leq 
\left(\sum_{k=0}^\infty\frac{|\sigma|^k}{k!}\right)^2 = e^{2|\sigma|},\nonumber
\end{align}
and for a zero length path, i.e., a point, we obtain $1=\|S(\sigma)\|^2=e^{2|\sigma|}$.
Therefore, while one could use $\left(k!\|s_k\|_{\mathcal{X}^{\otimes k}}\right)^{1/k}$ for large $k$ as a proxy of $|\sigma|$, we simply use $\|S(\sigma)\|^2$.

We compare the following three different setups with the same surrogate cost and regularizer ($w_1=0,~w_2=1$): (1) terminal path is the straight line between the endpoints of the rollout and the reference path, (2) terminal path is computed by nested optimization (see \eqref{eq:ts1}), and (3) terminal path is given by the subpath of the reference path from the end time of the rollout.
The comparisons are plotted in Figure \ref{fig:subpath_app}.
In particular, for the reference path (linear $x=t$ or sinusoid $x=\sin(t\pi)$) over time interval $[0,3]$, we use $20$ out of $50$ nodes to generate subpaths upto the fixed time $1.2$.
We use Adam optimizer with step size $0.1$; and $300$ update iterations for all but the type (2) above, which uses $30$ iterations both for outer and inner optimizations. 

\begin{figure}[t]
	\begin{center}
		\hspace{-0.5em}
		\includegraphics[clip,width=\textwidth]{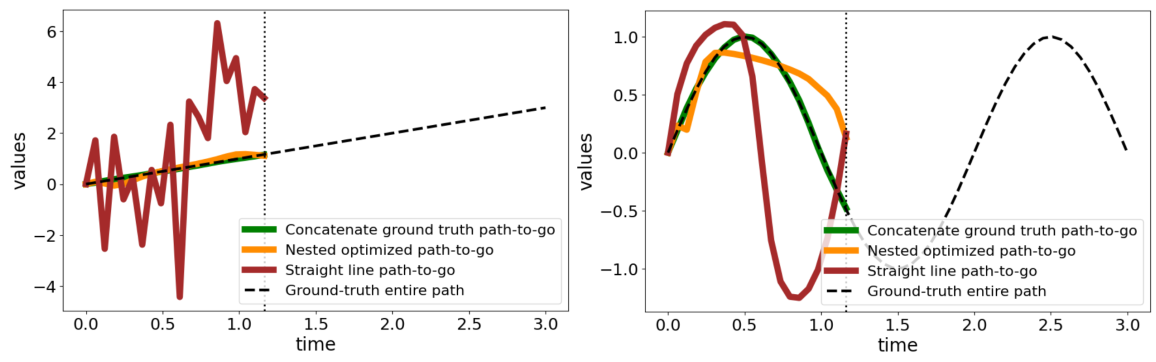}
		\includegraphics[clip,width=0.85\textwidth]{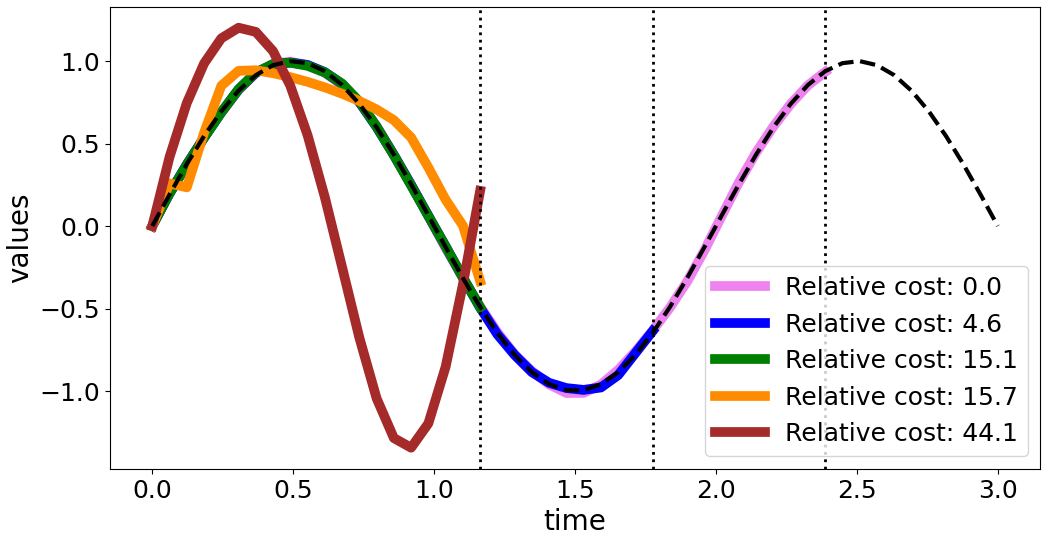}
	\end{center}
	\caption{Top: green line is generated by the same approach as the one used in this work; orange one does inner optimization to obtain the optimal terminal $S$-function and the red line uses the straight line as the terminal path.  Down: comparison of the costs for the three subpaths and two other longer paths using the same choice of the terminal path as that in this work.  As expected, longer subpath has lower score thanks to the regularizer cost.} 
	\label{fig:subpath_app}
\end{figure}

From the figure, our example costs $\ell$ and $\ell_{\rm reg}$ properly balance accuracy and length of the rollout subpath.

\section{Experimental setups}
\label{app:exp}
Here, we describe the detailed setups of each experiment and show some extra results.
\subsection{Simple pointmass MPC}
Define $\mathcal{X}:=[0,100]\times[0,100]\times[0,5]\times[0,5]$.
The dynamics is approximated by the Euler approximation:
\begin{align*}
	[p_{t+1},v_{t+1}]=P_{\mathcal{X}}\left[p_t + v_t\Delta t, v_t+ a_t\Delta t\right],
\end{align*}
where $p$ is the 2D position and $v$ is the 2D velocity and $P_{\mathcal{X}}:\R^4\to\mathcal{X}$ is the orthogonal projection.

A feasible reference path for the obstacle avoidance goal reaching task is generated by RRT\textsuperscript{*} with local CEM planner ({\em wiring} of nodes is done through CEM planning with some margin).
The parameters used for RRT\textsuperscript{*} and CEM planner are shown in Table \ref{tab:rrt_cem}.
The generated reference path is shown in Figure \ref{fig:pointmass_compare} Left.

The reference path is then splined by using natural cubic spline (illustrated in Figure \ref{fig:spline}; using package (\url{https://github.com/patrick-kidger/torchcubicspline})); using only the path over positions (2D path), we run signature MPC.  The time duration of each action is also optimized at the same time.  For comparison we also run a MPC with zero terminal $S$-function case.  Note we are using RBF kernel for signature kernels, which makes this terminal $S$-function choice less unfavorable.

Figure \ref{fig:pointmass_compare} Middle shows the zero terminal $S$-function case; which tracks well but with slight deviation.
Right shows that of the best choice of terminal $S$-function.

The parameters for the signature MPC are given in Table \ref{tab:mpc_pm}.
Here, scaling of the states indicates that we multiply the path by this value and then compute the cost over the scaled path.
Note we used torchdiffeq package \citep{chen2018neuralode,chen2021eventfn} of PyTorch \citep{paszke2017automatic} to compute rollout,
and the evaluated points are of switching points of actions, and it replans when the current action repetition ends.

\begin{figure}[t]
	\begin{center}
		\includegraphics[clip,width=\textwidth]{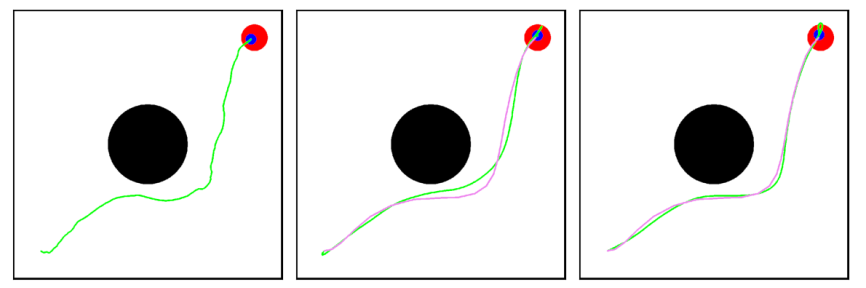}
	\end{center}
	\caption{Left: suboptimal feasible path generated by RRT\textsuperscript{*} with local CEM planner.  Middle: signature MPC with zero terminal $S$-function.  Purple one is the splined reference path and the green one is the executed one.  Right: signature MPC with the best choice of terminal $S$-function.} 
	\label{fig:pointmass_compare}
\end{figure}
\begin{figure}[t]
	\begin{center}
		\includegraphics[clip,width=\textwidth]{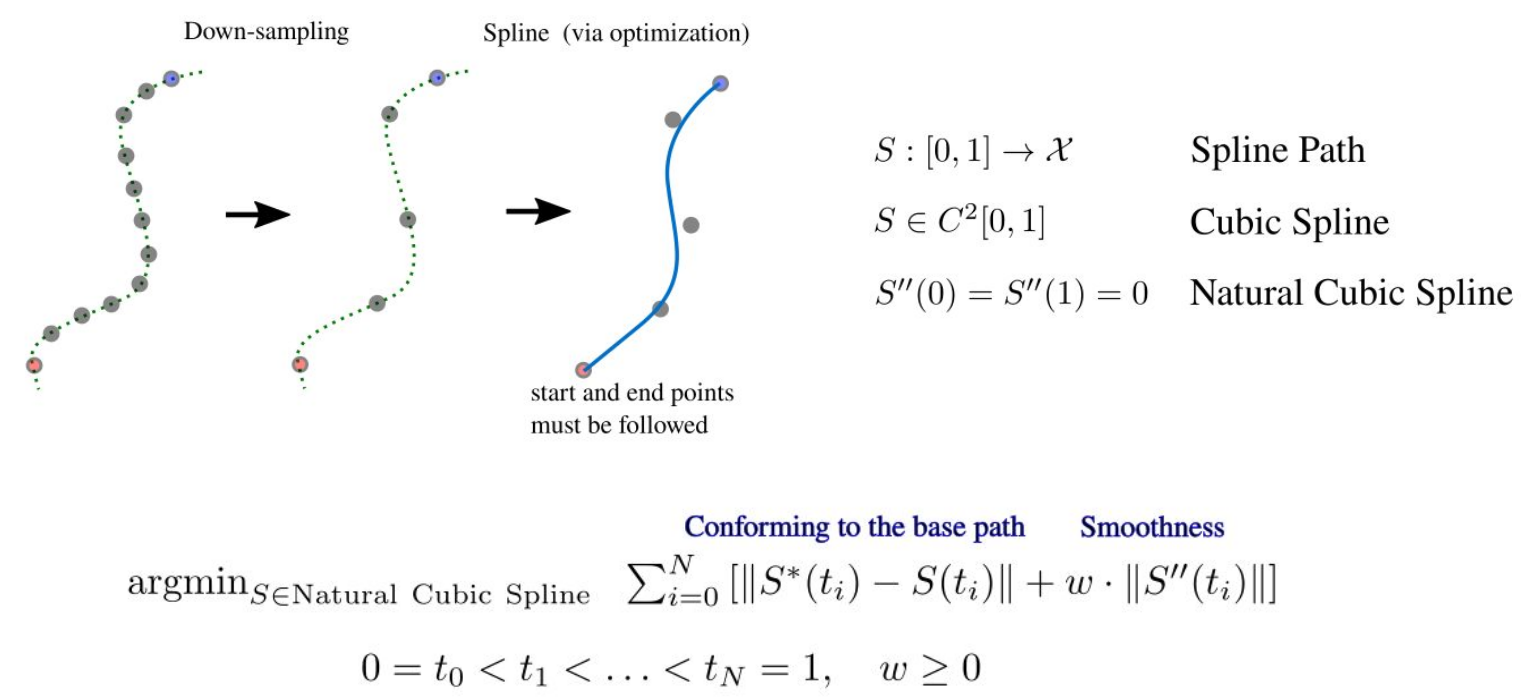}
	\end{center}
	\caption{Illustrations of natural cubic spline.  We down-sampled the reference path with skip number $10$.  Then, with weight $w=0.5$, step size $0.01$, using Adam optimizer, we optimized the path to obtain a spline with iteration number $150$.} 
	\label{fig:spline}
\end{figure}
\begin{table}[t]
	\begin{center}
		\caption{RRT\textsuperscript{*} and CEM parameters.}
		\begin{tabular}{c|c||c|c} 
			\toprule
			max distance to the sample & $10.0$ & goal state sample rate & $0.2$ \\
			safety margin to obstacle & $0.0$ & $\gamma$ to determine neighbors & $1.0$ \\
			CEM distance cost & quadratic & CEM obstacle penalty & $1000$ \\
			 CEM elite number & $3$ &
			CEM sample number & $8$ \\
			CEM iteration number & $3$ & numpy random seed & $1234$ \\
			\bottomrule
		\end{tabular}
		\label{tab:rrt_cem}
	\end{center}
\end{table}

\begin{table}[t]
	\begin{center}
		\caption{Parameters for the signature MPCs of point-mass (shared values for zero terminal $S$-function and the best choice ones).}
		\begin{tabular}{c|c||c|c} 
			\toprule
			static kernel & RBF/scale $0.5$ & dyadic order of PDE kernel & $2$\\
			scaling of the states& $0.05$ & update number of PyTorch & $20$\\
			step size for update &$0.2$ &number of actions $N$ & $3$\\
			weight $w_1$ & $0$ & regularizer weight $w_2$ & $8.0$\\
			maximum magnitude of control & $1.0$ & &\\
			\bottomrule
		\end{tabular}
		\label{tab:mpc_pm}
	\end{center}
\end{table}

\subsection{Integral control examples}
Our continuous time system of two-mass, spring, damper system is given by
\begin{align}
	&\dot{v_1}=
	-\frac{(k_1+k_2)p_1}{m_1} -\frac{(b_1+b_2)v_1}{m_1}+\frac{k_2p_2}{m_1}+\frac{b_2v_2}{m_1}+\frac{a_1}{m_1}+w_1,\nonumber\\
	& \dot{v_2}=
	\frac{k_2p_1}{m_2} +\frac{b_2v_1}{m_2}-\frac{k_2p_2}{m_2}-\frac{b_2v_2}{m_2}+\frac{a_2}{m_1}+w_2,\nonumber
\end{align}
where $u_1,u_2\in[-1,1]$ are control inputs, and $w$s are disturbances.
The actual parameters are listed in Table \ref{tab:ic_params}.
We augment the state with time that obviously follows $\dot{t}=1$.
We obtain an approximation of the time derivative of signatures by
\begin{align}
	\frac{\partial S_2(\sigma_t)}{\partial t}\approx \frac{S_2(\sigma_t * \sigma_{t,t+\Delta t})-S_2(\sigma_t)}{\Delta t}, \nonumber
\end{align}
where $\sigma_t$ is the path from time $0$ to $t$, and $\sigma_{t,t+\Delta t}$ is the linear path between the current state and the next state (after $\Delta t$), and $\Delta t=0.1$ is the discrete time interval.

By further augmenting the state with signature $S_2(\sigma_t)$, we compute the linearized dynamics around the point $p=v=\mathbf{0}_2$ and the unit signature.

P control (the state includes velocities, so it might be viewed as PD control) is obtained by computing the optimal gain for the system over $p$ and $v$ by using the cost $p^{\top}p+v^{\top}v + 0.01u^{\top} u$.
For PI control, we use the linearized system over $p$, $v$ and $s_{2,5},s_{2,15}$ (corresponding to integrated errors for $p_1$ and $p_2$), and the cost is $p^{\top}p+v^{\top}v + s_{2,5}^2+s_{2,15}^2 +0.01u^{\top} u$.  We added $-0.0001I$ to the linearized system to ensure that the python control package (\url{https://github.com/python-control/python-control}) returns a stabilizing solution.

The unknown constant disturbance is assumed zero for planning, but is $0.03$ (on both acceleration terms) for executions; the plots are given in Figure \ref{fig:integralcontrol_linear}.
It reflects the well-known behaviors of P control and PI control.

\begin{figure}[t]
	\begin{center}
		\hspace{-0.5em}
		\includegraphics[clip,width=0.88\textwidth]{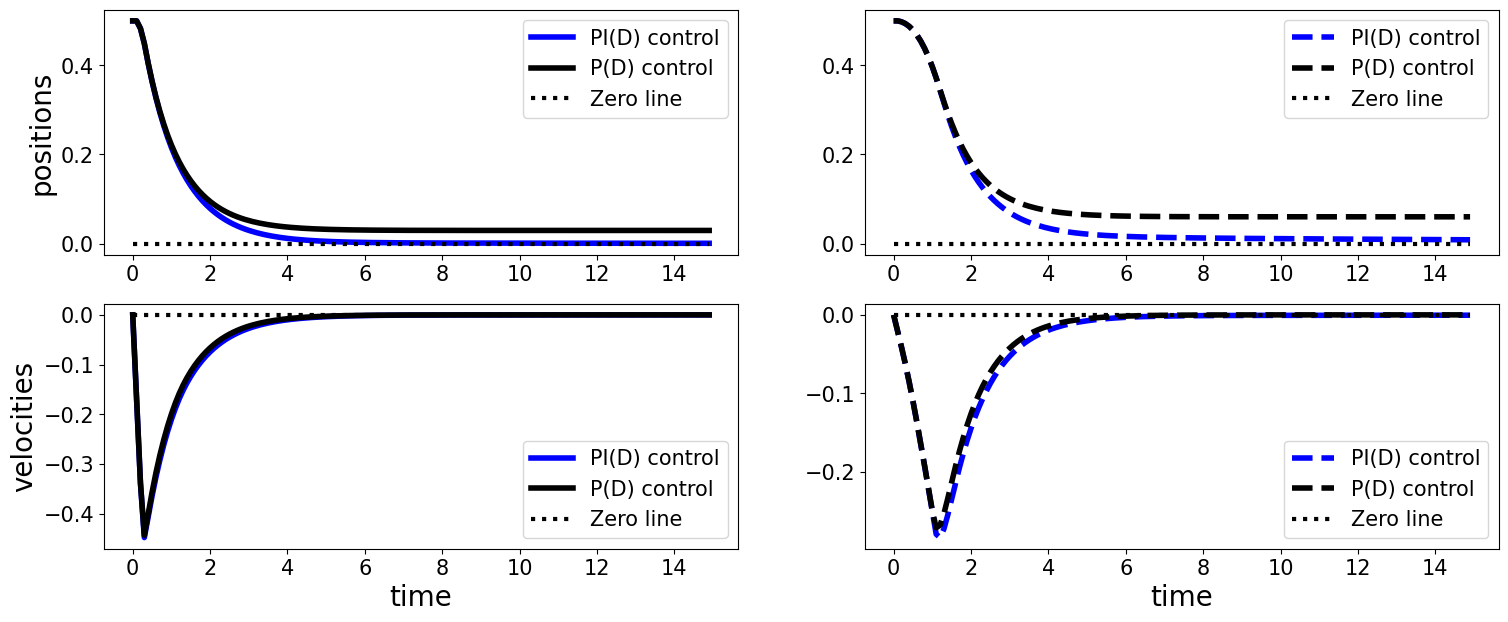}
	    \includegraphics[clip,width=0.88\textwidth]{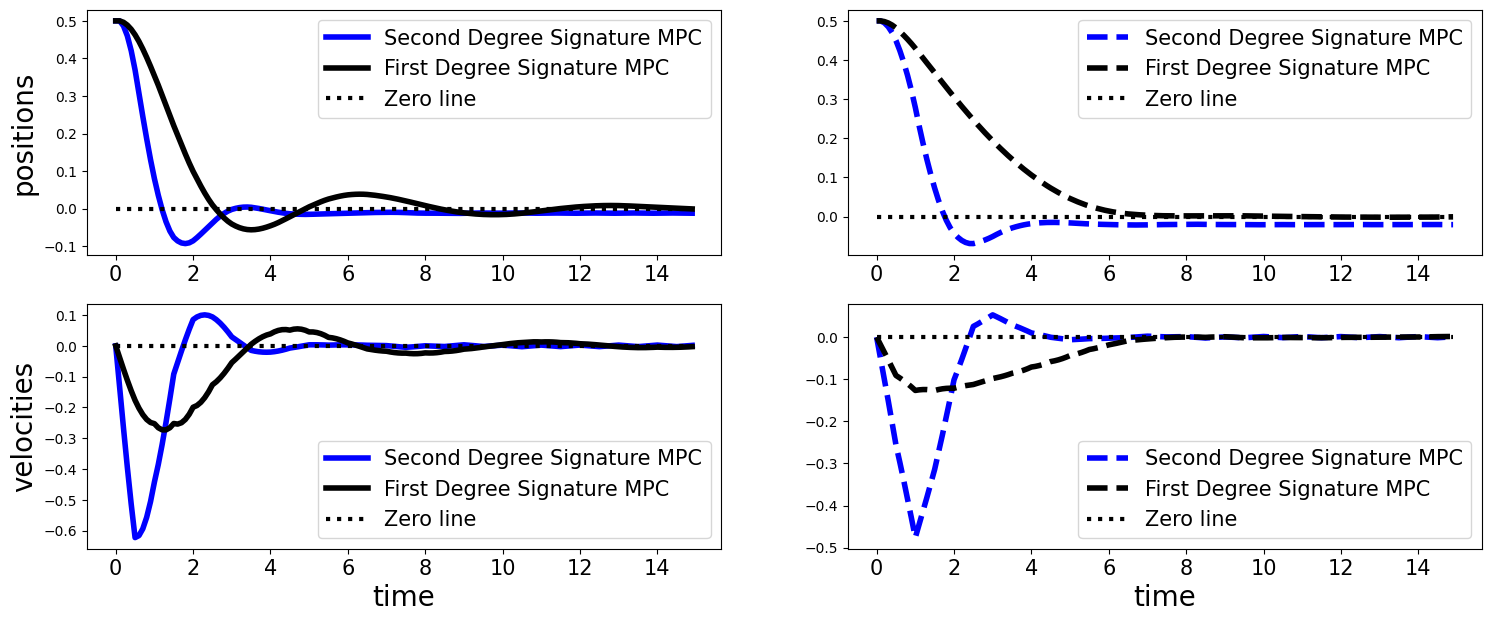}
		\includegraphics[clip,width=0.88\textwidth]{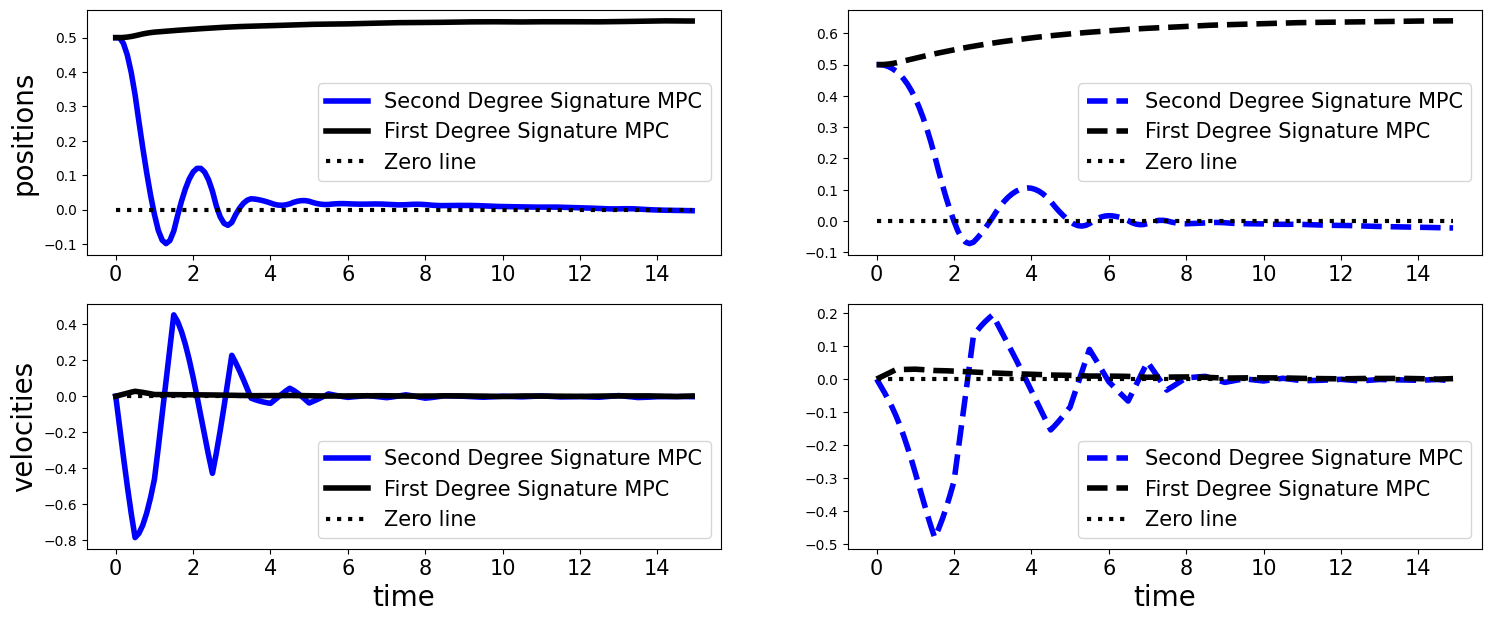}
	\end{center}
	\caption{Two-mass spring, damper system.  Solid lines are for the position and velocity of the first mass, and dashed lines are for the second mass.  Black lines are without the second depth signatures, and the blue lines are with the elements of the second depth signatures corresponding to the integrated errors.  Top: P control and PI control under no disturbance.  Middle: signature MPCs using signatures upto depth $1$ and $2$ under no disturbance case; both converge to the zero state well.  Down: signature MPCs under unknown disturbances; the one upto depth $2$ converges to zero state while the one upto depth $1$ does not because the depth $2$ signature terms correspond to the integrated errors to mimic PI controls.}
	\label{fig:integralcontrol_linear}
\end{figure}

We also list the parameters used for signature MPCs described in Section \ref{sec:experiments}; the execution horizon is $15.0$ sec, and the reference is the signature of the linear path over the zero state along time interval from $0$ to $25.0$ (we extended the reference from $15.0$ to $25.0$ to increase stability).  The control inputs are assumed to be fixed over planning horizon, and are actually executed over a planning interval.  The number of evaluation points when planning is given so that the rollout path is approximated by the piecewise linear interpolation of those points (e.g., for planning horizon of $1.0$ sec with $5$ eval points, a candidate of rollout path is evaluated evenly with $0.2$ sec interval).

The signature cost is the squared Euclidean distance between the reference path signature and the generated path signature upto depth $1$ or $2$; the terminal path is just a straight line along the time axis, staying at the current state.

The parameters used for signature MPCs are listed in Table \ref{tab:mpc_ic}; note the truncation depths for signatures are $1$ and $2$, respectively.
The plots are given in Figure \ref{fig:integralcontrol_linear}.

\begin{table}[h]
	\begin{center}
		\caption{Parameters for two-mass, spring, damper system.}
		\begin{tabular}{c|c||c|c||c|c} 
			\toprule
			$k_1$ & $2.0$ & $b_1$ & $0.05$  & $m_1$ & $1.0$ \\
			$k_2$ & $1.0$ & $b_2$ & $0.05$& $m_2$ & $2.0$\\
			\bottomrule
		\end{tabular}
		\label{tab:ic_params}
	\end{center}
\end{table}

\begin{table}[h]
	\begin{center}
		\caption{Parameters for the signature MPC for two-mass, spring, damper system.}
		\begin{tabular}{c|c||c|c} 
			\toprule
			kernel type & truncated linear & horizon for MPC & $0.5$ or $1.0$ sec\\
			number of eval points & $5$ & update number per step & $50$\\
			step size for update &$0.1$ &planning interval & $0.5$ sec \\
			number of actions & $1$& max horizon & $25.0$ sec \\
			maximum magnitude of control & $1.0$ &&\\
			\bottomrule
		\end{tabular}
		\label{tab:mpc_ic}
	\end{center}
\end{table}
\clearpage
\subsection{Path tracking with Ant}
We use DiffRL package \citep{xu2022accelerated} Ant model.
\paragraph{Reference generation:}
We generate reference path over 2D plane with the points
\begin{align*}
&[0.0,0.0],[2.6,-3.9],[5.85,-1.95],[6.5,0.0],[5.85,1.95],\\
&[2.6,3.9],[0.0,3.51],[-3.25,0.0],[-6.5,-3.9],[-6.5,4.55]
\end{align*}
and we obtain the splined path with $2000$ nodes (skip number $1$, weight for smoothness $0.5$, iteration number $150$ with Adam optimizer).
Also, to use for the terminal path in MPC planning, we obtain rougher one with $200$ nodes.

We run signature MPC (parameters are listed in Table \ref{tab:mpc_ant}) and the simulation steps to reach the endpoint of the reference is found to be $880$.
Then, we run the baseline MPC; for the baseline, we generate $880$ nodes for the spline (instead of $2000$).  Note these waypoints are evenly sampled from $t=0$ to $t=1$ of the obtained natural cubic spline.
We also test slower version of baseline MPC with $1500$ and $2500$ simulation steps (i.e., number of waypoints).

\paragraph{Cost and reward:}
The baseline MPC uses time-varying waypoints by augmenting the state with time index.
The time-varying immediate cost $c_{\rm baseline}$ to use is inspired by \citet{peng2018deepmimic}:
\begin{align*}
c_{\rm baseline}(x,t)= - \sum_{i=1}^d {\rm exp}\{-10 (x_{(i)}-x^*_{(i)}(t))^2\}
\end{align*}
for time step $t$, where $x^*(t)$ is the (scaled) waypoint at time $t$ and $x_{(i)}$ is the $i$th dimension of (the scaled state) $x\in\R^d$.
In addition to this cost, we add height reward for the baseline MPC:
\begin{align*}
r_{\rm height}(z) = -100 {\rm LeakyReLU_{0.001}}(0.37-z),
\end{align*}
where ${\rm LeakyReLU_{0.001}}$ is the Leaky ReLU function with negative slope $0.001$ and $z$ is the height of Ant.
For signature MPC, in addition to the signature cost described in the main text, we also add Bellman reward $10r_{\rm height}$ (the scale $10$ is multiplied to balance between signature cost and the height reward).

\paragraph{Experimental settings and evaluations:}
The parameters are listed in Table \ref{tab:mpc_ant}.
The parameters used for SAC RL are listed in Table \ref{tab:rl_ant}.

Since the model we use is differentiable, we use Adam optimizer to optimize rollout path by computing gradients through path.  Surprisingly, with only $3$ gradient steps per simulation step, it is working well; conceptually, this is similar to MPPI \citep{williams2017model} approach where the distribution is updated once per simulation step and the computed actions are shifted and kept for the next planning.
Also, we set the maximum points of the past path to obtain signatures to $50$ (we skip some points when the past path contains more than $50$ points).

To evaluate the accuracy, we generate $2000$ nodes from the spline of the reference, and we compute the Euclidean distance from each node of the reference to the closest simulated point of the generated trajectory.  The relative cumulative deviations are plotted in Figure \ref{fig:robotics_plots}.
For the same reaching time, signature MPC is significantly more accurate.  When the speed is slowed, baseline MPC becomes a bit more accurate.  Note our signature MPC can also tune the trade-off between accuracy and progress without knowing feasible waypoints.

Also, the performance curve of SAC RL is plotted in Figure \ref{fig:rl_franka} Left against the cumulative reward achieved by the signature MPC counterpart.
We see that RL shows poor performance (refer to the discussions on difficulty of RL for path tracking problems in \citep{peng2018deepmimic}).

\begin{table}[h]
	\begin{center}
		\caption{Parameters for the signature MPC for Ant.  Baseline MPC shares most of the common values except for scaling, which is $1.0$ for baseline MPC.}
		\begin{tabular}{c|c||c|c} 
			\toprule
			static kernel & RBF/scale $0.5$ & dyadic order of PDE kernel & $1$\\
			scaling of the states& $0.2$ & update number of PyTorch & $3$\\
			step size for update &$0.1$ &number of actions $N$ & $64$\\
			weight $w_1$ & $0$ & regularizer weight $w_2$ & $3.0$\\
			maximum magnitude of control & $1.0$ & &\\
			\bottomrule
		\end{tabular}
		\label{tab:mpc_ant}
	\end{center}
\end{table}
\begin{table}[h]
	\begin{center}
		\caption{Parameters for the SAC RL for Ant path following.}
		\begin{tabular}{c|c||c|c} 
			\toprule
			number of steps per episode & $128$ & initial alpha for entropy & $1$\\
			step size for alpha& $0.005$ & step size for actor& $0.0005$ \\
			step size for $Q$-function & $0.0005$  &update coefficient to target $Q$-function & $0.005$\\
			replay buffer size & $10^6$ & number of actors & $64$ \\
			NN units for all networks & $[256,128,64]$ &batch size & $4096$\\
			activation function for NN & tanh & episode length& $1000$\\
			\bottomrule
		\end{tabular}
		\label{tab:rl_ant}
	\end{center}
\end{table}
\begin{figure}[h]
	\begin{center}
		\includegraphics[clip,width=\textwidth]{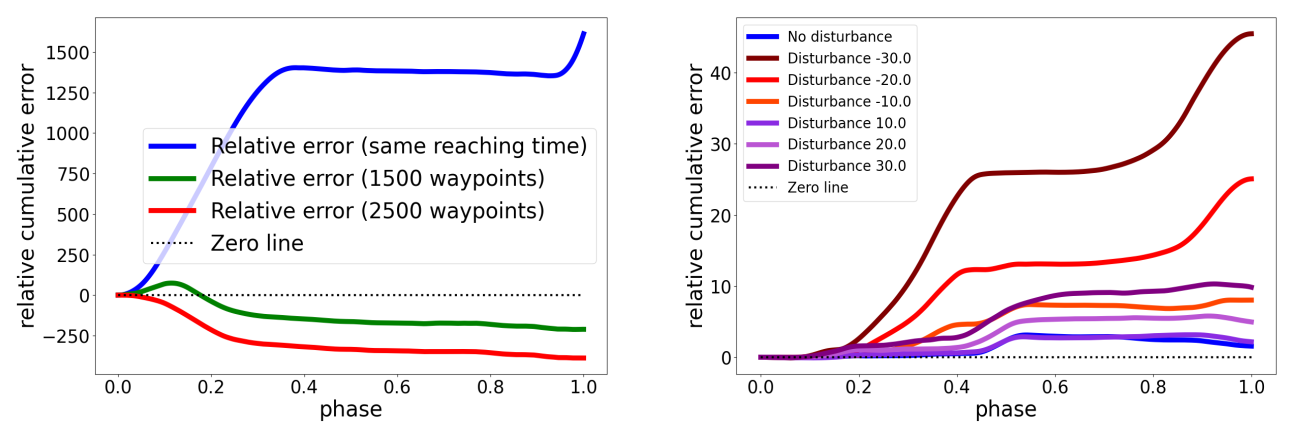}
	\end{center}
	\caption{For each node of the reference path, we compute the Euclidean distance from the closest simulated point of the generated trajectory.  Phase is from $0$ to $1$, corresponding to the start and end points of the reference.  Left shows the relative {\em cumulative} deviations along $2000$ nodes of the reference for Ant.  They are the errors of the baseline MPC compared to the errors of signature MPC; hence positive value shows the advantage of signature MPC.  Right shows those for robotic arm experiments with different magnitudes of disturbances added to every joint.} 
	\label{fig:robotics_plots}
\end{figure}
\subsection{Path tracking with Franka arm end-effector}
We use DiffRL package again and a new Franka arm model is created from URDF model \citep{pmlr-v120-sutanto20a}.
\paragraph{Model:}
The stiffness and damping for each joint are given by
\begin{align*}
&{\rm stiffness:}~400,400,400,400,400,400,400,10^6,10^6,\\
&{\rm damping:}~80,80,80,80,80,80,80,100,100,
\end{align*}
and the initial positions of each joint are
\begin{align*}
1.157,-1.066,-0.155,-2.239,-1.841,1.003,0.469,0.035,0.035.
\end{align*}
The action strength is $60.0~ {\rm N}\cdot{\rm m}$.
Simulation step is $1/60$ sec and the simulation substeps are $64$.
\paragraph{Reference generation:}
We similarly generate reference path for the end-effector position with the points
\begin{align*}
&[0.0,0.0,0.0],[0.1,-0.1,-0.1],[0.2,-0.15,-0.2],[0.18,0.0,-0.18],\\
&[0.12,0.1,-0.12],[0.08,-0.1,0.0],[0.05,-0.15,0.1],[0.0,-0.12,0.2],\\
&[-0.05,-0.05,0.25],[-0.1,0.1,0.15],[-0.05,0.05,0.08]
\end{align*}
and we obtain the splined path with $1000$ nodes (skip number $1$, weight for smoothness $0.001$, iteration number $150$ with Adam optimizer).
Also, to use for the terminal path in MPC planning, we obtain rougher one with $100$ nodes.
Similar to Ant experiments, we use $270$ evenly assigned waypoints for the baseline MPC.

For the case with unknown disturbance, we use the same waypoints. 

\paragraph{Experimental settings and evaluations:}
The parameters for MPCs are listed in Table \ref{tab:mpc_franka}.
The parameters used for SAC RL are listed in Table \ref{tab:rl_franka}.

We again set the maximum points of the past path to obtain signatures to $50$, and
the unknown disturbances are added to {\em every} joint.

We list the results of all the disturbance cases in Table \ref{tab:path_franka_all} as well.
Also, the performance curve of SAC RL is plotted in Figure \ref{fig:rl_franka} Right against the cumulative reward achieved by the signature MPC counterpart.
We see that the cumulative reward itself of SAC RL outperforms the signature MPC for no disturbance case for the robotic arm experiment; however it does not necessarily show better tracking accuracy along the path.

\begin{table}[t]
	\begin{center}
			\caption{All results on path tracking with a robotic manipulator end-effector.  Comparing {\ctrlname}, and baseline MPC and SAC RL with evenly assigned waypoints under unknown fixed disturbance.}
			\begin{tabular}{cccc} 
					\toprule
					& &  \multicolumn{2}{c}{Deviation (distance) from reference}   \\
					\cline{3-4}
		     		& Disturbance (${\rm N\cdot m}$) & Mean ($10^{-2}{\rm m}$) & Variance ($10^{-2}{\rm m}$) \\
		     		\hline
					 & $+30$ & $\mathbf{1.674}$ & $\mathbf{0.002}$  \\
					 & $+20$& $\mathbf{1.022}$ & $\mathbf{0.002}$  \\
					 & $+10$ &  $\mathbf{0.615}$ & $\mathbf{0.001}$   \\
					{\ctrlname} & $\pm0$ & $\mathbf{0.458}$ & $\mathbf{0.001}$  \\
					& $-10$ &  $\mathbf{0.605}$ & $\mathbf{0.001}$   \\
					 & $-20$ &  $\mathbf{0.900}$ & $\mathbf{0.001}$   \\
					 & $-30$ &  $\mathbf{1.255}$ & $\mathbf{0.002}$   \\
					 \hline
					  & $+30$ &  $2.648$ & $0.015$   \\
					 & $+20$&  $1.513$ & $0.010$    \\
					 & $+10$ &  $0.828$ & $0.005$    \\
					 baseline MPC & $\pm0$ & $0.612$ & $0.007$   \\
					 & $-10$ & $1.407$ & $0.013$    \\
					 & $-20$ &  $3.408$ & $0.078$   \\
					 & $-30$ &  $5.803$ & $0.209$    \\
					 \hline
					 & $+30$ &  $15.669$ & $0.405$   \\
					 & $+20$&  $10.912$ & $0.224$    \\
					 & $+10$ &  $6.252$ & $0.102$    \\
					 SAC RL & $\pm0$ & $3.853$ & $0.052$   \\
					 & $-10$ & $6.626$ & $0.243$    \\
					 & $-20$ &  $12.100$ & $0.557$   \\
					 & $-30$ &  $16.019$ & $0.743$    \\
					\bottomrule
				\end{tabular}
			\label{tab:path_franka_all}
		\end{center}
\end{table} 

To evaluate the accuracy, we generate $1000$ nodes from the spline of the reference.  The relative cumulative deviations are plotted in Figure \ref{fig:robotics_plots}.
For all of the disturbance magnitude cases, signature MPC is more accurate.  When the disturbance becomes larger, this difference becomes significant, showing robustness of our method.

Visually, the generated trajectories are shown in Figure \ref{fig:franka_all}.

\begin{table}[h]
	\begin{center}
		\caption{Parameters for the signature MPC for robotic arm.  Baseline MPC shares most of the common values except for scaling, which is $1.0$ for baseline MPC.}
		\begin{tabular}{c|c||c|c} 
			\toprule
			static kernel & RBF/scale $0.5$ & dyadic order of PDE kernel & $1$\\
			scaling of the states& $10.0$ & update number of PyTorch & $3$\\
			step size for update &$0.1$ &number of actions $N$ & $16$\\
			weight $w_1$ & $0.5$ & regularizer weight $w_2$ & $0.5$\\
			maximum magnitude of control & $1.0$ & &\\
			\bottomrule
		\end{tabular}
		\label{tab:mpc_franka}
	\end{center}
\end{table}
\begin{table}[h]
	\begin{center}
		\caption{Parameters for the SAC RL for robotic arm path following.}
		\begin{tabular}{c|c||c|c} 
			\toprule
			number of steps per episode & $128$ & initial alpha for entropy & $1$\\
			step size for alpha& $0.005$ & step size for actor& $0.0005$ \\
			step size for $Q$-function & $0.0005$  &update coefficient to target $Q$-function & $0.005$\\
			replay buffer size & $10^6$ & number of actors & $64$ \\
			NN units for all networks & $[256,128,64]$ &batch size & $2048$\\
			activation function for NN & tanh & episode length& $300$\\
			\bottomrule
		\end{tabular}
		\label{tab:rl_franka}
	\end{center}
\end{table}
\begin{figure}[h]
	\begin{center}
		\includegraphics[clip,width=\textwidth]{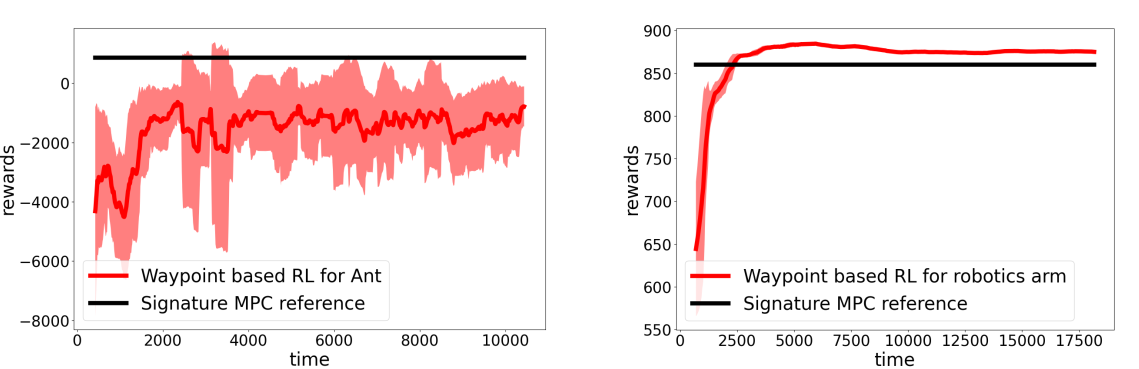}
	\end{center}
	\caption{Soft-actor-critic reinforcement learning baseline for an ant (left) and robotic arm (right) path following experiment.  The red curve shows the average cumulative rewards with standard deviation shade over five different seed runs, and the black line is for reference of the cumulative rewards achieved by signature control.  The reward is the same (negative cost) as that used in the baseline MPC, and the state is augmented with the time step (maximum time step is $1000$ and $300$ given that the goal-reaching time of {\ctrlname} is $880$ and $270$ steps, respectively). For the ant case, RL did not achieve comparable performance in terms of rewards.  For the robotic arm case under no disturbance, the RL outperforms {\ctrlname} slightly.} 
	\label{fig:rl_franka}
\end{figure}
\begin{figure}[h]
	\begin{center}
		\includegraphics[clip,width=\textwidth]{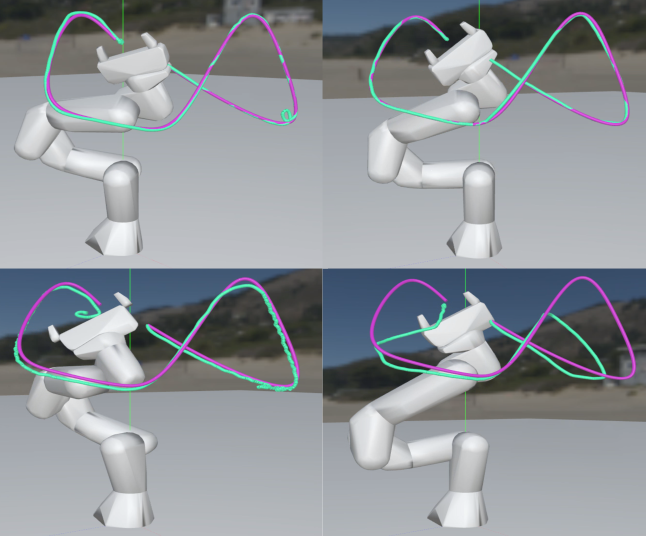}
	\end{center}
	\caption{Top: Left shows {\ctrlname} result for robotic arm manipulator end-effector path tracking and Right shows the baseline MPC.  They are tracking the path similarly well.  Down: under disturbance $-30.0$.  The baseline tracking accuracy is deteriorated largely while {\ctrlname} is robust against disturbance.  Signature MPC tries to track the first curve {\em stubbornly} by taking time there to retrace better.  This is because the signature MPC is insensitive to waypoint designs but rather depends on the ``distance'' between the target path and the rollout path in the signature space.}
	\label{fig:franka_all}
\end{figure}
\clearpage
\section{Potential applications to reinforcement learning}
\label{app:RL}
Although we have not yet developed promising RL algorithm;
the basic signature RL algorithm is summarized in Algorithm \ref{alg:signatureRL}.
In general, the algorithm stores states, actions, and past signatures up to depth $m$ observed in rollout trajectories,
and update $S$-function using Chen formulation followed by a policy update.

Especially, in this work, we use the modified version of soft actor-critic (SAC) \citep{haarnoja2018soft} to adapt to signature RL.

\paragraph{Task:}
The task is based on the similar path generation using signature cost.
In particular, we test Hopper for long jump.
We use DiffRL Hopper model and use rl\_games \citep{rl-games2021} for implementations.  The orignal (unit scale) reference path mimics the curve of jumps over $x$-axis/height position and velocity with the points:
\begin{align*}
&[0.0,0.0,0.0,0.0],[0.0,0.0,3.0,3.9],[0.3,0.39,3.0,2.63],[0.6,0.65,3.0,1.35],\\
&[0.9,0.79,3.0,0.08],[1.2,0.80,3.0,-1.20],[1.5,0.68,3.0,-2.47],[1.8,0.43,3.0,-3.74],\\
&[2.1,0.05,3.0,-5.02],[2.13,0.0,3.0,-5.16],[2.13,0.0,3.0,0.0].
\end{align*}

\paragraph{Cost:}
The cost for the RL task is given by
\begin{align*}
\ell(s_m)=-3000 \left[2{\rm exp}\left(-0.1\|s_m-\alpha\diamond s_m^*\|_1\right)+\alpha\right],
\end{align*}
where $s^*_m$ is the (truncated) signature of the unit scale reference path.
We use this cost for signature RL and its negative as the reward for SAC based on Signature MDP (where the state is augmented with signatures, and the reward, or negative cost, is added at the terminal time; hence uses traditional Bellman updates).
Note that the cost/reward to optimize for signature RL and for SAC are not exactly the same (because of the order of expectation); but for simplicity, we will evaluate the performance based on the reward of Signature MDP.  See Section \ref{app:separation} for details.

\paragraph{Experimental settings:}
We use four neural networks for the $S$-function, signature cost function (using $S$-function), $Q$-function (recall Chen formulation subsumes Bellman reward), and the policy function.
The network sizes for them are $[512,512,256]$, $[256,128,64]$, $[256,128,64]$, and $[256,128,64]$, respectively; SAC based on Signature MDP also shares the sizes of the common networks.
Parameters used for both RL algorithms are listed in Table \ref{tab:rl_hopper}.

\paragraph{Results:}
Signature RL has shown very fast increase of reward at the initial phase but then struggles to improve while SAC based on Signature MDP shows slow but steady growth.  We run $5$ seeds and the plots of averaged reward/step growth are shown in Figure \ref{fig:RL_results} with standard deviation shadows.
A possible reason for this is because of the lack of guarantees of convergence to the optima; and developing signature RL algorithms that have convergence guarantees and practical scaling is a very important future work.

The generated long jump motion of the best performance of SAC (based on Signature MDP) is shown in Figure \ref{fig:long_jump}.

\begin{algorithm}[!t]
	\textbf{Input}: initial policy $\pi$; initial signature $s_0=\mathbf{1}$; depth $m$; initial approximated $S$-function $\hat{\mathcal{S}}_m^\pi$; initial distribution $P_0$ over state space $\mathcal{X}$; time horizon $T$; surrogate cost $\ell$ \newline
	\textbf{Output}: policy $\pi^*$ and its $S$-function $\mathcal{S}_m^{\pi^*}$ \newline
		\While{not convergent}
		{
		Sample an initial state $x_0\sim P_0$. \newline
		Run current policy $\pi$ to collect trajectory data
		$\tau:=(s_0,x_0,a_0,s_{t_{a_0}}, x_{t_{a_0}},a_{t_{a_0}},\ldots, s_T,x_T)$. \newline
		Compute $m$th-depth signatures $S_m(\sigma(x_t,x_{t+t_{a}}))$s for each transformed path connecting a pair of adjacent states and action $a$ in $\tau$. \newline
		Update $S$-function so that
		\begin{align}
		\hat{\mathcal{S}}_m^{\pi}(a,P_{\mathcal{Y}}(x_t)) \approx\Exp\left[  S_m(\sigma(x_t,x_{t+t_{a}}))\otimes_m \hat{\mathcal{S}}_m^{\pi}(a',P_{\mathcal{Y}}(x_{t+t_{a}}))\vert \omega\in\Omega_a\right],\nonumber
		\end{align}
		where the expectation is approximated by an arithmetic mean and
		$a'$ is sampled from $\pi$ at $x_{t+t_{a}}$. \newline
		Update policy by, for example, minimizing
		\begin{align}
		\ell\left(s\otimes_m 
		\Exp\left[\hat{\mathcal{S}}_m^{\pi}(b(\omega),P_{\mathcal{Y}}(x))\right]\right)\nonumber
		\end{align}
		for all pairs of $(s,x)$s in $\tau$ or in the buffer.\newline
		}
		Output $\pi$ and $\hat{\mathcal{S}}_m^\pi$.
		
	\caption{Signature RL}
	\label{alg:signatureRL}
\end{algorithm}

\begin{table}[t]
	\begin{center}
		\caption{Parameters for the signature RL and SAC for Hopper.}
		\begin{tabular}{c|c||c|c} 
			\toprule
			number of steps per episode & $16$ & initial alpha for entropy & $1$\\
			step size for alpha& $0.005$ & step size for actor& $0.0002$ \\
			step size for $Q$-function & $0.0005$  &update coefficient to target $Q$-function & $0.005$\\
			step size for $S$-function & $0.0002$  &update coefficient to target $S$-function & $0.002$\\
			step size for signature cost function & $0.002$  &batch size & $2048$\\
		replay buffer size & $10^6$ & signature depth & $3$ \\
			maximum magnitude of control & $1.0$ & &\\
			\bottomrule
		\end{tabular}
		\label{tab:rl_hopper}
	\end{center}
\end{table}

\begin{figure}[t]
	\begin{center}
		\includegraphics[clip,width=\textwidth]{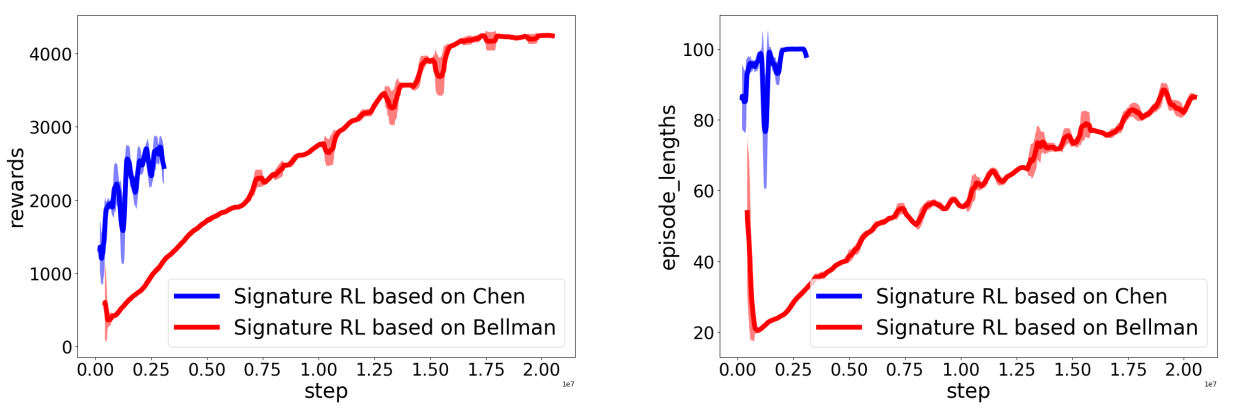}
	\end{center}
	\caption{Left shows the growth of trajectory reward along steps and Right shows that of length of the trajectory (the episode terminates when it encounters terminal conditions) along steps.  Signature RL has shown very fast increase of reward and length of trajectory at the initial phase but then struggles to improve.  The curve is averaged over $5$ seed runs and is smoothed.}
	\label{fig:RL_results}
\end{figure}
\begin{figure}[t]
	\begin{center}
		\includegraphics[clip,width=\textwidth]{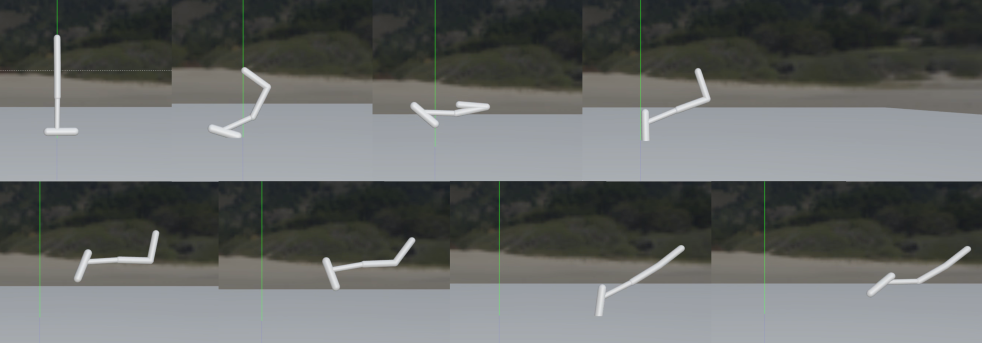}
	\end{center}
	\caption{Learned long jump motion of a hopper robot simulation of RL using the signature cost for scaled similar path generation.}
	\label{fig:long_jump}
\end{figure}
\clearpage
\section{Further discussions on our work}
We provide additional discussions on applicability, parameter selections, and complexity of our proposed methods here.
\subsection{Applications of Chen-based decision making}
In this work, we focused on the path tracking problems and demonstrated that the generalization made by Chen-based formulation has some benefits.  It is natural to seek other potential applications.  One example is described in Appendix \ref{app:similar_path}; where the property of the signatures that represents some geometric characteristics of paths is exploited to consider similar path generations.  Abstractly speaking, if the tasks are represented by some manipulations of the geometric characteristics of agent's trajectory, Chen-based formulation becomes particularly useful.  

Although we do not delve into the detail, one can even consider {\em higher order versions of the reward signals}; the current Bellman-based approach uses the cumulative rewards that we have seen to be represented by a second-depth term of the signature while the Chen-based approach enables us to use third or higher order terms as well.

Also, as we saw in integral control examples, we mention that the path tracking problem, when the states are properly designed and the signatures are truncated, reduces to robust control; implying that even the path tracking formulation is sufficiently flexible to extend its reach to wider domains of tasks than the pure tracking of a given target path.

In order to simplify the arguments, we mostly focused on the path tracking problems in this paper; however, it is of great interest to study further applications and theoretical developments of our novel generalized framework.

\subsection{Parameter selections}
\subsubsection{Signature depth}
As discussed around \eqref{eq:truncation}, we only need the first $m$th-depth signatures if the signature cost only depends on them; and we can effectively preserves the first $m$th-depth signatures through the tensor product operation without any approximation.  An example includes the reduction to the Bellman equation, where the first and second-depth signatures are only required.

On the other hand, when the task is to track a given path accurately, one would need more signature terms so that the distance between the target path and the generated path within $T^m(\mathcal{X})$ becomes a proxy of deviations from the target path.
In our MPC problems, we used infinite depth signatures by using the library developed in \citep{salvi2021signature} that leverages the properties of the signature kernel.  However, if one desires to keep a function approximator for $S$-function, the use of truncated signatures is required.  For such cases, increasing the depth of signatures will make the tracking of a path more accurate in the sense that the agent will trace the higher order geometries of the target path.  An extreme case of depth one, the distance information only contains the displacement between the start and end points of the target path, and the agent just tries to reach the end point.  Put differently, truncation of signatures should not result in a completely {\rm insane} tracking behaviors as long as a path tracking problem is considered.

\subsubsection{Precision and progress}
We also discuss the effect of regularizer for tuning the trade-off between accuracy and progress.
Figure \ref{fig:prog_acc} shows the comparison for the simple point-mass environment.
From the left to right, the progress weight $w_2$ is in descending order.
Studying a variety of signature costs for diversifying the resulting behaviors beyond this trade-off will be an interesting future direction of research.
\begin{figure}[t]
	\begin{center}
		\includegraphics[clip,width=0.9\textwidth]{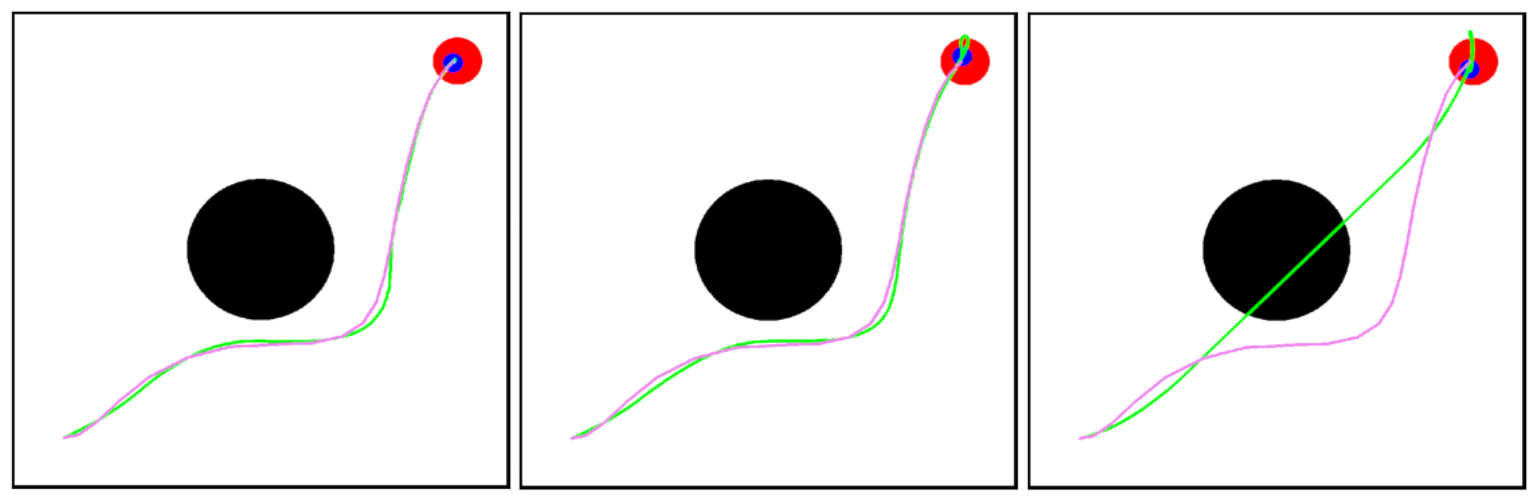}
	\end{center}
	\caption{Left: weighing less on the progress (regularizing weight is set $w_2=6.0$).  Middle: reposting the result for the main body of this work ($w_2=8.0$).  Right: weighing more on the progress (regularizing weight is set $w_2=10.0$).} 
	\label{fig:prog_acc}
\end{figure}
\subsection{Computational complexity}
Here, we give some quantities on the computational complexity comparison between signature MPC and baseline MPC.
To this end, consider the robotic manipulator tracking problem again.
Under no disturbance, we run $300$ simulation steps for the two MPC methods, and the ``simulation step per second'' was
$0.374$ and $0.399$ for signature MPC and baseline MPC, respectively.
As such, we indeed see that the computational complexities of both of the MPCs are almost the same.
The caveat here is that both are still far from the real-time execution; one simulation step corresponds to $1/60$ second in the robotic manipulator experiment, and both MPCs are around $160$ times slower than real time.
In our experiments, we proposed simple gradient-based MPCs under differentiable physics models that resemble MPPI algorithms in some ways, and used single GPU; we believe it should be possible to parallelize the computation (e.g., \citet{bhardwaj2022storm}) to be applicable to real-time executions.

\section{Computational setups and licences}
For all of the experiments, we used the computer with
\begin{itemize}
	\item Ubuntu 20.04.3 LTS
	\item Intel(R) Core(TM) i7-6850K CPU \@ 3.60GHz (max core 12)
	\item RAM 64 GB/ 1 TB SSD
	\item GTX 1080 Ti (max 4; we used the same GPU for all of the experiments)
	\item GPU RAM 11 GB
	\item CUDA 10.1
\end{itemize}
The licenses of sigkernel, torchcubicspline, signatory, DiffRL, rl\_games, and Franka URDF model, are [Apache License 2.0; Copyright [2021] [Cristopher Salvi]], [Apache License 2.0; Copyright [Patrick Kidger and others]], [Apache License 2.0; Copyright [Patrick Kidger and others]], [NVIDIA Source Code License], [MIT License; Copyright (c) 2019 Denys88], and [MIT License; Copyright (c) Facebook, Inc. and its affiliates], respectively.

The computation time for each (seed) run for any of the numerical experiments was around $30$ mins to $1-2$ hours (MPC problems).
For small analysis experiments, it took less than a few minutes for each one.

\end{document}